	\providecommand\BibTeX{{%
			\normalfont B\kern-0.5em{\scshape i\kern-0.25em b}\kern-0.8em\TeX}}}
\newcommand{\rom}[1]{\uppercase\expandafter{\romannumeral #1\relax}}
\newtheorem{lemma}{Lemma}
\newtheorem{remark}{Remark}
\newtheorem{theorem}{Theorem}
\newtheorem{definition}{Definition}
\newtheorem{assumption}{Assumption}
\newtheorem{model}{Model}
\newcommand{\remove}[1]{}
\begin{document}
\title{Online Partial Service Hosting at the Edge
}
\author{V S Ch Lakshmi Narayana}
\affiliation{\institution{Department of Electrical Engineering, IIT Bombay}}
\email{lakshmi.n.borusu@gmail.com}

\author{Mohit Agarwala}
\affiliation{\institution{Department of Electrical Engineering, IIT Bombay}}
\email{mohit.496.ece@gmail.com}

\author{R Sri Prakash}
\affiliation{\institution{Department of Electrical Engineering, IIT Bombay}}
\email{prakash.14191@gmail.com}

\author{Nikhil Karamchandani}
\affiliation{\institution{Department of Electrical Engineering, IIT Bombay}}
\email{nikhil.karam@gmail.com}

\author{Sharayu Moharir}
\affiliation{\institution{Department of Electrical Engineering, IIT Bombay}}
\email{sharayu.moharir@gmail.com}

\begin{abstract}
We consider the problem of service hosting where an application provider can dynamically rent edge computing resources and serve user requests from the edge to deliver a better quality of service. A key novelty of this work is that we allow the service to be hosted partially at the edge which enables a fraction of the user query to be served by the edge. We model the total cost for (partially) hosting a service at the edge as a combination of the latency in serving requests, the  bandwidth  consumption,  and  the  time-varying  cost  for renting edge resources. We  propose  an  online  policy  called $\alpha$-RetroRenting ($\alpha$-RR) which  dynamically  determines  the  fraction  of the  service  to  be  hosted  at  the  edge  in  any  time-slot, based on the history of the request arrivals and the rent cost  sequence. As our main result, we derive  an upper bound on $\alpha$-RR's competitive ratio with respect to the  offline  optimal  policy  that  knows  the  entire request  arrival  and  rent  cost  sequence  in  advance.  In addition, we provide performance guarantees for our policy in the setting where the request arrival process is stochastic.  We conduct  extensive  numerical  evaluations  to compare  the  performance  of $\alpha$-RR  with  various  benchmarks for  synthetic  and  trace-based  request  arrival  and  rent cost processes, and find several parameter regimes where $\alpha$-RR's  ability  to  store  the  service  partially  greatly improves cost-efficiency.
\end{abstract}

\maketitle
\section{Introduction}
 The emergence of services based on machine learning, computer vision, and augmented/virtual reality (AR/VR) for resource-constrained handheld devices is testing the limits of what traditional cloud-computing platforms can reliably support in terms of the required latency and bandwidth. This has led to the advent of edge computing wherein application providers can dynamically rent storage/computing resources much closer to the end-users via short-term rent contracts, and has also spurred a lot of academic research into the design and implementation of cost-efficient dynamic algorithms for offloading computation tasks \cite{zhao2018red, xia2020online, yang2015cost}. 

With respect to the prior work on service hosting, the key novelty of this work is the option to partially host the service at the edge. We say that a service is partially hosted when a fraction of the database and code of the service is hosted at the edge. The partially hosted service can be used to compute parts of the answers to user queries and deliver to the user with low latency. The rest of the answer is computed and delivered from the cloud at higher latency. Partial hosting requires lower edge resources than completely hosting the service at the edge and therefore can potentially lead to a reduction in the cost incurred for renting edge resources.  To understand the potential benefits of partial hosting for existing services, we use a GPS trajectory dataset \cite{zheng2008understanding, zheng2009mining, zheng2010geolife} collected as a part of the Geolife Project by Microsoft Research Asia to characterize the fraction of requests that can be served at the edge as a function of the fraction of the service hosted at the edge. The details of this dataset and our inferences from it are discussed in the simulations section.

We consider two models of serving requests when the service is partially hosted at the edge.  In the first model, the service can be partitioned in a way such that each partition can generate a partial response to a user's query/request and this partial response is of independent value to the user. For instance, for a navigation service like Google Maps, one part of the service could compute the possible routes between a source-destination pair, and the other part could compute the travel time for these routes based on current traffic patterns. Another example is online writing assistant services like Grammarly where one part of the service spellchecks and the other part checks grammar. One more example is a news website that uses the edge to deliver the text corresponding to news articles at low latency to the user and fetches the images/videos embedded in the article from the cloud at high latency. In such cases, under partial hosting, the user can start reading while the images/videos load from the cloud servers.  When the service is partially hosted, we assume that the fraction of the response to any user query served by the edge is a non-decreasing function of the fraction of the service stored. 

In the second model, if a service is partially hosted at the edge, some of the requests can be served by the edge, while the others have to be served by the cloud servers. The fraction of requests that can be served at the edge is an increasing function of the fraction of service hosted at the edge. For example, for a service like Google Translate, under partial hosting, the edge can be equipped to handle translation requests for some language pairs, while the rest have to be served by the cloud servers.

\subsection{Main Contributions}
In this work, we restrict the discussion to the setting which allows one partial hosting level $\alpha$, apart from the option of not hosting and complete hosting of the service. We model the total cost for (partially) hosting a service at the edge as a combination of the latency in serving requests, the bandwidth consumption, and the time-varying cost for renting edge resources which is assumed to scale linearly with the fraction of the service cached. The broad goal in this work is to design cost-efficient schemes which dynamically decide when and what fraction of the service to host at the edge. The main contributions of our work are as follows.
\begin{enumerate}
\item[--] We propose an online policy called $\alpha$-RetroRenting ($\alpha$-RR) which dynamically determines the fraction of the service to be hosted at the edge in any time-slot, based on the history of the request arrivals and the rent cost sequence. 
\item[--] We compare the performance of $\alpha$-RR with the offline optimal policy which knows the entire request arrival and rent cost sequence in advance. We characterize conditions under which $\alpha$-RR is optimal and show that in the worst-case, $\alpha$-RR is $6$-optimal.
\item[--]  In the setting where request arrivals are i.i.d. stochastic, we provide performance guarantees for $\alpha$-RR and compare its performance to that of the optimal online policy.
\item[--] We characterize conditions under which the offline optimal policy and $\alpha$-RR do not use partial hosting. This result can be used by service designers as a guideline on how to partition their service so that partial hosting can be used effectively to improve the performance of the system.
\item[--] We characterize a fundamental limit on the performance of any online deterministic policy, which helps benchmark the performance of the $\alpha$-RR scheme proposed in this work.
\item[--] Finally, we conduct extensive numerical evaluations to compare the performance of $\alpha$-RR with various policies including those that do not use partial hosting for synthetic and trace-based request arrival and rent cost processes. We find that there are several parameter regimes where $\alpha$-RR outperforms other baselines and the ability to store the service partially can greatly improve cost-efficiency.
\end{enumerate}
\subsection{Related Work}
With applications based on the Internet of Things, AR/VR, and large-scale machine learning becoming more mainstream, various edge computing platforms and architectures have been developed \cite{Puliafito:2019, mao2017survey, mach2017mobile} which can reliably support the stringent latency and bandwidth requirements. There has also been a large amount of academic research on such systems, amongst which that focusing on the design and analysis of efficient task offloading algorithms is the most relevant to our work and we discuss some of it below. 

The broad setting where our work is placed is when there are one or more edge servers which assist clients in carrying out computation tasks and the goal is to determine which tasks to offload to the edge server(s) so that the overall cost is minimized. One approach towards the design of such schemes is to formulate the problem as a large-scale one-shot optimization problem \cite{pasteris2019service, bi2019joint, chen2017collaborative, tran2019costa, yang2015cost, xu2020collaborate, 9326369}. While solving the problem exactly turns out to be NP-hard in several instances, efficient heuristics are presented. Recently, \cite{9326369} studied the impact of decentralization on the performance of such task offloading schemes. Finally, \cite{9362261} used a game-theoretic approach to study the problem by considering a two-stage interactive game between an edge server and the users, wherein the server announces prices for hosting various services in the first stage and each user independently makes its offloading decision in the second round. Our work differs from this line of work in that we design online algorithms which adapt their service placement decisions dynamically over time depending on the varying number of requests and rental costs.

Several works consider stochastic models for incorporating time-varying requests \cite{xu2018joint, chen2019budget, wang2015dynamic, he2020bandit,xiong2101learning}, using frameworks such as Markov Decision Processes (MDPs), Reinforcement Learning and Multi-Armed Bandits to address scenarios where system parameters such as service popularity or task request rates are unknown. On the other hand, we focus on the case of arbitrary request arrival processes and provide `worst-case' guarantees on the performance of our proposed schemes instead of `average' performance guarantees. This can be vital in scenarios where the arrival patterns change frequently over time, making it difficult to predict demand or model it well as a stochastic process.

 The key distinguishing feature of our work with respect to almost all the literature is that we allow partial service hosting at the edge which allows part of the query response to be provided at low latency by the edge server and the rest at higher latency by the back-end cloud server. As we will see later, this can potentially lead to significant cost benefits for the application provider. While \cite{9155271} also considered partial service hosting, its focus was on stochastic request processes and expected cost, whereas we consider the `adversarial' setting with worst-case cost guarantees.   In \cite{narayana2021renting}, the possible hosting options are limited to either fully hosting the service at the edge or not hosting it at the edge at all. In this work, we generalize this setting to include a third intermediate/partial hosting level. The details are discussed in the next section. This key difference necessitates new algorithm design and performance analysis.


In \cite{zhao2018red}, the authors consider a system with multiple services and an edge server with limited capacity, and proposed an online scheme named ReD/LeD for deciding which services to host on the edge at each time whose performance is characterized in terms of the competitive ratio with respect to an oracle which knows the entire request sequence in advance.
The competitive ratio analysis was recently extended to a system with multiple connected caches in \cite{tan2021asymptotically}. Unlike \cite{zhao2018red,tan2021asymptotically} and other works mentioned above which optimize the system from the perspective of the access providers which rent out edge computing resources, we study the problem from the perspective of an application provider which provides a service to the end users. Other works which have studied service hosting at the edge from this perspective include \cite{9155228, 9155271,xia2020online} where the goal is to minimize the cost incurred by the application provider while serving the user requests. Although we study the problem from a perspective of a specific application provider, the effect of the presence of other application providers who might be simultaneously interested in using the potentially limited edge resources offered by the access providers is captured through the time-varying nature of the cost of renting edge resources. 

While the problem of service hosting does resemble the long-studied content caching problem \cite{borst2010distributed, tan2012optimal, Wolman99, breslau1999web, sleator1985amortized, belady1966study}, there are important differences. In particular, unlike the content caching problem, whenever a user request cannot be served at the edge there is an option to either simply forward that request or download the entire service at higher cost. This fundamentally changes the problem and in fact, prior work \cite{zhao2018red, 9155228} has demonstrated the sub-optimal performance for service hosting of several popular schemes which work well in the traditional content caching setting. Finally, we would like to point out that partial storage of files has been studied in the context of traditional content caching \cite{borst2010distributed,hefeeda2008traffic}. Here, the fraction of query served varies linearly with the fraction of the file stored, whereas the dependence can be more varied for the service caching problem studied in this work. We indeed observe such behavior for a shortest path query system we design using data from a GPS trajectory dataset \cite{zheng2008understanding, zheng2009mining, zheng2010geolife}. Details can be found in the simulations section.
\remove{
\color{red}
\cite{zhao2018red} considered a system with multiple services, an edge  server which has limited storage and a backend server which has unlimited storage. This work proposed an online algorithm namely ReD/LeD for service caching on the edge server when the requests for the services arrive in an arbitrary manner. 
The performance guarantee for ReD/Led policy is given in terms of competitive ratio. 
The work \cite{xiong2101learning} discussed about the placement of a few  out of a number of services on a storage constrained edge server. This work formulated the service placement problem as Markov decision process and proposed two Whittle index based algorithms to minimize the average service delivery latency.
In \cite{he2020bandit} as well the placement of a few  out of a number of services is discussed. This work considers the popularity of each service is fixed but unknown and proposed an online algorithm which uses  multi-armed bandit approach.
The work \cite{xu2020collaborate} discussed service placement problem  when multiple network service providers compete for edge server resources to place their services.
This work analyzed the situations with or without resource sharing among network service providers. This work proposed an integer programming based randomized algorithm  for the case when resources are not shared. For the case when resources are shared, game theoretic framework is used to formulate a strategy to minimize the total cost of all network service providers. 
The works \cite{9155228, 9155271} are focussed on service placement on edge from the perspective an application service provider. In \cite{9155228}  the requests from the end users may arrive in an adversarial  manner or may follow some distribution.  An online policy is proposed which performs well in both adversarial and stochastic settings.
In \cite{9155271} the request arrival process is stochastic and partial service can be hosted on edge servers. This work proposed an algorithm which decides what fraction of service to be hosted next based on the current state information.
\color{red}The work \cite{xia2020online} is focussed on data hosting on edge server from the perspective application provider. It proposed Luyopunov optimization based online algorithm for data caching and provided peformance guarantees.\color{black}

Mobile applications have increasingly become more and more demanding in terms of their bandwidth and latency requirements. This, along with the advent of new time-critical applications such as the Internet of Things (IoT), AR/VR and autonomous driving has necessitated the migration of a part of the storage and computing capabilities from remote servers to the edge of the network. See \cite{Puliafito:2019, mao2017survey, mach2017mobile} for a survey of various edge computing architectures and proposed applications.
The emergence of such edge computing platforms \cite{Puliafito:2019, mao2017survey, mach2017mobile} has been accompanied with various academic works which model and analyse the performance of such systems. We briefly discuss some of the relevant works in the literature. 

One approach towards designing efficient edge computing systems is to formulate the design problem as a large one-shot static optimization problem which aims to minimize the cost of operating the edge computing platform  \cite{pasteris2019service, bi2019joint, chen2017collaborative, tran2019costa, yang2015cost}. \cite{pasteris2019service} considers such a problem in a heterogeneous setting where different edge nodes have different storage or computation capabilities and various services have different requirements. The goal is to find the optimal service placement scheme subject to the various constraints. The authors show that the problem is in general NP-hard and  propose constant factor approximation algorithms. A similar problem is considered in \cite{bi2019joint} which looks at the setting where an edge server is assisting a mobile unit in executing a collection of computation tasks. The question of which services to cache at the edge and which computation tasks to offload are formulated as a mixed integer non-linear optimization problem and the authors design a reduced-complexity alternative minimization based iterative algorithm for solving the problem. Similar problems have also been considered in \cite{chen2017collaborative, tran2019costa, yang2015cost}. Our work differs from this line of work in that we are interested in designing online algorithms which adapt their service placement decisions over time depending on the varying number of requests. 

One approach to modeling time-varying requests is to use a stochastic model as done in \cite{xu2018joint, chen2019budget, wang2015dynamic} which assumes that requests follow a Poisson process and then attempts to minimize  the computation latency in the system by optimizing the service hosting and task offloading decisions. \cite{chen2019budget} considers a setting where the underlying distribution for the request process is a-priori unknown and uses the framework of Contextual Combinatorial Multiarmed Bandits to learn the demand patterns over time and make appropriate  decisions. Finally, \cite{wang2015dynamic} considers a Markovian model for user mobility and uses a Markov Decision Process (MDP) framework to decide when and which services to migrate between different edge servers as the users move around. Our work differs from these works in that in addition to stochastic request models, we also focus on the case of arbitrary request arrival processes and provide `worst-case' guarantees on the performance of our proposed schemes instead of `average' performance guarantees. This can be vital in scenarios where the arrival patterns change frequently over time, making it difficult to predict demand or model it well as a stochastic process.

The work closest to ours is \cite{zhao2018red} which considers an edge server with  limited memory $K$ units which can be used at zero cost and an arbitrary request process for a catalogue of services. This work studies the design of service caching policies which minimize the cost incurred by the edge server for deploying the various services. The authors propose an online algorithm called ReD/LeD and prove that the competitive ratio of the proposed scheme is at most $10K$. Unlike \cite{zhao2018red}, we study the problem from the perspective of an application provider and design cost-efficient service \color{red} hosting \color{black} policies which dynamically decide when to cache or evict the service at the edge. In addition to  worst-case performance guarantees, we provide average performance guarantees of our proposed policy.

 Online algorithms have been studied for a wide variety of computational problems \cite{10.5555/290169}. In particular, \cite{10.1145/146585.146588} studies a general compute system for the processing of a sequence of tasks, each of which requires the system to be in a certain state for execution. There is a cost metric governing the penalty for moving the system from one state to the other, and the goal is to design online schemes which have a good competitive ratio with respect to the offline optimal. While the setting in \cite{10.1145/146585.146588} is similar to our work, the assumptions made there on the cost metrics are not satisfied for our problem and hence their results are not applicable.
 
 Other relevant works on exploiting edge resources for service include \cite{xu2020collaborate} in which the focus is on the problem of characterizing the benefits of sharing resources for service caching on the edge server with multiple network providers. In \cite{wei2020mobility}, the focus is on decision pro-active service caching to serve users with high-mobility. Further, \cite{jiang2020economic, zeng2020novel} focus on economic aspects of edge caching involving interactions between different stakeholders using game-theoretic tools. In \cite{zhang2020cooperative}, the authors characterize the benefits of cooperation between edge servers and propose a deep reinforcement based algorithm for effective cooperation. 

Finally, as mentioned before, the problem of service caching does resemble the content caching problem but with some key differences. Content caching has a rich history, see for example \cite{borst2010distributed, tan2012optimal, Wolman99, breslau1999web, sleator1985amortized, belady1966study}. A popular class of online caching policies is the Time-To-Live (TTL) policy \cite{carra2019ttl}, which downloads a content to the cache upon a cache miss and then retains it there for a certain fixed amount of time. In this work, we consider a variant of the TTL policy for service caching and demonstrate that it performs poorly in several cases. 
}

\section{System Setup}
\label{sec:setting}
\subsection{Network Model}\label{syst:nwmodel}
We study a system consisting of one or more cloud servers and an edge server in proximity to the customers/users of a service. 
The cloud servers always host the service and can serve all requests that are routed to them. In addition, the service can also be hosted at the edge server to serve user requests by paying a rent cost for using edge resources. We allow for partial hosting at the edge, i.e., only a part of the service can be hosted at the edge. Details of how requests are served when only a part of the service is hosted at the edge are discussed in subsequent sections.

 
\subsection{Request Arrivals}
We consider a time-slotted system with  two different arrival patterns viz., adversarial and stochastic request arrivals.  Let $x_t$ denote the number of request arrivals in a time-slot $t$. For our analytical results for adversarial arrivals, we make the following assumptions on the arrival sequence. 
\begin{assumption}\label{assum_one_request}
At most one request arrives in each time-slot, i.e., $x_t \in \{0,1\}$.
\end{assumption}

Further, for our analytical results for stochastic arrivals, we make the following additional assumption. 
\begin{assumption}\label{assum_stochastic}
Request arrivals are i.i.d. across time-slots. 
\end{assumption} 

\begin{remark}
Our results can be extended to the setting considered in \cite{9155228} with potentially multiple request arrivals in a time-slot and an upper bound on the number of requests that can be served at the edge in a time-slot. In simulations, we  consider different cases where the request arrival process is Adversarial, Poisson, and Markovian.
\end{remark}

\subsection{Renting Edge Resources}
Edge resources can be rented from a third party edge resource provider by paying a rent cost.
The rent cost for a time-slot is determined and advertised by the third party provider. Let $c_t$ denote the cost of hosting the entire service at the edge in time-slot $t$. If the service is partially hosted at the edge in a time-slot, the rent cost is scaled proportional to the fraction of service hosted at the edge. For our analytical results, we make the following assumptions on the rent cost sequence. 
\begin{assumption}\label{bounded_rent}
$0<c_{\text{min}} \leq c_t \leq c_{\text{max}}$.
\end{assumption}

\begin{remark}
The time-varying nature of the rent cost captures the effect of the presence of multiple customers of the third-party edge resource provider and the potential fluctuation in the overall demand for the edge resources. 
\end{remark}

%

\subsection{Partial Service Hosting}\label{syst:partialHosting}

As discussed above, we allow the service to be partially hosted at the edge. Let $r_t$ denote the fraction of service hosted at the edge in time-slot $t$. Recall that in this work, we restrict the discussion to the setting which allows three hosting levels. Formally, we make the following assumption.

\begin{assumption}\label{assum_hosting_levels}
In addition to the option of hosting the entire service at the edge as in \cite{zhao2018red, 9155228}, in this work, $\alpha \in (0,1)$ fraction of the service can also be hosted at the edge. It follows that $r_t = \{0, \alpha, 1\}$, where $r_t = 0$ denotes that the service is not hosted at the edge in time-slot $t$.
\end{assumption}

We consider the following two service models when the service is partially hosted at the edge.

\begin{model}[Partial Service at the Edge]
\label{model:partial}
When the service is partially hosted at the edge, an incoming request can be partially served by the edge servers, i.e., a part of the answer to the user's query can be computed at the edge. We focus on services where this partial answer is of independent interest to the user. Thus, the part of the answer computed at the edge can be communicated to the user with low latency due to the proximity of the user to the edge servers. The answer to the rest of the query is computed at the cloud and delivered to the user at high latency. 
\end{model}

\begin{model}[i.i.d. Randomized Service at the Edge] 
\label{model:random}
When the service is partially hosted at the edge, each arriving request can be served at the edge with a probability which is a non-decreasing function of the fraction of service hosted; else the request is served by the cloud servers. 
\end{model}


\subsection{Sequence of Events in a Time-slot}

In each time-slot, we first have potential request arrivals. These requests are then served by the edge/cloud servers. The third-party edge resource provider then announces the rent cost the for next time-slot. Following this, our system determines the fraction of service to be hosted at the edge in the next time-slot.


\subsection{Cost Model}\label{syst:costmodel_adversarial}

We build on the models used in \cite{9155228, zhao2018red} when the requests are adversarial and consider three categories of costs. For a given hosting policy $\mathcal{P}$, the total cost incurred in time-slot $t$, denoted by $C_t^{\mathcal{P}}$, is the sum of the three costs.

\begin{enumerate}

	\item[--] \emph{Fetch cost $(C_{F,t}^\mathcal{P})$}: This is the cost incurred to fetch the service (code and databases/libraries) from the cloud server(s) to host on the edge server. On each fetch of $\Delta_t = (r_{t+1} - r_t)^+$ fraction the service from the cloud server(s) to host on the edge-server, a fetch cost of $\Delta_t M$ units is incurred, where $\Delta_t  \in \{1,\alpha, 1-\alpha\}$.
	\item[--] \emph{Rent cost $(C_{R,t}^\mathcal{P})$}: This is the cost incurred to rent edge resources to host the service. A rent cost of $c_t r_t$  units is incurred to host $r_t$ fraction of the service on the edge server in time-slot $t$.
	
		\item[--] \emph{Service cost $(C_{S,t}^\mathcal{P})$}: This is the cost incurred per request for using the cloud servers.
	
	Under Model \ref{model:partial}, since only that part of the request which cannot be served at the edge is forwarded to the cloud servers, this cost is a decreasing function of the fraction of service hosted at the edge in that time-slot. Let $g(r_t)$ denote the cost incurred per request in time-slot $t$. We assume that
	\begin{align*}
	    g(r_t) = \begin{cases}
	    1 & r_t = 0 \\
	    g(\alpha) \in (0,1) & r_t = \alpha \\
	    0 & r_t = 1. \\
	    \end{cases}
	\end{align*}
	
	Under Model \ref{model:random}, when the entire service is hosted at the edge, each incoming request can be served
	at no cost.
	When the service is not hosted at the edge, each incoming request is forwarded to the cloud server which serves it at a cost of one unit per request.
We now discuss the case when the service is partially hosted at the edge under Model \ref{model:random}. Let $X_t$ be the number of requests received in a time-slot. Recall that $r_t$ denotes the fraction of service hosted on the edge server during time-slot $t$. When $r_t=\alpha$, each incoming request can be served at the edge with zero cost with probability $(1-g(\alpha))$ and has to be forwarded to the cloud otherwise. It follows that each arriving request incurs a cost of one unit with probability $g(\alpha)$ and zero units with probability $1-g(\alpha)$. Formally,  for  $r_t=\alpha$ and $X_t> 0$, the service cost for each request is given by,

\begin{align*}
S_t^i & =
\begin{cases*}
1 & \text{ with probability $g(\alpha)$ } \nonumber \\
0 & \text{ with probability $1-g(\alpha)$. } \nonumber \\
\end{cases*} \nonumber
\end{align*}
for $i \in \{1, \ldots, X_t\}.$
Let $S_t$ denote the total service cost in a time-slot $t$ when $r_t=\alpha$ and $X_t> 0$. Thus $$S_t=\displaystyle\sum_{i=1}^{X_t} S_t^i.$$
\end{enumerate}

It follows that
\begin{align}
\label{equation:total_cost}
C_t^\mathcal{P} &=C_{F,t}^\mathcal{P}+C_{R,t}^\mathcal{P}+C_{S,t}^\mathcal{P}, \\
\text{where, } C_{F,t}^\mathcal{P}&=
\begin{cases*}
M & \text{ if $r_{t} = 0$ and $r_{t+1} = 1$ } \nonumber\\
\alpha M & \text{ if $r_{t} = 0$ and $r_{t+1} = \alpha$ } \nonumber\\
(1-\alpha)M & \text{ if $r_{t} = \alpha$ and $r_{t+1} = 1$ } \nonumber\\
0 &\text{ otherwise.} 
\end{cases*}\\
C_{R,t}^\mathcal{P} &=
\begin{cases*}
c_t & \text{ if $r_{t} = 1$ } \nonumber \\
c_t\alpha & \text{ if $r_{t} = \alpha$ } \nonumber \\
0 & \text{ otherwise.}
\end{cases*}\\
\text{For Model \ref{model:partial}: }
C_{S,t}^\mathcal{P} & =
\begin{cases*}
0 & \text{ if $r_t = 1$ } \nonumber \\
g(\alpha)x_t & \text{ if $r_t = \alpha$ } \nonumber \\
x_t & \text{ otherwise.} 
\end{cases*} \nonumber\\
\text{For Model \ref{model:random}: }
C_{S,t}^\mathcal{P} & =
\begin{cases*}
0 & \text{ if $r_t = 1$ } \nonumber \\
S_t & \text{  if $X_t>0$ and $r_t = \alpha$} \nonumber \\
X_t & \text{ otherwise.} 
\end{cases*} \nonumber
\end{align}

Typically, the amount of data (code and databases/libraries) needed to host the service at the edge is much larger than the amount of data delivered to a user in response to a request.  Motivated by this, we make the following assumption, also made in \cite{9155228, zhao2018red}. 
	\begin{assumption}
	\label{ass:largeM}
	The cost of fetching  the service is more than the  cost incurred to use the cloud servers to answer a request, i.e., $M> 1$. 
	\end{assumption}
%
\color{black}

\subsection{Algorithmic Challenge}
The algorithmic challenge is to design a policy determines the fraction of service hosted at the edge in each time-slot. 
Hosting policies can be divided into the following two classes. 
\begin{definition}(Types of Hosting Policies)
	\label{defn:typesOfPolicies}
	\begin{enumerate}
		\item[--] \emph{Offline Policies}: A policy in this class knows the entire request arrival sequence and rent cost sequence a priori.
		\item[--] \emph{Online Policies}: A policy in this class does not have knowledge of future arrivals and rent cost sequence. 
	\end{enumerate}
\end{definition}
We design an online policy which makes hosting decisions based on the request arrivals, rent costs thus far, intermediate hosting level ($\alpha$) and the various costs, i.e., the rent cost $(c_t)$ in a time-slot $t$, the fetch cost $(M)$, and the forwarding cost $g(\alpha)$. 
\subsection{Metric and Goal}
\begin{enumerate}
    \item [--] In the adversarial setting, the optimal offline policy ($\alpha$-OPT) serves as a benchmark to evaluate the performance any online policy $\mathcal{P}$. The goal is to design an online policy $\mathcal{P}$ which minimizes the competitive ratio $\rho^{\mathcal{P}}$ defined as 
\begin{equation}
\label{eq:competitiveRatio}
 \rho^{\mathcal{P}}=\sup_{a\in \mathcal{A}, d \in \mathcal{R}} \frac{C^{\mathcal{P}}(a,d)}{C^{\text{$\alpha$-OPT}}(a,d)},
\end{equation}
where $\mathcal{A}$, $\mathcal{R}$ are the set of all possible finite request arrival sequences and the set of all possible rent cost sequences respectively. $C^{\mathcal{P}}(a, d)$, $C^{\text{$\alpha$-OPT}}(a, d)$ are the overall costs of service for the request arrival sequence $a$, the rent cost sequence $d$ under online policy $\mathcal{P}$ and the optimal offline policy respectively.

\item[--] In the stochastic setting, we compare the performance of a policy $\mathcal{P}$ with the performance of the optimal online policy ($\alpha$-OPT-ON).

The goal is to minimize $\sigma^{\mathcal{P}}_T$, defined as the ratio of the expected cost incurred by policy $\mathcal{P}$ in $T$ time-slots to that of the optimal online policy in the same time interval. Formally,
\begin{equation}
\label{eq:efficiencyRatio}
\sigma^\mathcal{P}(T) =\frac{\mathbb{E}\bigg[\displaystyle\sum_{t=1}^T C_t^\mathcal{P}\bigg]}{\mathbb{E}\bigg[\displaystyle\sum_{t=1}^T C_t^{\alpha\text{-OPT-ON}}\bigg]},
\end{equation}
where $C_t^\mathcal{P}$ is as defined in \eqref{equation:total_cost}.
\end{enumerate}

\color{black}

\section{Our Hosting Policy}
In this section, we present our online edge hosting policy called $\alpha$-RetroRenting. The high-level idea behind the policy is to evaluate if the current hosting status under $\alpha$-RetroRenting is optimal in hindsight given the knowledge of the request arrival/rent cost process up to the current time. If not,  $\alpha$-RetroRenting changes the hosting status. A formal definition is given in Algorithm \ref{algo:alphaRR} and a detailed description of the policy is as follows. 

\begin{algorithm}
	\caption{$\alpha$-RetroRenting ($\alpha$-RR)}\label{algo:alphaRR}
	\SetAlgoLined
	\SetKwFunction{FtotalCost}{totalCost}
    \SetKwProg{Fn}{Function}{:}{end}
	Input: Fetch cost $M$, partial hosting level $\alpha$, latency cost under partial hosting $g(\alpha)$,  rent cost sequence $\{c_l\}_{l\geq0}^t$, request arrival sequence $\{x_l\}_{l\geq0}^t$\\
	Output:  service hosting strategy $r_{t+1}$, $t > 0$\\
	Initialize:  $r_1= t_{\text{recent}} = 0$\\
	\For {\textbf{each} time-slot $t$}{
	$I_t=(M$, $g(\alpha)$, $t$,  $t_{\text{recent}}$, $\{c_l\}_{l\geq t_{\text{recent}}}^t$, $\{x_l\}_{l\geq t_{\text{recent}}}^t )$\\
	$R_0^{(\tau_0)} = [\underbrace{r_t,r_t,\ldots,r_t}_{\tau_0-t_{\text{recent}}},\underbrace{0,0,\ldots, 0}_{t-\tau_0}]$ \\
    $R_\alpha^{(\tau_{\alpha})}= [\underbrace{r_t,r_t,\ldots,r_t}_{\tau_{\alpha}-t_{\text{recent}}},\underbrace{\alpha,\alpha,\ldots, \alpha}_{t-\tau_{\alpha}}]$ \\
    $R_1^{(\tau_1)} = [\underbrace{r_t,r_t,\ldots,r_t}_{\tau_1-t_{\text{recent}}},\underbrace{1,1,\ldots, 1}_{t-\tau_1}]$ \\
	$\text{minCost}(0)=\displaystyle\min_{\tau_0\in (t_{\text{recent}},t)} \FtotalCost(R_0^{(\tau_0)},I_t)$\\
	$\text{minCost}(\alpha)=\displaystyle\min_{\tau_{\alpha}\in (t_{\text{recent}},t)} \FtotalCost(R_\alpha^{(\tau_{\alpha})},I_t)$\\
	$\text{minCost}(1)=\displaystyle\min_{\tau_1\in (t_{\text{recent}},t)} \FtotalCost(R_1^{(\tau_1)},I_t)$\\
	$r_{t+1} = \displaystyle \arg\min_{i \in \{0, \alpha, 1\}} \text{minCost}(i)$ \\
		\If{$r_{t+1} \neq r_t$}
		{$t_{\text{recent}} =  t$
		}
	}
	\Fn{\FtotalCost{$R, I_t$}}{
	$g(0)=1$, $g(1)=0$\;
    cost = $R(1)\times c_1+x_1 \times g(R(1))$\;
    \For{$j\gets 2$ \KwTo $t-t_{\text{recent}}$ }{
    cost = cost $+ R(j)\times c_j+x_j \times g(R(j))$ \\
    \hspace{.67in}$+ M \times \left|R(j)-R(j-1)\right|$\;
    }
        \KwRet cost;
  }
\end{algorithm}

In each time-slot, $\alpha$-RetroRenting focuses on the time-frame starting from the most recent time-slot in which the hosting status was changed under $\alpha$-RetroRenting ($t_{\text{recent}}$) to the current time-slot ($t$). It follows that the hosting status under $\alpha$-RetroRenting is constant in this time-frame. 

$\alpha$-RetroRenting then considers two alternative hosting strategies in which the hosting status is changed to one of the other two hosting levels at some point in the time-frame and remains unchanged thereafter. Lines 6-8 in Algorithm \ref{algo:alphaRR} represent these two alternative hosting strategies in addition to the hosting status under $\alpha$-RetroRenting. 

In Lines 9-11, $\alpha$-RetroRenting computes the lowest possible total cost (fetch cost + rent cost + latency cost) incurred in the time-frame under each one of the three candidate hosting strategies by optimizing the time at which the hosting status is changed in the time-frame. The function totalCost (Lines 17-25) is used to compute the total cost incurred for specific hosting, request arrival, and rent cost sequences. 

Following this, in Line 12, $\alpha$-RetroRenting sets the value of the hosting level for the next time-slot ($r_{t+1}$) to the hosting level at time $t$ in the hosting strategy which has the least cost among the three candidates.

\begin{remark}
While the computation/storage complexity of $\alpha$-RetroRenting as presented in Algorithm \ref{algo:alphaRR} can scale with time, using techniques proposed in \cite{lu2012online} and used in \cite{narayana2021renting}, both the computation/storage complexity of $\alpha$-RetroRenting can be reduced to $O(1)$. We omit the details due to lack of space. 
\end{remark}

\section{Analytical Results: Adversarial Setting}
\label{sec:analyticalResults_adversarial}
In this section, we state and discuss our analytical results. The proofs of there results are discussed in Section \ref{sec:proofs_adversarial}.

Our first result characterizes sufficient conditions under which the optimal offline policy ($\alpha$-OPT) and our policy $\alpha$-RR do not partially host the service at the edge. 

\begin{theorem}\label{thm:opt_nointermediate}
	 Consider a system satisfying Assumptions \ref{assum_one_request}-\ref{ass:largeM} and   Model \ref{model:partial}. Let $r^*_{\tilde{t}}$ and $r^{\alpha\text{-RR}}_{\tilde{t}}$ be the fraction of service hosted on the edge in time-slot $t$ under $\alpha$-OPT and $\alpha\text{-RR}$ respectively. 
	 \begin{enumerate}
	     \item[(a)] If $\alpha + g(\alpha) \geq 1$ and $r^*_{\tilde{t}} \neq \alpha$, $r^*_t \neq \alpha$, for all $t>\tilde{t}$
	     \item[(b)] If $\alpha + g(\alpha) \geq 1$, $r^{\alpha\text{-RR}}_{t} \neq \alpha$, for all $t>0$.
	 \end{enumerate}
\end{theorem}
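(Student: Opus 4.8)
The plan is to derive both parts from one \emph{convex-combination identity}. Fix a block of consecutive slots over which a policy holds a constant level, and suppose the levels immediately before and after the block (whenever they exist) lie in $\{0,1\}$. Write $n$ for the number of request-bearing slots in the block, $C=\sum_l c_l$ for its aggregate full-rent, and let $\mathrm{Cost}(v)$, $v\in\{0,\alpha,1\}$, be the block's total cost (its at most two boundary fetches $+$ rent $+$ service) when the held level is $v$. Splitting the fetch, rent, and service contributions gives the \emph{exact} identity
\[
\mathrm{Cost}(\alpha)=\alpha\,\mathrm{Cost}(1)+(1-\alpha)\,\mathrm{Cost}(0)+\big[g(\alpha)+\alpha-1\big]\,n .
\]
The essential point is that the $M$-terms cancel completely: for each flanking level $u\in\{0,1\}$ one checks $|\alpha-u|=\alpha|1-u|+(1-\alpha)|0-u|$ and $(\alpha-u)^+=\alpha(1-u)^++(1-\alpha)(0-u)^+$, with the analogous identities at the right boundary, so the cancellation holds both for the symmetric fetch $M|R(j)-R(j-1)|$ used inside $\alpha$-RR and for the one-sided fetch $(\cdot)^+$ of the true cost model. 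Since $\alpha+g(\alpha)\ge1$ makes the residual term nonnegative, $\mathrm{Cost}(\alpha)\ge\min\{\mathrm{Cost}(0),\mathrm{Cost}(1)\}$: holding $\alpha$ over such a block is never strictly better than the cheaper of holding $0$ or holding $1$.

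For part (b) I would show by induction on $t$ that $r^{\alpha\text{-RR}}_t\in\{0,1\}$, the base case being $r_1=0$. Under the hypothesis $r_t\ne\alpha$, the level is constant at $r_t\in\{0,1\}$ throughout the current window $[t_{\mathrm{recent}},t]$, so the three candidates $R_0^{(\tau)},R_\alpha^{(\tau)},R_1^{(\tau)}$ agree on the prefix at level $r_t$ and differ only on the suffix $[\tau,t]$ and its single entry-fetch. Applying the identity to this suffix (left-flanked by $r_t\in\{0,1\}$, with no right flank inside the window) gives, for every common $\tau$, $\mathrm{Cost}(R_\alpha^{(\tau)})\ge\alpha\,\mathrm{Cost}(R_1^{(\tau)})+(1-\alpha)\,\mathrm{Cost}(R_0^{(\tau)})\ge\min\{\mathrm{Cost}(R_0^{(\tau)}),\mathrm{Cost}(R_1^{(\tau)})\}$; minimizing over the shared index set yields $\mathrm{minCost}(\alpha)\ge\min\{\mathrm{minCost}(0),\mathrm{minCost}(1)\}$, so $\alpha$ is never the strict minimizer. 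To exclude $\alpha$ being selected on a tie I would rule out the only dangerous case $\mathrm{minCost}(\alpha)=\mathrm{minCost}(1)<\mathrm{minCost}(0)$: at the $\tau$ attaining $\mathrm{minCost}(\alpha)$, the identity together with $\mathrm{Cost}(R_1^{(\tau)})\ge\mathrm{minCost}(\alpha)$ and $\mathrm{Cost}(R_0^{(\tau)})\ge\mathrm{minCost}(0)>\mathrm{minCost}(\alpha)$ forces the contradiction $\mathrm{Cost}(R_\alpha^{(\tau)})>\mathrm{Cost}(R_\alpha^{(\tau)})$. Hence whenever $\alpha$ attains the minimum so does $0$, and the $\arg\min$, breaking ties toward the smaller level, returns $r_{t+1}\ne\alpha$, closing the induction.

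Part (a) I would handle by an interchange argument on a fixed optimal offline sequence $r^*$. Suppose $r^*_{\tilde t}\ne\alpha$ yet $r^*_t=\alpha$ for some $t>\tilde t$, and take the maximal block of consecutive $\alpha$-slots containing $t$. Because $r^*_{\tilde t}\ne\alpha$ the block starts strictly after $\tilde t$, so it has a left neighbour in $\{0,1\}$, and by maximality its right neighbour (if any) also lies in $\{0,1\}$; thus the block is flanked by non-$\alpha$ levels. By the identity in its one-sided-fetch form, replacing the whole block by whichever constant $v\in\{0,1\}$ is cheaper does not increase the total cost, producing an optimal sequence with strictly fewer $\alpha$-slots after $\tilde t$. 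Iterating removes all of them; taking $\alpha$-OPT to be an optimal policy carrying the fewest $\alpha$-slots (or invoking uniqueness of the optimum) then gives $r^*_t\ne\alpha$ for all $t>\tilde t$.

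The crux, and the step warranting the most care, is the boundary-fetch bookkeeping: the whole argument rests on the exact cancellation of the $M$-terms, which occurs \emph{only} when the flanking levels lie in $\{0,1\}$ --- for a flank at $\alpha$ the fetch coefficient becomes $-2\alpha(1-\alpha)<0$ and domination can fail. This is precisely why part (b) must be run as an induction preserving $r_t\ne\alpha$, and why part (a) must operate on maximal $\alpha$-blocks rather than arbitrary intervals. The remaining delicacies are verifying that the cancellation is valid for both the symmetric ($\alpha$-RR) and one-sided (true-model) fetch conventions, and pinning down the $\arg\min$ tie-break so that the boundary regime $\alpha+g(\alpha)=1$, where the residual $[g(\alpha)+\alpha-1]\,n$ vanishes and genuine ties can appear, still never selects the partial level.
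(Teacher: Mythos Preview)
Your proof is correct and takes a genuinely different, more unified route than the paper.

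\textbf{What the paper does.} The paper argues each transition separately: starting from $r^*_{n-1}=0$ (resp.\ $1$), it invokes Lemma~\ref{lem:lemma_opt} to obtain the necessary inequality $(1-g(\alpha))\sum x_l\ge \alpha\sum c_l+\alpha M$, uses $\alpha\ge 1-g(\alpha)$ to upgrade this to $\sum x_l\ge\sum c_l+M$, and then exhibits the specific alternative of holding level $1$ and checks by hand that it is no worse. Part (b) repeats the same computation against the $\alpha$-RR fetching condition. In both parts the paper really only treats ``returns to the same flank'' cases and leaves, e.g., $0\to\alpha\to 1$ implicit.

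\textbf{What you do.} Your single convex-combination identity $\mathrm{Cost}(\alpha)=\alpha\,\mathrm{Cost}(1)+(1-\alpha)\,\mathrm{Cost}(0)+[g(\alpha)+\alpha-1]\,n$ captures all of this at once: the fetch terms cancel exactly whenever the flanks lie in $\{0,1\}$ (and you correctly verify this for both the symmetric $|\cdot|$ in \textsc{totalCost} and the one-sided $(\cdot)^+$ of the true model), so no case split is needed and mixed-flank blocks are handled automatically. Your tie-break analysis for (b)---showing the ``$\mathrm{minCost}(\alpha)=\mathrm{minCost}(1)<\mathrm{minCost}(0)$'' case is impossible, hence whenever $\alpha$ attains the minimum so does $0$---is a sharper statement than anything the paper proves; the paper's inequalities are only weak at $\alpha+g(\alpha)=1$, so both arguments ultimately lean on the same unstated tie-breaking convention you explicitly flag. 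For (a) your maximal-block interchange is the natural global version of the paper's local contradiction and, unlike the paper's presentation, does not implicitly assume the block returns to its entry level.

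In short: same core insight, but your identity is cleaner, covers the mixed-flank cases the paper glosses over, and you are more honest about the boundary regime $\alpha+g(\alpha)=1$ and the $\arg\min$ convention.
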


We thus conclude that if $\alpha + g(\alpha) \geq 1$, if the service is either fully hosted or not hosted at the edge in a time-slot, the offline optimal policy does not use partial hosting in all subsequent time-slots. In addition, for $\alpha + g(\alpha) \geq 1$, our policy $\alpha$-RR never uses partial hosting.

The condition $\alpha + g(\alpha) \geq 1$ imposes an upper bound on the difference between the service cost when the service is not hosted at the edge (one unit) and the service cost under partial hosting ($g(\alpha)$ units). The take-away from the result is that if the reduction in service cost due to partial hosting is less than the fraction of service hosted under partial hosting, the offline optimal policy and our policy make limited use of partial hosting. 

Our next result provides performance guarantees for $\alpha$-RR.

\begin{theorem}
	\label{thm:RR_adv}
	Let $\rho^{\alpha\text{-RR}}$ be the competitive ratio of $\alpha$-RR policy as defined in \eqref{eq:competitiveRatio}. Under Assumptions \ref{assum_one_request}-\ref{ass:largeM}  and under Model \ref{model:partial},
	\begin{enumerate}
	\item[(a)] if  $\alpha c_{\text{min}}+g(\alpha)\geq 1$ and $c_{\text{min}}\geq 1$ then $\rho^{\alpha\text{-RR}}=1,$ 
	     \item[(b)] if $c_{\text{min}}<1$ or $\alpha c_{\text{min}}+g(\alpha)<1$, then 
	     $$\rho^{\alpha\text{-RR}}\leq 4+\dfrac{1}{M}+\max\left\{\dfrac{1}{M},\dfrac{1-g(\alpha)}{M\alpha} \right\}.$$ 
	 \end{enumerate}
\end{theorem}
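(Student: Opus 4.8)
The plan splits along the two regimes of the statement. For part (a) I would show that, under $c_{\text{min}}\ge 1$ and $\alpha c_{\text{min}}+g(\alpha)\ge 1$, hosting the service \emph{never} pays off on a per-slot basis, so both $\alpha$-OPT and $\alpha$-RR keep $r_t=0$ at all times and incur identical cost. Comparing the three levels slot-by-slot and using $x_t\le 1$ (Assumption \ref{assum_one_request}): a slot at full hosting costs rent $c_t\ge c_{\text{min}}\ge 1\ge x_t$ while saving at most $x_t$ units of service cost relative to $r_t=0$, and a slot at partial hosting costs rent $\alpha c_t\ge \alpha c_{\text{min}}\ge 1-g(\alpha)\ge (1-g(\alpha))x_t$ while saving only $(1-g(\alpha))x_t$. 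Hence, even before accounting for the strictly positive fetch cost, the retrospective cost of any fixed level over any window is minimized at $r\equiv 0$. Since every candidate switch that $\alpha$-RR evaluates in Algorithm \ref{algo:alphaRR} carries a strictly positive fetch term on top of a non-negative rent/service gap, staying at $0$ is strictly cheapest in every frame, so $\alpha$-RR never leaves $0$; $\alpha$-OPT gains nothing from hosting either, and therefore $C^{\alpha\text{-RR}}=C^{\alpha\text{-OPT}}$ and $\rho^{\alpha\text{-RR}}=1$.

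For part (b) I would run the frame-based amortized charging argument that is standard for RetroRenting-type guarantees. The first step is to extract, from the \texttt{totalCost} comparisons in Algorithm \ref{algo:alphaRR}, the switching characterization: $\alpha$-RR changes its level precisely when some alternative fixed level has strictly smaller retrospective cost over the current frame $(t_{\text{recent}},t]$, which translates into threshold inequalities comparing accumulated requests $\sum x_l$ against accumulated rent $\sum c_l$ plus the relevant fetch term ($M$, $\alpha M$, or $(1-\alpha)M$). Two consequences drive the analysis: \emph{non-switching inside a frame} certifies that no alternative level was cheaper, which upper-bounds the rent-plus-service cost $\alpha$-RR accrues while holding its level; and \emph{a switch at a frame boundary} certifies that the triggering accumulated quantity has crossed its threshold, which is exactly what I will charge the entry fetch cost against.

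Next I would partition the horizon into the maximal frames on which $\alpha$-RR holds a constant level $r^{(k)}\in\{0,\alpha,1\}$, and bound $\alpha$-RR's cost on each frame (its entry fetch plus rent and service) by a bounded multiple of $\alpha$-OPT's cost over the frame and its neighbours. Retrospective optimality against the $r\equiv 0$ and $r\equiv 1$ baselines controls the rent/service term, while the entry fetch is charged to the accumulated request/rent surplus that provoked the switch — a surplus that $\alpha$-OPT must also pay for, whatever level it occupies. Summing these charges and bounding how many times any unit of $\alpha$-OPT cost is charged yields the additive constant $4$ together with the fetch-to-saving ratios $\tfrac{1}{M}$ (full hosting: fetch $M$ against the unit service rate) and $\tfrac{1-g(\alpha)}{M\alpha}$ (partial hosting: per-slot saving $1-g(\alpha)$ against the reduced fetch $\alpha M$), with the $\max$ selecting the worse transition type.

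The main obstacle is the case explosion caused by the third hosting level. Unlike the two-level analysis, an $\alpha$-RR frame can border or overlap an $\alpha$-OPT segment at any of three levels, and the transitions $0\leftrightarrow\alpha$, $\alpha\leftrightarrow 1$, $0\leftrightarrow 1$ each carry a distinct fetch term and a distinct per-slot service gap, so the charging must be carried out separately for every (online level, offline level) pairing while guaranteeing bounded charge multiplicity. Closing the partial-hosting transitions with the sharp factor $\tfrac{1-g(\alpha)}{M\alpha}$, rather than a cruder constant, is the delicate point: the saving rate $1-g(\alpha)$ and the discounted fetch $\alpha M$ must be tracked jointly through the $0\!\to\!\alpha\!\to\!1$ and $1\!\to\!\alpha\!\to\!0$ chains, and I expect most of the effort — and the precise form of the bound — to concentrate there.
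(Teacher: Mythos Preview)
Your part (a) matches the paper's argument: both $\alpha$-OPT (via Lemma~\ref{lem:min_rent_cost}) and $\alpha$-RR (via the \texttt{totalCost} comparisons) never leave level $0$ under these conditions, so the costs coincide.

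For part (b), your decomposition is \emph{different} from the paper's, and the difference matters. You propose to partition by $\alpha$-RR's constant-level frames and charge each against $\alpha$-OPT's cost on that frame and its neighbours. The paper instead partitions by $\alpha$-OPT's fetches, obtaining three frame types (Type-$1$, Type-$\alpha$, Type-$(1-\alpha)$), and then proves a family of ``following'' lemmas (Lemmas~\ref{lem:download_oneminusalpha}--\ref{lem:download_alpha}) showing that within each $\alpha$-OPT frame, $\alpha$-RR makes exactly one fetch to the same level as $\alpha$-OPT, does not evict until $\alpha$-OPT does, and thereafter makes exactly one eviction. This structural result is what tames the case explosion you flag: once $\alpha$-RR is pinned to $\alpha$-OPT's level with bounded delay, each frame splits into four sub-intervals (before $\alpha$-RR fetches; both aligned; after $\alpha$-OPT evicts but before $\alpha$-RR does; both aligned again), and Lemmas~\ref{lem:max_requests}--\ref{lem:min_requests} bound the cost gap on the two misaligned pieces by $O(M)$ times the transition size. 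The conversion to a multiplicative ratio then uses Lemma~\ref{lem:OPT_slots} to lower-bound $\alpha$-OPT's cost per frame. The paper also exploits a hierarchical structure you do not mention: Type-$1$ and Type-$\alpha$ frames may contain Type-$(1-\alpha)$ sub-frames, and the ratio is established first on the sub-frames and then lifted.

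Your route---framing by the online policy---can be made to work for two levels, but in the three-level setting it is precisely the charge-multiplicity control that you leave unresolved: during a single $\alpha$-RR frame, $\alpha$-OPT may transit through all three levels multiple times, and nothing in your outline bounds how many $\alpha$-RR frames a single $\alpha$-OPT segment can be charged by. The paper avoids this by reversing the roles: it constrains the online policy's trajectory inside each offline frame, so the charging is one-to-one by construction. If you want to keep your decomposition, you would need an analogue of the following lemmas in the other direction (bounding $\alpha$-OPT's variability inside an $\alpha$-RR frame), which is not obvious; the cleaner path is the paper's.
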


This result characterizes sufficient conditions under which $\alpha$-RR is optimal, i.e., its performance matches that of the offline optimal policy which knows the entire request arrivals/rent cost sequences a priori. In addition, it provides an upper bound on the cost incurred in the worst-case. 

Recall that the cost of fetching $x \in \{\alpha, 1\}$ fraction of the service is $Mx$. Also, the service cost of using the the cloud servers to serve the part of the request corresponding to $x \in \{\alpha, 1\}$ fraction of the service is $1-g(x)$, where, by definition, $g(1) = 0$.
Typically, the answers/responses to user queries are significantly shorter than the code and database/libraries used to compute these answers. Motivated by this, we consider the following. 
\begin{assumption}
	\label{ass:largeM_a}
	For a given value of $x \in \{\alpha, 1\}$, the cost of fetching that fraction of the service (given by $Mx$) is more than service cost of using the the cloud servers to serve the part of the request corresponding to $x$ (given by $1-g(x)$). It follows that,
	$$M > \max \left\{1 , \dfrac{1-g(\alpha)}{\alpha} \right\}.$$
	\end{assumption}

	The following result provides the universal upper bound on the worst case performance of $\alpha$-RR under Assumptions \ref{assum_one_request}-\ref{ass:largeM_a}.
\begin{corollary}
	Let $\rho^{\alpha\text{-RR}}$ be the competitive ratio of $\alpha$-RR policy as defined in \eqref{eq:competitiveRatio}. Under Assumptions \ref{assum_one_request}-\ref{ass:largeM_a}  and under Model \ref{model:partial},
	if $c_{\text{min}}<1$ or $\alpha c_{\text{min}}+g(\alpha)<1$, then $\rho^{\alpha\text{-RR}}\leq 6.$ 
\end{corollary}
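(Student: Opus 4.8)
The plan is to derive the corollary directly from Theorem \ref{thm:RR_adv}(b) by bounding each of the parameter-dependent terms in its competitive-ratio estimate using the strengthened fetch-cost condition in Assumption \ref{ass:largeM_a}. Since Assumptions \ref{assum_one_request}--\ref{ass:largeM} form a subset of Assumptions \ref{assum_one_request}--\ref{ass:largeM_a}, Theorem \ref{thm:RR_adv}(b) applies verbatim: whenever $c_{\text{min}}<1$ or $\alpha c_{\text{min}}+g(\alpha)<1$, we have
$$\rho^{\alpha\text{-RR}}\leq 4+\frac{1}{M}+\max\left\{\frac{1}{M},\frac{1-g(\alpha)}{M\alpha}\right\}.$$
So the entire task reduces to showing that the strengthened assumption on $M$ forces the last two summands below $1$.

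First I would invoke Assumption \ref{ass:largeM_a}, namely $M>\max\{1,(1-g(\alpha))/\alpha\}$, to control the two fractional terms. The inequality $M>1$ gives $1/M<1$, and the inequality $M>(1-g(\alpha))/\alpha$ (equivalently $M\alpha>1-g(\alpha)$) gives $(1-g(\alpha))/(M\alpha)<1$. Consequently $\max\{1/M,(1-g(\alpha))/(M\alpha)\}<1$ as well, regardless of which branch attains the maximum. Substituting these bounds into the displayed right-hand side yields $\rho^{\alpha\text{-RR}}<4+1+1=6$, which in particular gives the claimed bound $\rho^{\alpha\text{-RR}}\leq 6$.

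There is essentially no hard step here: Assumption \ref{ass:largeM_a} is precisely calibrated so that each of the two variable summands in Theorem \ref{thm:RR_adv}(b) falls below $1$, collapsing the parameter-dependent bound to the universal constant $6$. The only point worth emphasizing is uniformity—no case analysis on the regime (whether $1/M$ or $(1-g(\alpha))/(M\alpha)$ achieves the maximum) is required, since both branches are simultaneously smaller than $1$, so the final bound depends on neither $\alpha$, $g(\alpha)$, nor $M$ individually.
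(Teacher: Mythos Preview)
Your proposal is correct and matches the paper's (implicit) approach: the corollary is an immediate consequence of Theorem~\ref{thm:RR_adv}(b) together with Assumption~\ref{ass:largeM_a}, and you have carried out exactly that substitution.
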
	

\color{black}

Our next result characterizes a fundamental limit on the performance of any deterministic online policy. 
\begin{theorem}
	\label{thm:anyonline}
	Consider a system satisfying Assumptions \ref{assum_one_request}-\ref{ass:largeM}  and  Model \ref{model:partial}. Let $\mathcal{P}$ be any deterministic online policy and let $$f(u, v)=1+  \frac{(uM+uc_{\text{min}}+g(u))(1-vc_{\text{min}}-g(v))}{vM}.$$
	(a) If  $c_{\text{min}}<1$ and  $\alpha c_{\text{min}}+g(\alpha)<1$,
\begin{align*}
	\rho^{\mathcal{P}} \geq & \min\left\{\min_{(u=v)\in\{\alpha, 1\}} f(u,v), \min_{u\in\{\alpha, 1\}}\frac{1}{u c_{\text{min}}+g(u)}\right\} > 1.  
	\end{align*}
	(b) If  $c_{\text{min}}<1$ and  $\alpha c_{\text{min}}+g(\alpha)\geq 1$, 
		\begin{align*}
	\rho^{\mathcal{P}} \geq  & \min\left\{\min_{u\in\{\alpha, 1\}} f(u,1), \frac{1}{c_{\text{min}}}\right\} > 1. 
	\end{align*}
	(c) If  $c_{\text{min}}\geq 1$ and  $\alpha c_{\text{min}}+g(\alpha)< 1$,
		\begin{align*}
	\rho^{\mathcal{P}} \geq  & \min\left\{\min_{u\in\{\alpha, 1\}} f(u,\alpha), \frac{1}{\alpha c_{\text{min}}+g(\alpha)}\right\} > 1. 
	\end{align*}
\end{theorem}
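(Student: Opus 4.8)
The plan is to prove a lower bound on the competitive ratio by exhibiting, for an \emph{arbitrary} deterministic online policy $\mathcal{P}$, a single adaptively-constructed instance on which $C^{\mathcal{P}}/C^{\alpha\text{-OPT}}$ is at least the claimed quantity. Throughout I would fix the rent at its cheapest value, $c_t=c_{\text{min}}$ for all $t$, so that hosting is as attractive as the constraints allow and the only adversarial lever is the request sequence. Since $\mathcal{P}$ is deterministic, the adversary can simulate it and react to each of its hosting decisions $r_{t+1}$. The guiding idea is to keep $\mathcal{P}$ perpetually ``one step behind'': feed requests while $\mathcal{P}$ is under-hosting, and cut requests to zero the instant $\mathcal{P}$ fetches up, so that $\mathcal{P}$ pays to host exactly when demand disappears and pays service/forwarding cost exactly when demand is present.

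Concretely, I would build the instance in alternating blocks tied to $\mathcal{P}$'s own switch times. Start a \emph{busy} block (one request per slot) while $\mathcal{P}$ sits at a low level; the moment $\mathcal{P}$ raises its level to some $u\in\{\alpha,1\}$, switch to an \emph{idle} block (no requests); the moment $\mathcal{P}$ lowers its level again (to some $v<u$), switch back to busy, and repeat. $\mathcal{P}$'s deterministic trajectory then falls into one of two exhaustive cases. \textbf{(i) $\mathcal{P}$ eventually freezes} at a fixed level. If it freezes at level $0$ during an unbounded busy block, then $\mathcal{P}$ pays one unit per slot while the clairvoyant $\alpha$-OPT hosts a beneficial level and pays only $uc_{\text{min}}+g(u)$ per slot, giving a ratio at least $\min_{u}1/(uc_{\text{min}}+g(u))$; freezing at a \emph{hosting} level during an idle block is even worse for $\mathcal{P}$ (its rent accrues while $\alpha$-OPT pays nothing) and is dominated. \textbf{(ii) $\mathcal{P}$ switches infinitely often}, oscillating between an up-level $u$ and a down-level $v$. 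Here I would account, over one up--down cycle, $\mathcal{P}$'s cost (the fetch cost $uM$, the rent it accrues at level $u$ during the idle block, and the service/forwarding cost it accrues at the low level during the busy block) against an upper bound on $\alpha$-OPT from the clairvoyant policy that hosts level $v$ during busy blocks and nothing during idle blocks. Normalising by the block lengths the adversary is forced to run --- dictated by $\mathcal{P}$'s own fetch/evict thresholds, and captured by the ratio of the per-slot savings $1-vc_{\text{min}}-g(v)$ to the fetch cost $vM$ --- collapses the per-cycle ratio to exactly $f(u,v)$. Because $\mathcal{P}$ chooses which transitions it makes, the bound we can guarantee on the single instance is the minimum of the applicable terms, which is why both results combine the oscillation term $f$ and the freeze term $1/(uc_{\text{min}}+g(u))$ under an outer minimum.

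The three cases are then a bookkeeping of which hosting levels are \emph{beneficial}, i.e.\ cost strictly less per busy slot than forwarding everything to the cloud: full hosting is beneficial iff $c_{\text{min}}<1$, and partial hosting iff $\alpha c_{\text{min}}+g(\alpha)<1$. In case (a) both are beneficial, so $\alpha$-OPT may profitably use either level; the oscillation term ranges over $u=v\in\{\alpha,1\}$ and the freeze term over $u\in\{\alpha,1\}$. In case (b) only full hosting is beneficial, so $\alpha$-OPT's down-level is pinned to $v=1$, giving $\min_{u}f(u,1)$ and the freeze term $1/c_{\text{min}}$; in case (c) only partial hosting is beneficial, pinning $v=\alpha$ and giving $\min_{u}f(u,\alpha)$ and $1/(\alpha c_{\text{min}}+g(\alpha))$. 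The strict inequality $\rho^{\mathcal{P}}>1$ in every case follows because each term exceeds $1$: the freeze terms exceed $1$ precisely because the relevant level is beneficial ($uc_{\text{min}}+g(u)<1$), and $f(u,v)>1$ because $M>1$ (Assumption \ref{ass:largeM}) makes the added fraction strictly positive whenever the level $v$ is beneficial.

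I expect the main obstacle to be step (ii): rigorously upper-bounding $\alpha$-OPT on the adaptively-generated instance and showing the per-cycle accounting telescopes to precisely $f(u,v)$ rather than to a looser constant. This requires (1) arguing that the clairvoyant two-phase policy is within the claimed bound of the true $\alpha$-OPT over each cycle, (2) pinning down the block lengths forced by $\mathcal{P}$'s own switching behaviour and handling the possibility that $\mathcal{P}$ switches ``too early'' or ``too late'' (which only helps the adversary), and (3) carefully managing the per-slot event ordering from the model (requests served before $r_{t+1}$ is chosen) so the off-by-one timing at block boundaries does not erode the bound. Subsidiary care is needed to make the limiting/unbounded-horizon argument in case (i) precise and to confirm $\alpha$-OPT never prefers a \emph{non}-beneficial level, which is what licenses the case-dependent restriction of the index sets.
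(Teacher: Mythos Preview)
Your proposal is correct and follows essentially the same route as the paper. The paper's argument (given only as a sketch) fixes $c_t=c_{\text{min}}$, feeds one request per slot until $\mathcal{P}$ first fetches to some level $u$, then cuts arrivals to zero; it computes $C^{\mathcal{P}}\ge \tau+uM+uc_{\text{min}}+g(u)$, upper-bounds $C^{\alpha\text{-OPT}}$ by the cheaper of ``never host'' and ``host level $v$ throughout'', and observes that the ratio is minimised at the break-even $\tau=vM/(1-vc_{\text{min}}-g(v))$, yielding exactly $f(u,v)$; if $\mathcal{P}$ never fetches, the per-slot ratio tends to $1/(vc_{\text{min}}+g(v))$. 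The three cases (a)--(c) then just record which levels $v$ are beneficial and hence available to $\alpha$-OPT, matching your ``bookkeeping'' paragraph.

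Two small remarks. First, the paper stops after a \emph{single} fetch rather than running infinitely many up--down cycles; since $\rho^{\mathcal{P}}$ is a supremum over finite instances, one well-chosen cycle already delivers the bound, so your infinite-oscillation framing, while not wrong, is more machinery than needed. Second, when you write that $\alpha$-OPT ``hosts level $v$ during busy blocks and nothing during idle blocks'', note that for the single-cycle construction the idle block carries no requests and no rent for the non-hosting OPT, so the delicate part is only pinning down $C^{\alpha\text{-OPT}}$ on the busy block; the paper handles this by taking $\tau$ exactly at the crossover where forwarding and hosting-at-level-$v$ cost the same, which is what makes the $vM$ in the denominator of $f(u,v)$ appear cleanly. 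Your anticipated obstacle (2) about block lengths ``forced by $\mathcal{P}$'' is therefore resolved not by telescoping over cycles but by observing that whatever $\tau$ the policy picks, the ratio $(\tau+uM+uc_{\text{min}}+g(u))/\min\{\tau,\,vM+\tau(vc_{\text{min}}+g(v))\}$ is minimised at the crossover and equals $f(u,v)$ there.
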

\color{black}

From Theorems \ref{thm:RR_adv} and \ref{thm:anyonline}, we conclude that if  $\alpha c_{\text{min}}+g(\alpha)\geq 1$ and $c_{\text{min}}\geq 1$, $\alpha$-RR performs as well as the offline optimal policy and if $c_{\text{min}} < 1$ and/or $\alpha c_{\text{min}}+g(\alpha) < 1$, no deterministic online policy can match the performance of the offline optimal policy.  That is, if $c_{\text{min}} < 1$ and/or $\alpha c_{\text{min}}+g(\alpha) < 1$ no deterministic online policy can have a competitive ratio equal to one whereas in this case the competitive ratio of $\alpha$-RR is is uniformly bounded by six independent of various system parameters.

\section{Analytical Results: Stochastic Setting}
\label{sec:analyticalResults_stochastic}
In this section, we characterize the performance of $\alpha$-RetroRenting ($\alpha$-RR) for stochastic request arrivals and rent cost process. Our analytical results hold under the following assumptions of the request arrival and rent cost processes.

\begin{assumption}[Request arrivals and rent cost]
\label{assum:arrivals_and_rent}
\begin{enumerate}
    \item[--] The request arrival process $X_t$, $t = 1, 2, \cdots$ is i.i.d. across time with $X_t \sim \text{Ber}(p)$. 
    \item[--] The rent cost process $Z_t$, $t = 1, 2, \cdots$ is negatively associated \cite{wajc2017negative} with mean $c$ where $c\in [c_{\text{min}},c_{\text{max}}]$. Note that the case where $Z_t$ is i.i.d. across time is a special case of negative association.  
\end{enumerate}

\end{assumption}
We first define three functions $f(), q(),$ and $h()$ which are used in our main result.  These functions  contain the terms which are obtained by using Hoeffding’s inequality to bound the probability of certain events. We introduce these functions to represent the final result in a compact form.

The function $f(\lambda,M,p,c,\alpha,g(\alpha))$ is defined as follows:
\begin{align*}
	f(\lambda,M,p,c,\alpha,g(\alpha)) =  & \max\{M+p, M+c\} 
	\Bigg[  \frac{ \lambda \widetilde{M}_f\delta_A^f\exp\left(-2 (\frac{M}{c_{\text{max}}}+1) \frac{(p(1-g(\alpha))-\alpha c)^2}{(1+\alpha c_{\text{max}}-\alpha c_{\text{min}})^2}\right)}{1-\exp\left(-2 \frac{(p(1-g(\alpha))-\alpha c)^2}{(1+\alpha c_{\text{max}}-\alpha c_{\text{min}})^2}\right)}+\\
	&  \frac{ \lambda \widetilde{M}_f\delta_B^f\exp\left(-2 (\frac{(1-\alpha) M}{1-(1-\alpha) c_{\text{min}}}+1) \frac{((1-\alpha) c-pg(\alpha))^2}{(1+(1-\alpha) (c_{\text{max}}- c_{\text{min}}))^2}\right)}{1-\exp\left(-2 \frac{((1-\alpha) c-pg(\alpha))^2}{(1+(1-\alpha) (c_{\text{max}}- c_{\text{min}}))^2}\right)}+\\
	&\exp\left(  \frac{-2 (\lambda-1)^2 M^2(1-\alpha)^2}{\lambda \widetilde{M}(1+(1-\alpha) (c_{\text{max}}-c_{\text{min}}))^2}\right)+\exp\left(  \frac{-2 (\lambda-1)^2 M^2\alpha^2}{\lambda \widetilde{M}(1+\alpha (c_{\text{max}}-c_{\text{min}}))^2}\right)\Bigg],
	\end{align*}
where $$\widetilde{M}_f=\max\Bigg\{\left\lceil\frac{ M\alpha}{p(1-g(\alpha))-\alpha c}\right\rceil, \left\lceil\frac{ M(1-\alpha)}{(1-\alpha) c-pg(\alpha)}\right\rceil\Bigg\},$$ $$\delta_A^f=\exp\left(\frac{-4  (p(1-g(\alpha))-\alpha c)) \alpha M}{(1+\alpha c_{\text{max}}-\alpha c_{\text{min}})^2}\right),$$ and $$\delta_B^f=\exp\left(\frac{-4  ((1-\alpha) c-pg(\alpha))) (1-\alpha) M}{(1+(1-\alpha) (c_{\text{max}}- c_{\text{min}}))^2}\right).$$

The function $q(\lambda,M,p,c,\alpha,g(\alpha))$ is defined as follows:
\begin{align*}
   q(\lambda,M,p,c,\alpha,g(\alpha)) = &  \max\{\alpha M+\alpha c+g(\alpha)p, M+c\}   \Bigg[\frac{\delta_A^q \lambda \widetilde{M}_q\exp\left(-2 (\frac{M}{c_{\text{max}}}+1) \frac{(p- c)^2}{(1+ c_{\text{max}}-\alpha c_{\text{min}})^2}\right)}{1-\exp\left(-2 \frac{(p- c)^2}{(1+ c_{\text{max}}- c_{\text{min}})^2}\right)}+\\
   &\frac{\delta_B^q \lambda \widetilde{M}_q\exp\left(-2 (\frac{M}{c_{\text{max}}}+1) \frac{(pg(\alpha)-(1-\alpha) c)^2}{(1+(1-\alpha) (c_{\text{max}}- c_{\text{min}}))^2}\right)}{1-\exp\left(-2 \frac{(pg(\alpha)-(1-\alpha) c)^2}{(1+(1-\alpha) (c_{\text{max}}- c_{\text{min}}))^2}\right)}+\\
   &\exp\left(  \frac{-2 (\lambda-1)^2 M^2(1-\alpha)^2}{\lambda \widetilde{M}_q(1+(1-\alpha) (c_{\text{max}}-c_{\text{min}}))^2}\right)+\exp\left(\frac{-2 (\lambda-1)^2 M^2\alpha^2}{\lambda \widetilde{M}_q(1+\alpha (c_{\text{max}}-c_{\text{min}}))^2}\right)\Bigg],
\end{align*}
where $$\widetilde{M}_q=\max\Bigg\{\frac{M}{p- c},  \left\lceil\frac{ M(1-\alpha)}{pg(\alpha)-(1-\alpha) c}\right\rceil\Bigg\},$$
$$\delta_A^q=\exp\left(\frac{-4  (p- c) \alpha M}{(1+ c_{\text{max}}-c_{\text{min}})^2}\right),$$
$$\delta_B^q=\exp\left(\frac{-4  (pg(\alpha)-(1-\alpha) c)) (1-\alpha) M}{(1+(1-\alpha) (c_{\text{max}}- c_{\text{min}}))^2}\right),$$

The function $h(\lambda,M,p,c,\alpha,g(\alpha))$ is defined as follows:
\begin{align*}
 h(\lambda,M,p,c,\alpha,g(\alpha))=&\max\{\alpha M+\alpha c+g(\alpha)p, M+p\} \Bigg[  \frac{2\lambda \widetilde{M}_h\delta_A^h\exp\left(-2 (\frac{M}{1-c_{\text{min}}}+1) \frac{(c-p)^2}{(1+ c_{\text{max}}-\alpha c_{\text{min}})^2}\right)}{1-\exp\left(-2 \frac{(c-p)^2}{(1+ c_{\text{max}}- c_{\text{min}})^2}\right)}+\\
 &\frac{2\lambda \widetilde{M}_h\delta_B^h\exp\left(-2 (\frac{\alpha M}{1-g(\alpha)-\alpha c_{\text{min}}}+1) \frac{((\alpha c-p(1-g(\alpha)))^2}{(1+\alpha (c_{\text{max}}- c_{\text{min}}))^2}\right)}
  {1-\exp\left(-2 \frac{(\alpha c-p(1-g(\alpha)))^2}{(1+\alpha (c_{\text{max}}- c_{\text{min}}))^2}\right)}+\\
  &\exp\left(  \frac{-2 (\lambda-1)^2 M^2\alpha^2}{\lambda \widetilde{M}_h(1+\alpha (c_{\text{max}}-c_{\text{min}}))^2}\right)+\exp\left(  \frac{-2 (\lambda-1)^2 M^2}{\lambda \widetilde{M}_h(1+ (c_{\text{max}}-c_{\text{min}}))^2}\right)\Bigg],
\end{align*}
where $$\widetilde{M}_h=\max\Bigg\{\frac{M}{c-p},  \left\lceil\frac{ M\alpha}{\alpha c-p(1-g(\alpha))}\right\rceil\Bigg\},$$
$$\delta_A^h=\exp\left(\frac{-4  (c-p) \alpha M}{(1+ c_{\text{max}}-c_{\text{min}})^2}\right),$$
$$\delta_B^h=\exp\left(\frac{-4  (\alpha c-p(1-g(\alpha)) \alpha M}{(1+\alpha (c_{\text{max}}- c_{\text{min}}))^2}\right).$$


\begin{remark}
Note that the functions $f(), q(),$ and $h()$ contain terms which are products of linearly increasing functions of $M$ and exponentially decaying functions of $M$. It follows that as $M$ increases, $f(), q(),$ and $h()$ tend to zero.
\end{remark}



Our next theorem characterizes the performance of $\alpha$-RR in terms of our performance metric defined in \eqref{eq:efficiencyRatio}.
\begin{theorem}
	\label{thm:RR_stochastic_theorem}
	Under Assumption \ref{assum:arrivals_and_rent}  and under Model \ref{model:random}, 
	\begin{itemize}
		\item[--] Case $\frac{\alpha c}{1-g(\alpha)}<p<\frac{(1-\alpha)c}{g(\alpha)}$: 
		\begin{align*}
		\sigma^{\alpha-\text{RR}}(T) \leq 1+\min_{\lambda>1} \frac{f(\lambda,M,p,c,\alpha,g(\alpha))}{\alpha c+g(\alpha)p}  
		\end{align*}
		\item[--] Case $p>\max\{c,\frac{(1-\alpha) c}{g(\alpha)}\}$: 
		\begin{align*}
		\sigma^{\alpha-\text{RR}}(T) \leq 1+\min_{\lambda>1} \frac{q(\lambda,M,p,c,\alpha,g(\alpha))}{M+c}
		\end{align*}
		
		\item[--] Case $p<\min\{c,\frac{\alpha c}{1-g(\alpha)}\}$: 
		\begin{align*}
		\sigma^{\alpha-\text{RR}}(T) \leq 1+\min_{\lambda>1} \frac{h(\lambda,M,p,c,\alpha,g(\alpha))}{c}
		\end{align*}

	\end{itemize}
\end{theorem}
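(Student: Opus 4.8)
The plan is to recast the stochastic comparison as an amortized, frame-based accounting in which $\alpha$-RR is charged only for \emph{deviating} from the level that is optimal in expectation. Writing $L_0=p$, $L_\alpha=\alpha c+g(\alpha)p$ and $L_1=c$ for the expected per-slot costs of hosting at levels $0$, $\alpha$ and $1$ (immediate from Model \ref{model:random} and Assumption \ref{assum:arrivals_and_rent}), the three cases are exactly the regimes in which level $\alpha$, level $1$ and level $0$ are the strict minimizers of $L_{(\cdot)}$: for instance $\tfrac{\alpha c}{1-g(\alpha)}<p$ rearranges to $L_\alpha<L_0$ and $p<\tfrac{(1-\alpha)c}{g(\alpha)}$ to $L_\alpha<L_1$. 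The quantities $\widetilde{M}_f,\widetilde{M}_q,\widetilde{M}_h$ are the \emph{break-even windows}: the number of slots over which the accumulated expected saving of the optimal level over a competing level first exceeds the associated fetch cost (e.g. $\tfrac{\alpha M}{p(1-g(\alpha))-\alpha c}$ is the break-even of level $\alpha$ against level $0$, since $p(1-g(\alpha))-\alpha c=L_0-L_\alpha$ is the per-slot saving and $\alpha M$ the fetch cost). These are precisely the thresholds that govern when $\alpha$-RR, looking back over its window, ought to settle at the optimal level.

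First I would lower-bound the denominator $\mathbb{E}\big[\sum_t C_t^{\alpha\text{-OPT-ON}}\big]$. Using the frame decomposition from the adversarial analysis behind Theorem \ref{thm:RR_adv} — cutting the horizon at the epochs where the optimal online policy changes its hosting level — together with the stochastic analogue of the minimum-renting-duration property (once a level is fetched it is held for at least a break-even number of slots), I would show that each frame contributes, in expectation, at least the case-dependent denominator ($\alpha c+g(\alpha)p$ in Case 1, $M+c$ in Case 2, and $c$ in Case 3), essentially because a hosted level is retained long enough for its accrued rent, or the fetch itself, to dominate. Summed over frames this gives a clean lower bound of the form $\text{denom}\cdot(\text{number of frames})$.

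The heart of the argument, which I would tackle next, is an upper bound on $\alpha$-RR's \emph{expected extra cost per frame}, and this is where $f,q,h$ are born. By the definition of $\alpha$-RR (Algorithm \ref{algo:alphaRR}), a spurious move away from the optimal level occurs only if, for \emph{some} look-back split point $\tau$, the empirical accumulated cost of a wrong level falls below that of the optimal level; each such event is a large deviation of sums of the Bernoulli arrivals and the negatively associated rent costs away from their means. I would bound each by Hoeffding's inequality — valid for the rent sequence because negative association preserves the Hoeffding tail bound \cite{wajc2017negative} — obtaining per-window failure probabilities decaying like $\exp\!\big(-2k(\cdot)^2/(\cdot)^2\big)$ in the window length $k$; a union bound over $k\ge\widetilde{M}$ then sums to the geometric series $\tfrac{\exp(\cdot)}{1-\exp(\cdot)}$ appearing in the brackets. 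Separately, the probability that an excursion, once begun, persists beyond $\lambda\widetilde{M}$ slots before $\alpha$-RR corrects itself is controlled by the terms $\exp\!\big(-2(\lambda-1)^2M^2(\cdot)^2/(\lambda\widetilde{M}(\cdot)^2)\big)$, and each excursion is charged its worst-case one-shot cost through the prefactor $\max\{\cdot,\cdot\}$ (a fetch of at most $M$ plus a per-slot cost). Optimizing the truncation level $\lambda>1$ balances the ``wrong-decision'' terms against the ``slow-correction'' terms, which is why the final bound carries $\min_{\lambda>1}$.

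Combining the per-frame extra-cost bound $f$ (resp. $q$, $h$) with the per-frame lower bound $\text{denom}$, and noting that both totals scale with the number of frames, yields $\sigma^{\alpha\text{-RR}}(T)\le 1+\tfrac{f}{\text{denom}}$ uniformly in $T$, and analogously in the other two cases. I expect the main obstacle to be the concentration step: $\alpha$-RR's decision rule compares a \emph{minimum over all split points} $\tau\in(t_{\text{recent}},t)$, so a single wrong decision is a union over windows of every length, and the geometric summation converges only once the window exceeds the break-even $\widetilde{M}$ (which forces the exponents negative). Doing this carefully while (i) invoking Hoeffding for the merely negatively associated rent costs and (ii) correctly decoupling the probability of \emph{entering} a wrong excursion from its \emph{duration} — so that the two families of exponential terms can be added rather than convolved — is the delicate part. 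Keeping the renewal/frame structure consistent, so that per-frame expectations multiply cleanly by the frame count, is the remaining bookkeeping hurdle.
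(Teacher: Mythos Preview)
Your concentration toolkit is right — Hoeffding for the Bernoulli arrivals and the negatively associated rent costs, union bounds over look-back windows, the break-even interpretation of $\widetilde{M}_f,\widetilde{M}_q,\widetilde{M}_h$, and the role of the free parameter $\lambda$ — but the organizing structure you propose does not carry the argument.

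The gap is the frame decomposition. You cut time at the epochs where $\alpha$-OPT-ON changes its hosting level, and then aim to bound $\alpha$-RR's extra cost \emph{per frame} by the constant $f$ (resp.\ $q,h$). Under Assumption~\ref{assum:arrivals_and_rent} the environment is i.i.d., so the optimal online policy is stationary: after at most one initial fetch it sits at the level minimizing $L_{(\cdot)}$ for the rest of the horizon. There is therefore essentially a single frame of length $\approx T$, and inside that frame $\alpha$-RR can make \emph{many} spurious excursions — the expected number is $\Theta(T)$ times a small probability, not $O(1)$. Your ``per-frame extra cost $\le f$'' cannot hold for a frame whose length grows with $T$, and the argument that ``both totals scale with the number of frames'' collapses because the number of frames does not grow.

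The paper's proof is instead a \emph{per-time-slot} argument, and this is what makes the $T$-scaling cancel. For each fixed slot $t$ (with $t>\lambda\widetilde{M}$) one defines a ``good event'' $G_t$ on the window $[t-\lambda\widetilde{M},\,t-1]$: $G_t$ is the intersection of (i) the complements of the large-deviation events that would make a wrong level look retrospectively cheaper over \emph{some} sub-window (these give the geometric-series terms you identified), and (ii) the events that over the full window the optimal level beats each alternative by more than the relevant fetch cost (these give the $\exp\big(-2(\lambda-1)^2M^2(\cdot)\big)$ terms). A case analysis on $\alpha$-RR's state at time $t-\lambda\widetilde{M}$ shows that on $G_t$ the retrospective rule of Algorithm~\ref{algo:alphaRR} forces $r_t^{\alpha\text{-RR}}$ to equal the optimal level, so $\mathbb{E}[C_t^{\alpha\text{-RR}}\mid G_t]=\mathbb{E}[C_t^{\alpha\text{-OPT-ON}}]$, while on $G_t^c$ one charges the crude one-shot bound $\max\{\cdot,\cdot\}$. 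Hence the \emph{per-slot} excess $\mathbb{E}[C_t^{\alpha\text{-RR}}-C_t^{\alpha\text{-OPT-ON}}]\le \max\{\cdot,\cdot\}\,\mathbb{P}(G_t^c)$, which is exactly $f$ (resp.\ $q,h$) after optimizing over $\lambda$. Pairing this with the per-slot lower bound $\mathbb{E}[C_t^{\alpha\text{-OPT-ON}}]\ge \min\{c,\alpha c+g(\alpha)p,p\}$ from Lemma~\ref{lemma:optimal_causal} and summing over $t$ makes both sides proportional to $T$, yielding the stated ratio. The adversarial frame machinery from Theorem~\ref{thm:RR_adv} is not invoked at all.
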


\begin{remark}\label{remark:theorem1}
We thus conclude that the the cost incurred by $\alpha$-RR approaches the cost incurred by the online optimal policy $\alpha$-OPT-ON as $M$ increases. 
\end{remark}


\section{Simulation Results: Model 1}
\label{sec:simulation}
In this section, we present our simulation results  for Model \ref{model:partial}.  In addition to plotting the performance of $\alpha$-OPT, $\alpha$-RetroRenting, and the lower bound on the performance of any  online policy ($\alpha$-LB), we also simulate two policies which do not use partial hosting. We refer to the the offline optimal policy without partial hosting as OPT. The other policy called RetroRenting (RR) was proposed in \cite{narayana2021renting}. RR works on the same principle as that of  $\alpha$-RetroRenting without using partial hosting. We also plot the lower bound on the performance of any online policy which is not allowed to partially host the service (LB). The parameters used for each data point in a plot are given in the figure caption.  Although our analysis for Model 1 holds under Assumption \ref{assum_one_request} and for adversarial arrival processes, in this section, we consider more general arrival processes with more  than one request per time-slot. We also consider the setting where request arrivals are stochastic. 
\subsection{Synthetic Request Arrivals and Rent Costs}

The first set of results use synthetic request arrival and rent cost sequences. Unless stated otherwise, the request arrival process is i.i.d. Bernoulli with parameter $p$. We model the time-varying rent cost sequence using the Autoregressive moving-average (ARMA) model \cite{box2011time}, specifically, ARMA(4,2). To choose the parameters of the model, we fit the model to real world price data obtained from \cite{awsprices}, which provides region-wise prices of unused EC2 capacity in the Amazon Web Services (AWS) cloud.


\begin{figure}
\centering
\parbox{7.3cm}{
\includegraphics[width=7.3cm]{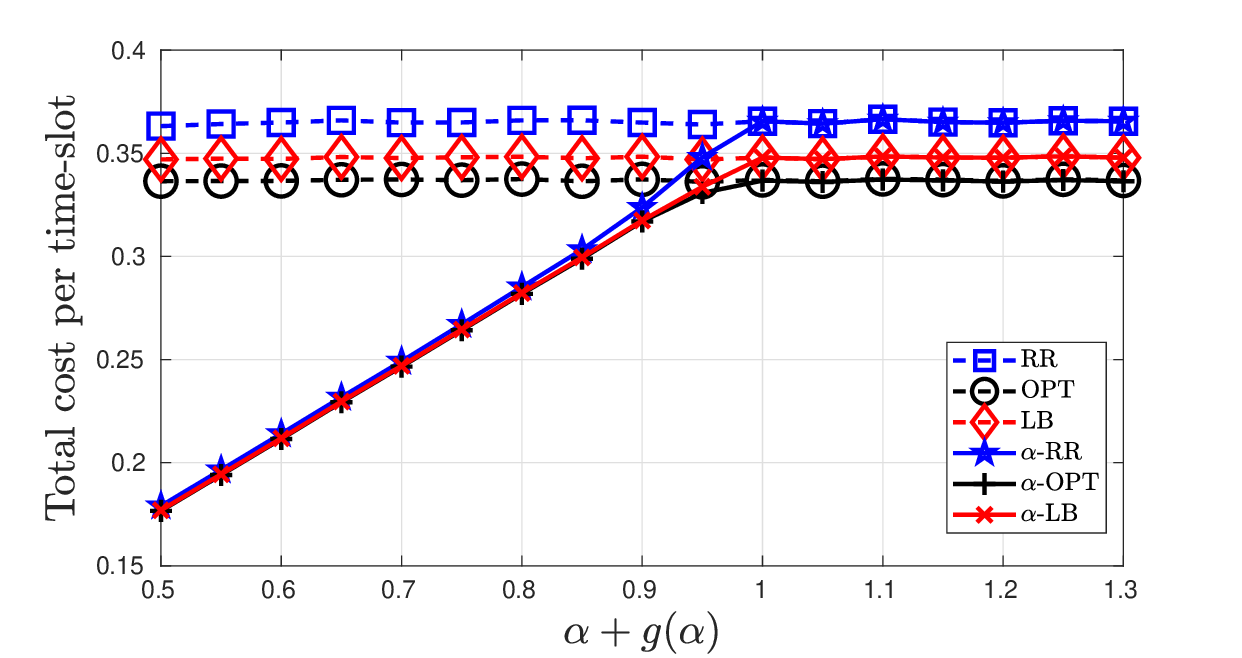}
\caption{Total cost per time slot as a function of $\alpha+g(\alpha)$ for $M = 10$, $c = 0.35$, $p=0.35$, and $\alpha=0.4$}
\label{fig:1}}
\quad
\begin{minipage}{7.3cm}
\includegraphics[width=7.3cm]{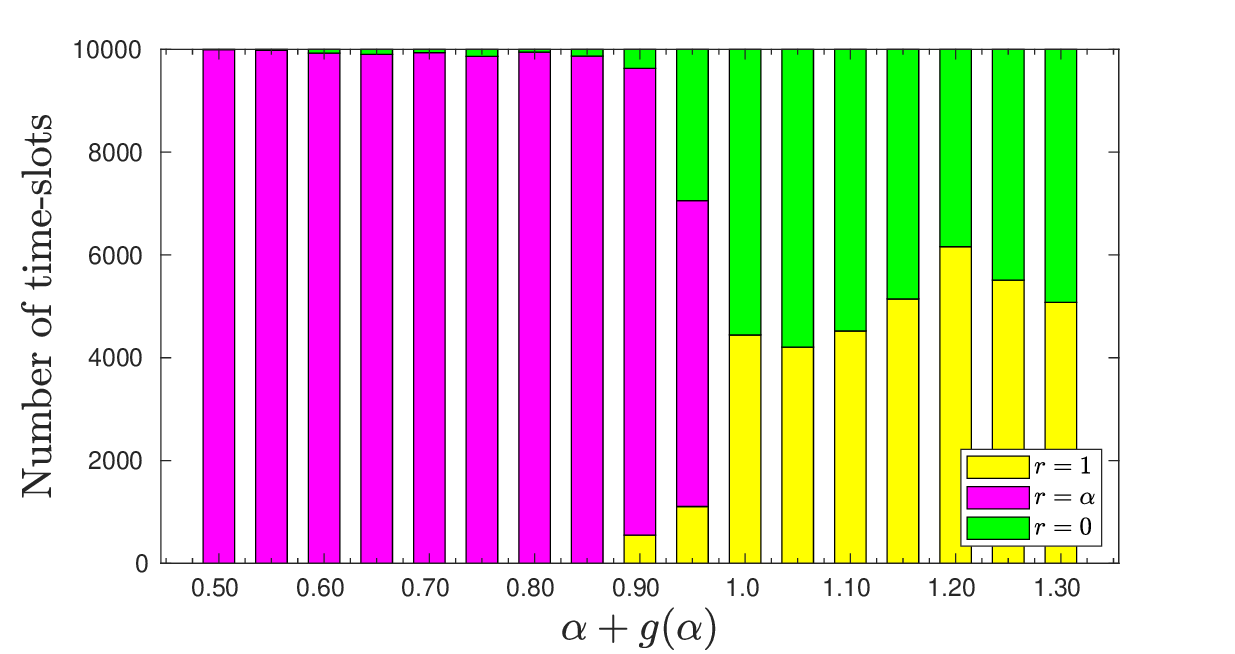}
\caption{Number of time-slots per hosting state under $\alpha$-RR as function of $\alpha+g(\alpha)$ for $M = 10$, $c = 0.35$, $p=0.35$, and $\alpha=0.4$}
\label{fig:middle0}
\end{minipage}
\end{figure}

	




\begin{figure}
\centering
\parbox{7.3cm}{
\includegraphics[width=7.3cm]{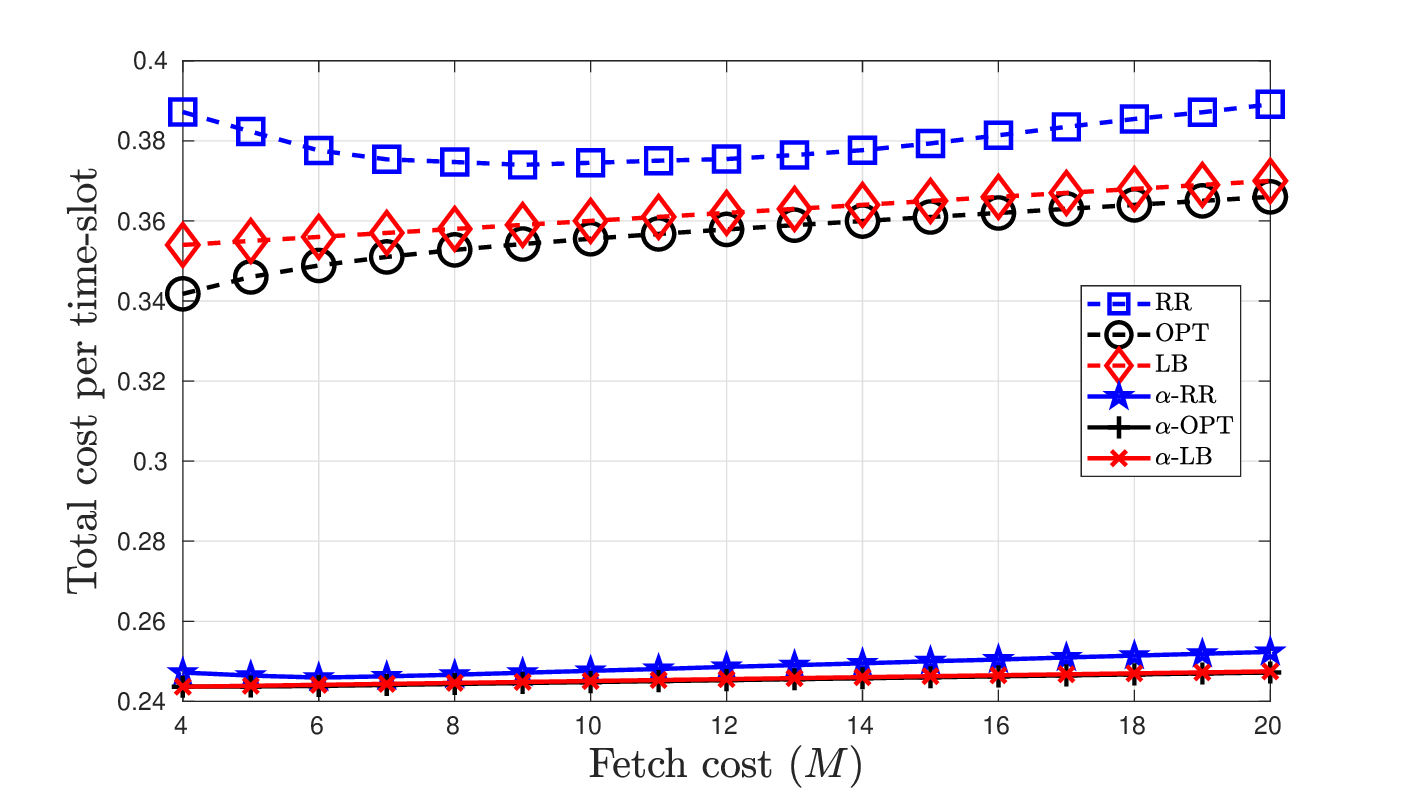}
\caption{Total cost per time slot as a function of fetch cost ($M$) for the case when $\alpha+g(\alpha)<1$. Here $c=0.35$, $\alpha=0.239$, $g(\alpha)=0.380$ and $p=0.42$}
\label{fig:4}}
\quad
\begin{minipage}{7.3cm}
\includegraphics[width=7.3cm]{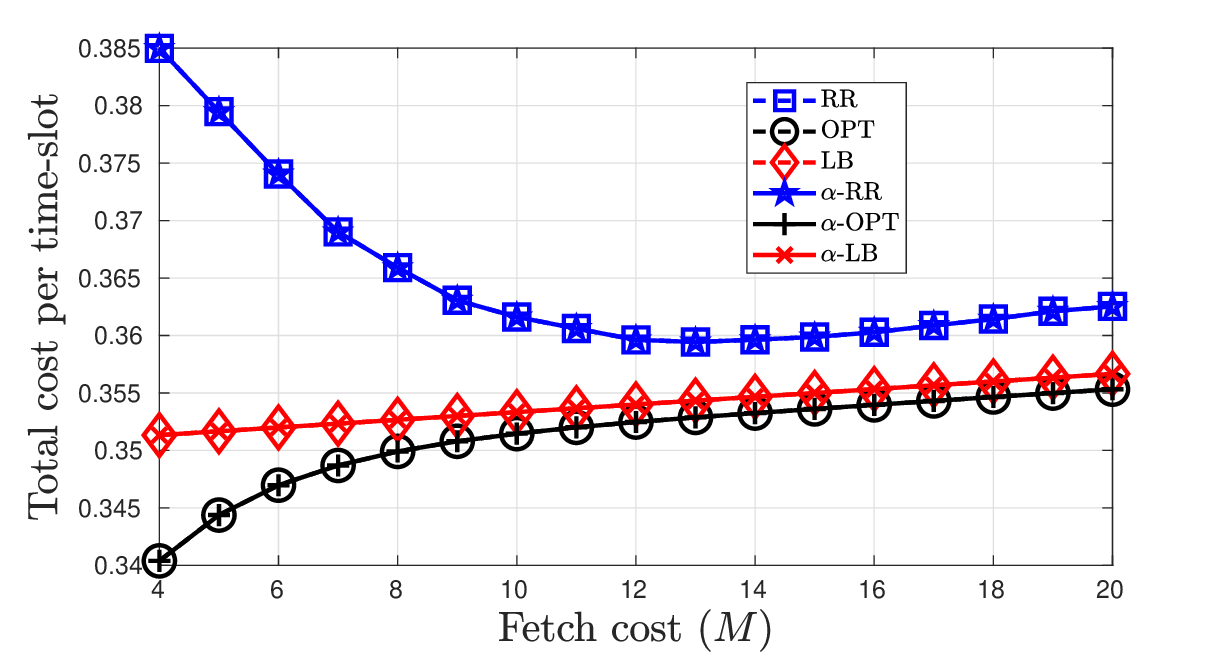}
\caption{Total cost per time slot as a function of fetch cost ($M$) for the case when $\alpha+g(\alpha) \geq 1$. Here $c=0.35$, $\alpha=0.5$, $g(\alpha)=0.7$ and $p =0.42 $}
\label{fig:7}
\end{minipage}
\end{figure}



\begin{figure}
\centering
\parbox{7.3cm}{
\includegraphics[width=7.3cm]{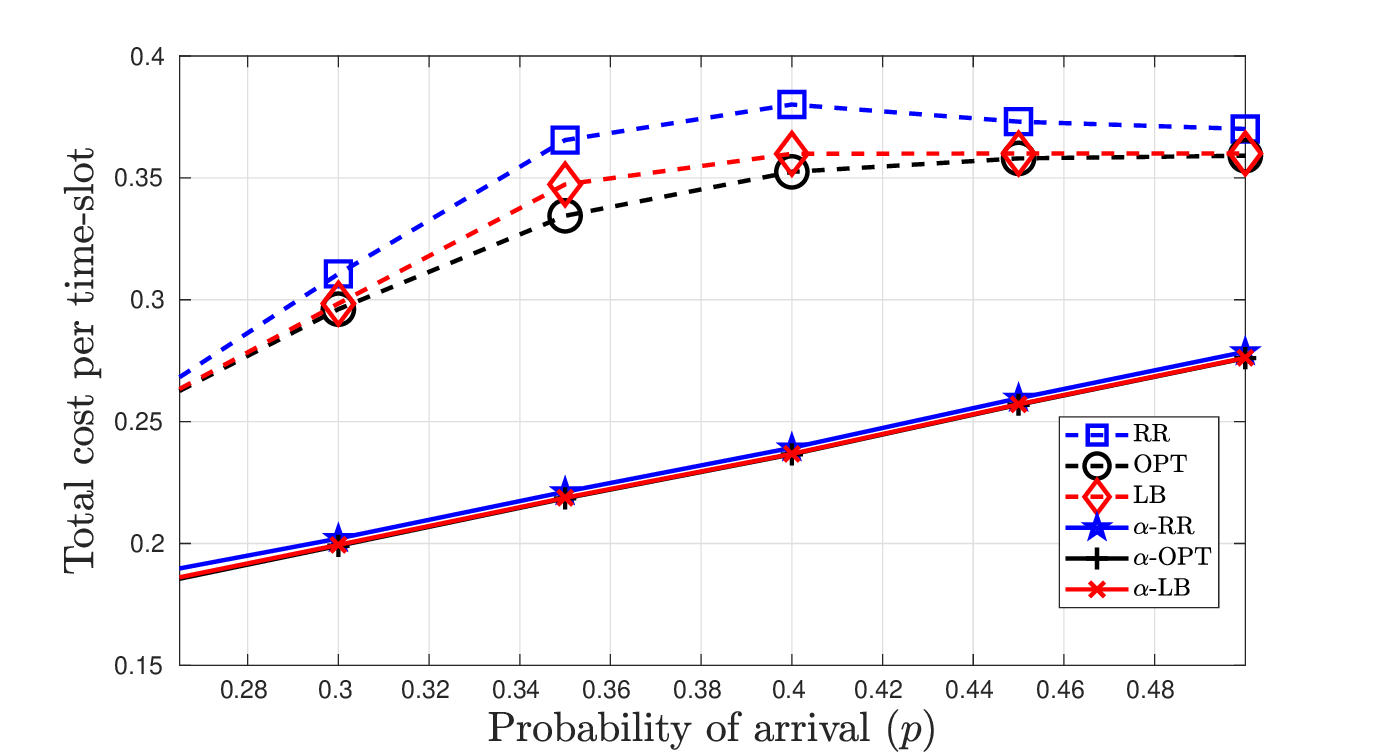}
\caption{Total cost per time slot as a function of request arrival probability ($p$) for the case when $\alpha +g(\alpha)<1$. Here $c=0.35$, $M = 10$, $\alpha=0.239$, $g(\alpha)=0.38$}
\label{fig:16}}
\quad
\begin{minipage}{7.3cm}
\includegraphics[width=7.3cm]{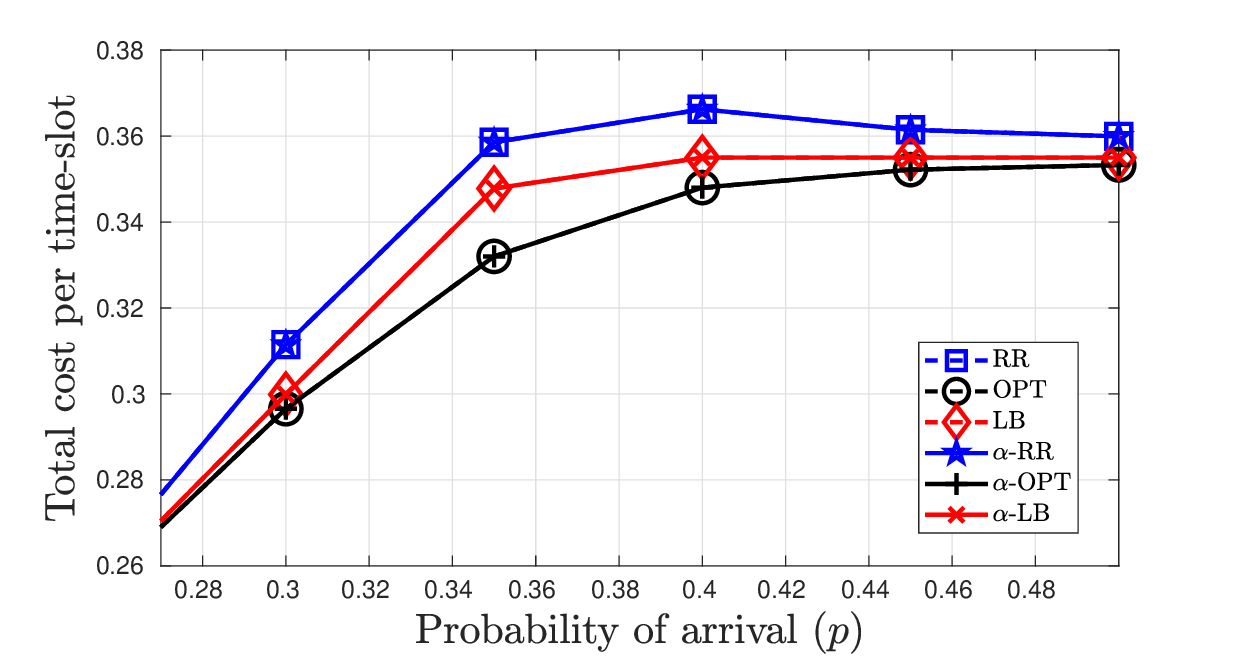}
\caption{Total cost per time slot as a function of request arrival probability ($p$) for the case when $\alpha +g(\alpha)\geq1$. Here $c=0.35$, $M = 10$, $\alpha=0.5$, $g(\alpha)=0.7$}
\label{fig:17}
\end{minipage}
\end{figure}

 

 \begin{figure}
\centering
\parbox{7.3cm}{
\includegraphics[width=7.3cm]{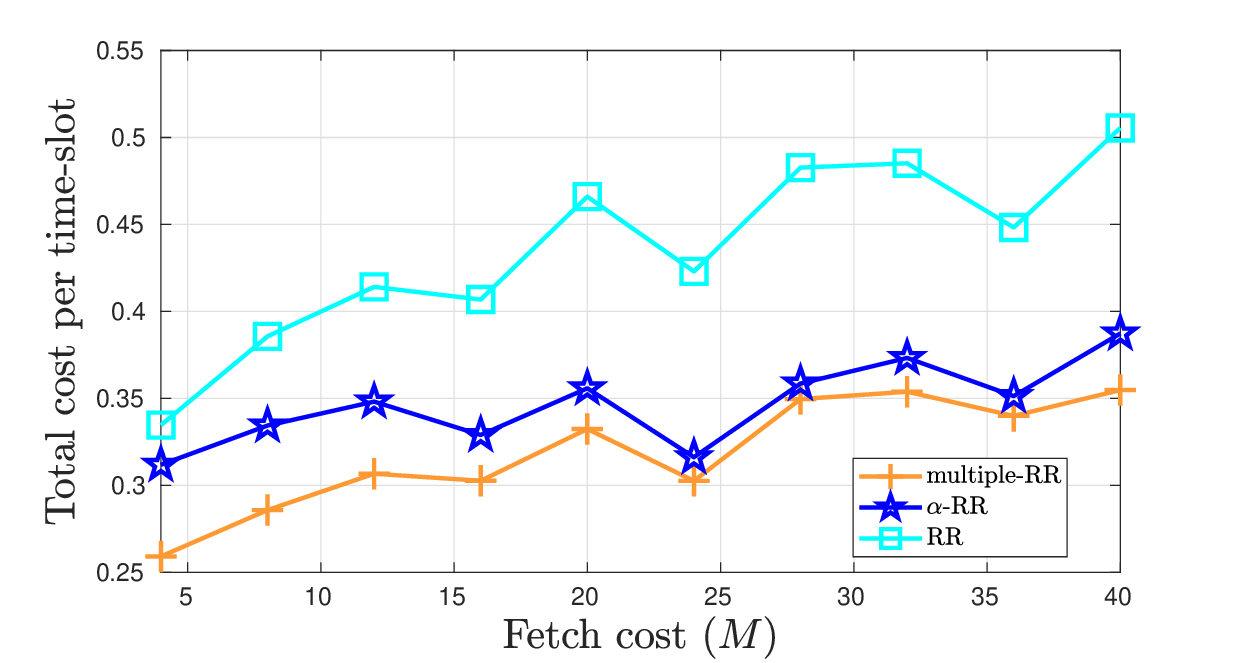}
\caption{Total cost per time slot as a function of fetch cost ($M$) for the case when $\alpha=0.3$, $g(\alpha)=0.4$, $\alpha_1 = 0.4$, $g(\alpha_1) = 0.3$, $\alpha_2 = 0.5$, $g(\alpha_2) = 0.15$ and $c = 0.5$}
\label{fig:GE1}}
\quad
\begin{minipage}{7.3cm}
\includegraphics[width=7.3cm]{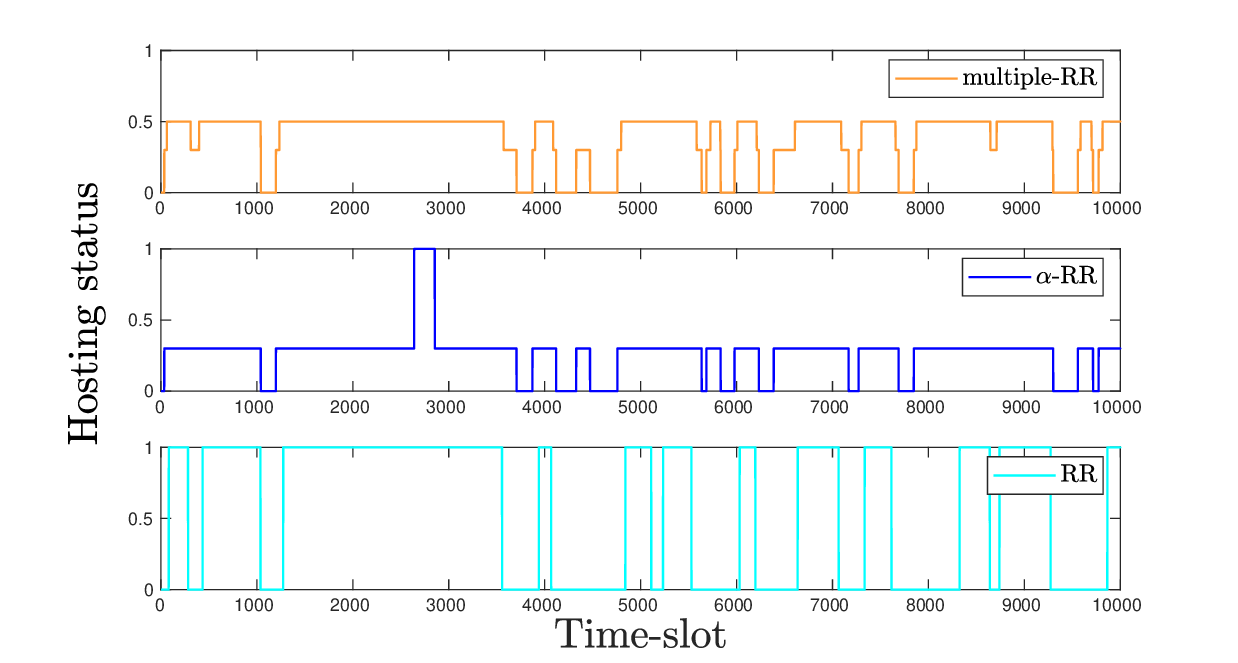}
\caption{Hosting Status over time for different policies for the case when $\alpha=0.3$, $g(\alpha)=0.4$, $\alpha_1 = 0.4$, $g(\alpha_1) = 0.3$, $\alpha_2 = 0.5$, $g(\alpha_2) = 0.15$ and $c = 0.5$}
\label{fig:GE2}
\end{minipage}
\end{figure}
 
	


In Figure \ref{fig:1}, we plot the total cost per time-slot incurred by various policies as a function of $\alpha + g(\alpha)$. We see that the performance gap between policies that are allowed to use partial hosting ($\alpha$-RetroRenting/$\alpha$-OPT/$\alpha$-LB) and their counterparts that are not allowed to use partial hosting (RetroRenting/OPT/LB) is significant for $\alpha + g(\alpha) < 1$ and vanishes for $\alpha + g(\alpha)\geq 1$. 

In Figure \ref{fig:middle0}, we illustrate the number of time-slots for which the $\alpha$-RetroRenting uses the three possible hosting levels as a function of $\alpha + g(\alpha)$ over a time-horizon of 10000 time-slots. As expected, $\alpha$-RetroRenting does not use partial hosting for $\alpha + g(\alpha) \geq 1$.

In Figures \ref{fig:4}, \ref{fig:7}, \ref{fig:16}, and \ref{fig:17}, we plot the total cost per time-slot incurred by various policies as a function of the fetch cost $M$ and request arrival probability $p$, for two cases, namely, $\alpha + g(\alpha) < 1$ and $\alpha + g(\alpha) \geq 1$. We observe a performance gap between policies that are allowed to use partial hosting ($\alpha$-RetroRenting/$\alpha$-OPT/$\alpha$-LB) and their counterparts that are not allowed to use partial hosting (RetroRenting/OPT/LB) only if $\alpha + g(\alpha) < 1$.

In the next set of results presented in Figures \ref{fig:GE1} and \ref{fig:GE2}, we compare the performance of RR and $\alpha$-RetroRenting with a third policy call multiple--RR which works on the same principle as $\alpha$-RetroRenting/RetroRenting and is allowed to use two additional intermediate hosting levels ($\alpha$, $\alpha_1$, $\alpha_2$). For this set of results, we use the Gilbert-Elliot model \cite{liu2010indexability} \color{black} for the request arrival process. The parameters of the model are as shown in Figure \ref{fig:gilbert}. The request arrival process is Bernoulli(0.9) when the Markov chain is in state 0, and Bernoulli(0.1) otherwise.  \color{black}

 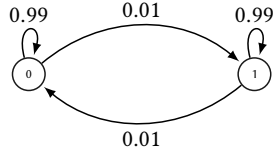
\begin{figure}[h] 
		\begin{center}
			\begin{tikzpicture}[font=\sffamily]
			\usetikzlibrary{calc,arrows.meta,positioning}
			\tikzset{node style/.style={circle,draw=black!90,thick, 
					minimum width=.4cm,
					line width=.2mm,
					fill=white!20}}
			\node[node style] at (6, 1)     (o1)     {\tiny{0}};
			\node[node style] at (9, 1)     (o2)     {\tiny{1}};
			
			\draw[every loop,
			auto=right,
			line width=0.2mm,
			>=latex,
			draw=black,
			fill=black]
		
			(o1)     edge[ bend left=40, auto=left] node {$0.01$} (o2)
			(o1)     edge[loop above] node {$0.99$} (o1)
			(o2)     edge[loop above] node {$0.99$} (o2)
			(o2)     edge[bend left=40, auto=left] node {$0.01$} (o1);

			\end{tikzpicture}
			\caption{The Gilbert-Elliot model}
			\vspace{15pt}
			\label{fig:gilbert}
		\end{center}
\end{figure}

In Figure \ref{fig:GE1}, we see that the additional intermediate hosting options available to multiple--RR brings down the cost incurred. 
In Figure \ref{fig:GE2}, we plot the hosting status $r_t$ for the three different policies as a function of time.

\subsection{Trace-driven Simulations}

\begin{figure}
\centering
\parbox{7.3cm}{
\includegraphics[width=7.3cm]{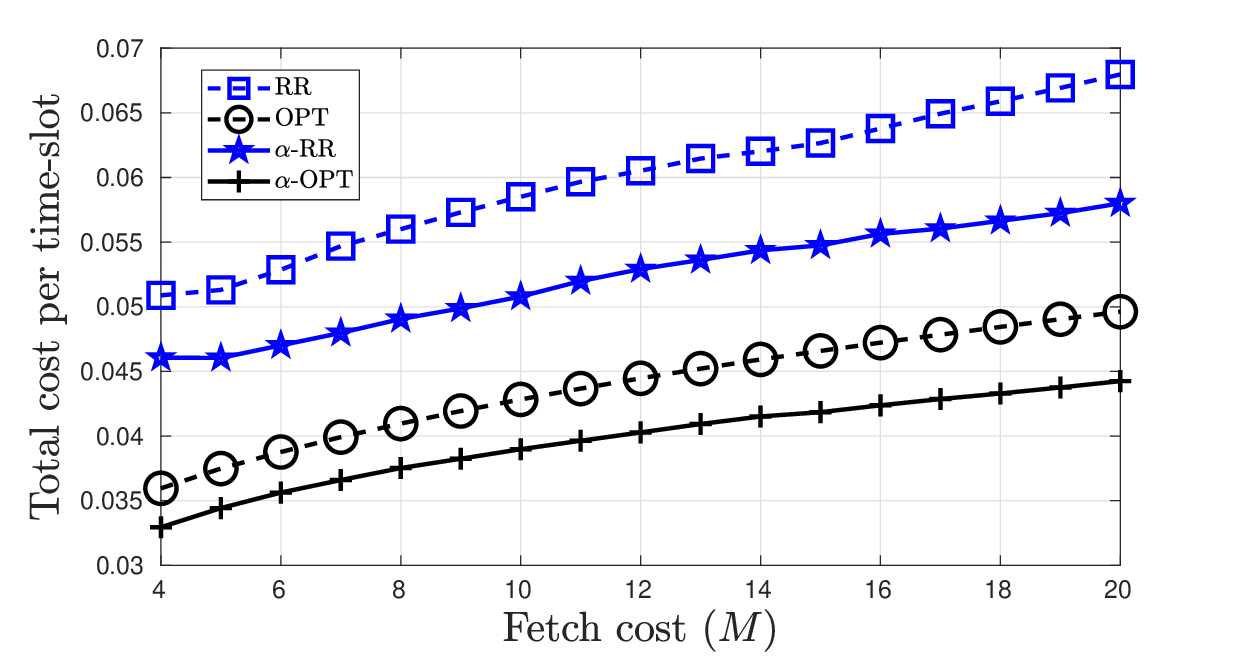}
\caption{Total cost per time slot for as a function of fetch cost ($M$) for the case when $\alpha+g(\alpha)<1$. Here $c=0.135$, $\alpha=0.239$, $g(\alpha)=0.38$}
\label{fig:j2_4}}
\quad
\begin{minipage}{7.3cm}
\includegraphics[width=7.3cm]{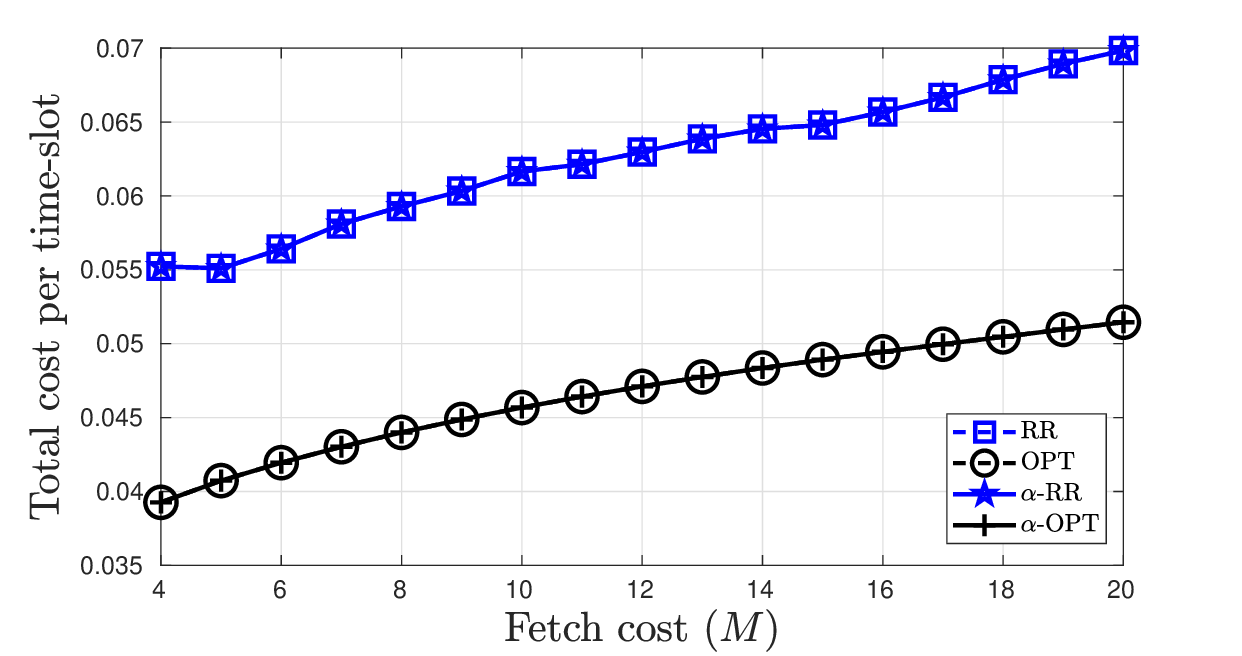}
\caption{Total cost per time slot for as a function of fetch cost ($M$) for the case when $\alpha+g(\alpha) \geq 1$. Here $c=0.135$,  $\alpha=0.5$, $g(\alpha)=0.7$}
\label{fig:j2_7}
\end{minipage}
\end{figure}


For the next set of results, 
For the next set of results, we use trace-data obtained from a Google Cluster \cite{googlecluster} for the arrival process and time-varying spot prices of spare server capacity in the  AWS cloud as given in \cite{awsprices} as the rent cost.


In Figures \ref{fig:j2_4} and \ref{fig:j2_7}, we plot the total cost per time-slot incurred by various policies as a function of the fetch cost $M$ for the case when $\alpha + g(\alpha) < 1$ and $\alpha + g(\alpha) \geq 1$ respectively. As expected, we observe a performance gap between policies that are allowed to use partial hosting ($\alpha$-RetroRenting/$\alpha$-OPT) and their counterparts that are not allowed to use partial hosting (RetroRenting/OPT) only in the first case.

\section{Simulation Results: Model 2}


\label{sec:simulation}
\color{black}
In this section, we present our simulation results for Model $2$. In addition to $\alpha$-RR, we also simulate a policy called RetroRenting (RR) which was proposed in \cite{narayana2021renting} and does not use partial hosting. RR works on the same principles as that of  $\alpha$-RR without using partial hosting.  Although our analysis for Model 2 holds under Assumption \ref{assum_one_request} and stochastic arrivals, in this section, we consider more general arrival
processes with more than one request per time-slot. We also use trace-based arrivals to compare the performance of various policies. \color{black}

\color{black}
\subsection{Synthetic Request Arrivals and Rent Costs}
We consider two types of synthetic request arrivals, namely, Poisson and Markovian. We model the time-varying rent cost sequence for both kinds of arrivals using the Autoregressive moving-average (ARMA) model \cite{box2011time}, specifically ARMA(4,2) with expected rent cost in a slot given as $c$. To choose the parameters of the model, we fit the model using the process described in \cite{Narayana2021OnlinePS} to real world price data obtained from \cite{awsprices}, which provides region-wise prices of unused EC2 capacity in the Amazon Web Services (AWS) cloud.
\subsubsection{Poisson Request Arrivals}
The first set of results are for i.i.d. Poisson request arrivals with parameter $\lambda$ over 10,000 time slots. We use synthetic values for the available partial storage size $\alpha$ and corresponding forwarding cost $g(\alpha)$ for this first set of simulations, as given in the figure captions. In addition to $\alpha$-RR and RR, we also plot the lower bound on the performance of any deterministic online policy ($\alpha$-LB) and a lower bound on the performance of any deterministic online policy which is not allowed to partially host the service (LB). The expressions for these lower bounds can be found in \cite[Lemma 14]{RRtompecs} and Lemma \ref{lemma:optimal_causal} in the appendix respectively. 

\begin{figure}
\centering
\parbox{7.3cm}{
\includegraphics[width=7.3cm]{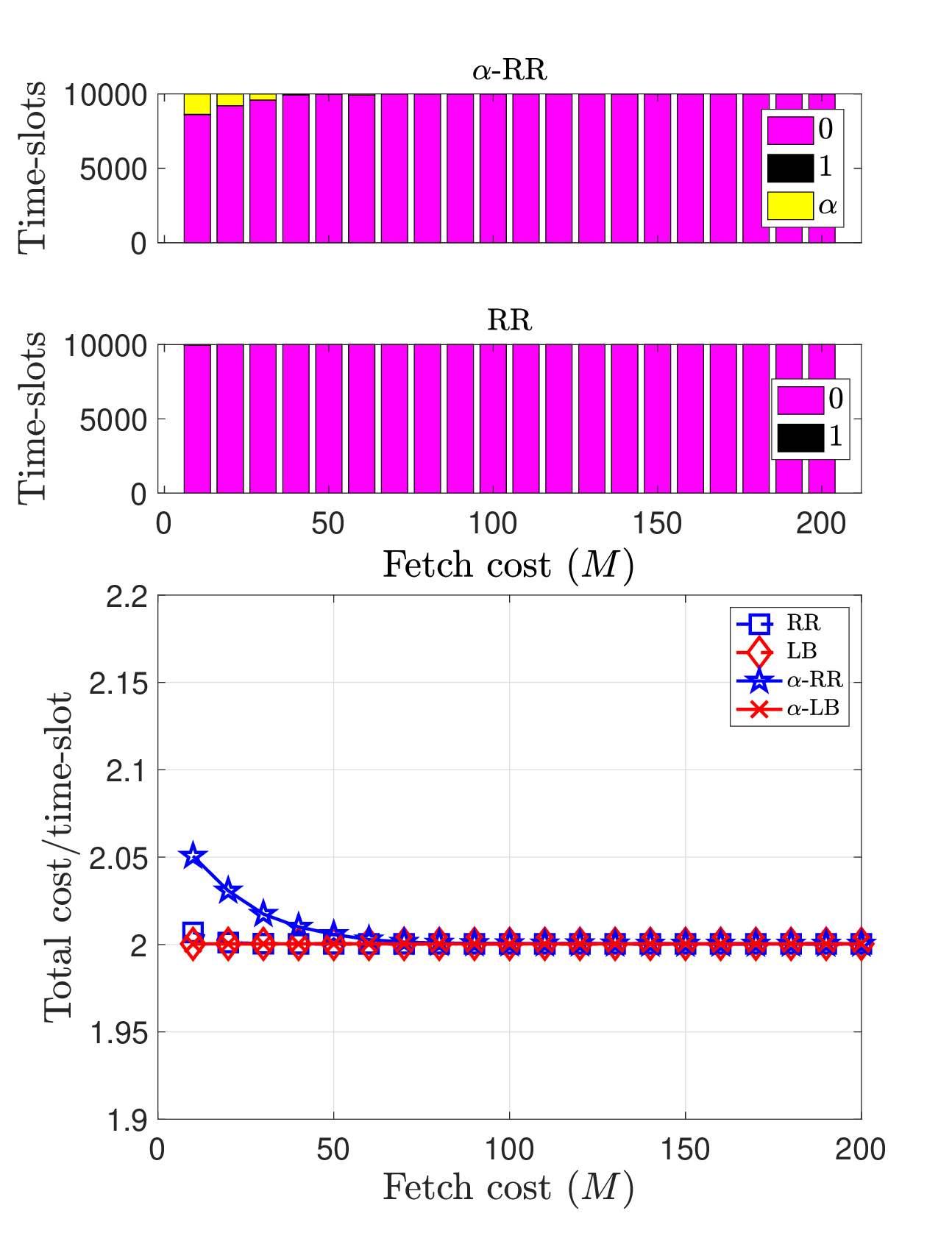}
\caption{Histogram of hosting status \& total cost per time slot as a function of fetch cost ($M$) for the case when $c=4.5$, $\alpha=0.30$, $g(\alpha)=0.50$, and $\lambda=2$}
\label{fig:b1}}
\quad
\begin{minipage}{7.3cm}
\includegraphics[width=7.3cm]{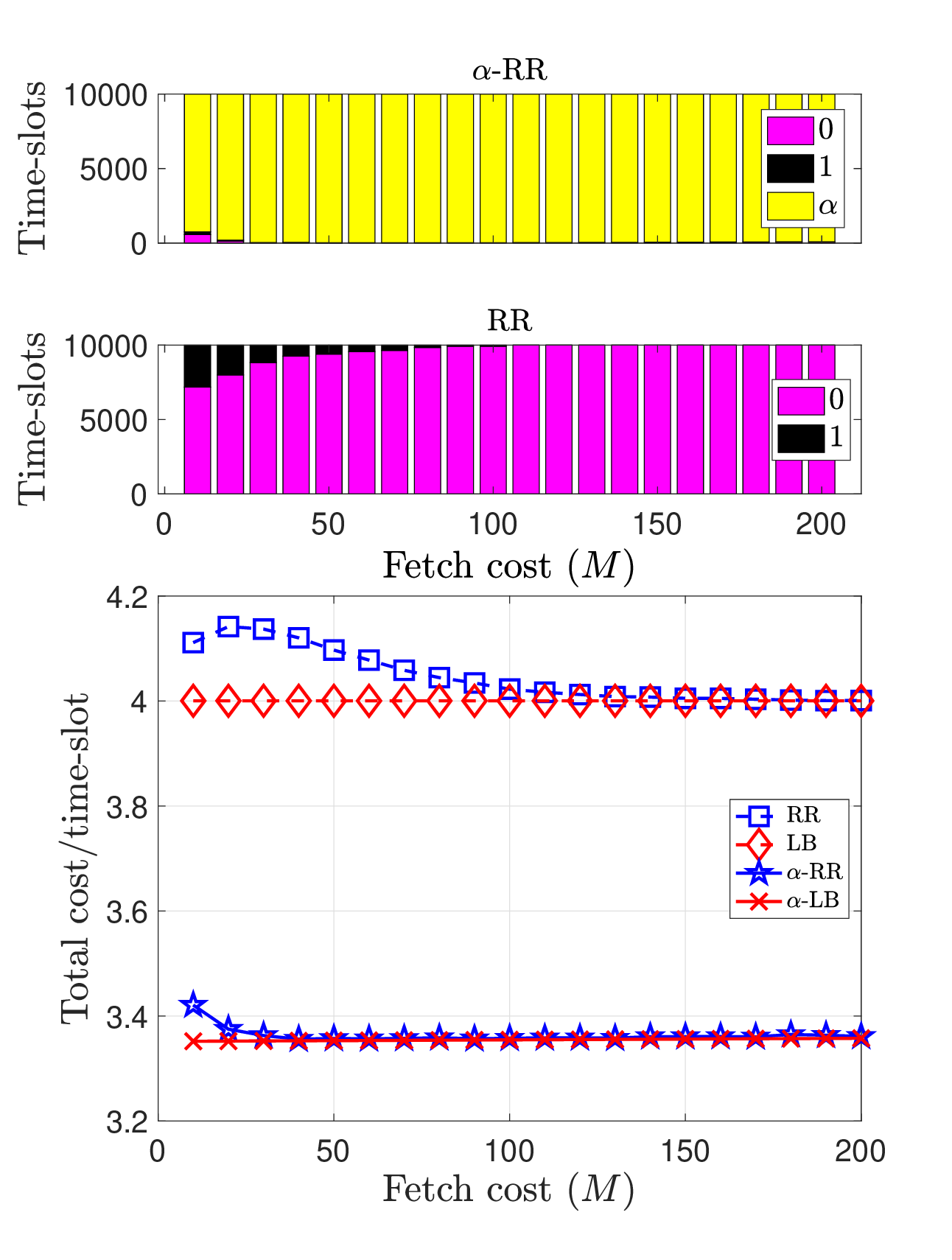}
\caption{Histogram of hosting status \& total cost per time slot as a function of fetch cost ($M$) for the case when $c=4.5$, $\alpha=0.30$, $g(\alpha)=0.50$, and $\lambda=4$}
\label{fig:b3}
\end{minipage}
\end{figure}



\begin{figure}
\centering
\parbox{7.3cm}{
\includegraphics[width=7.3cm]{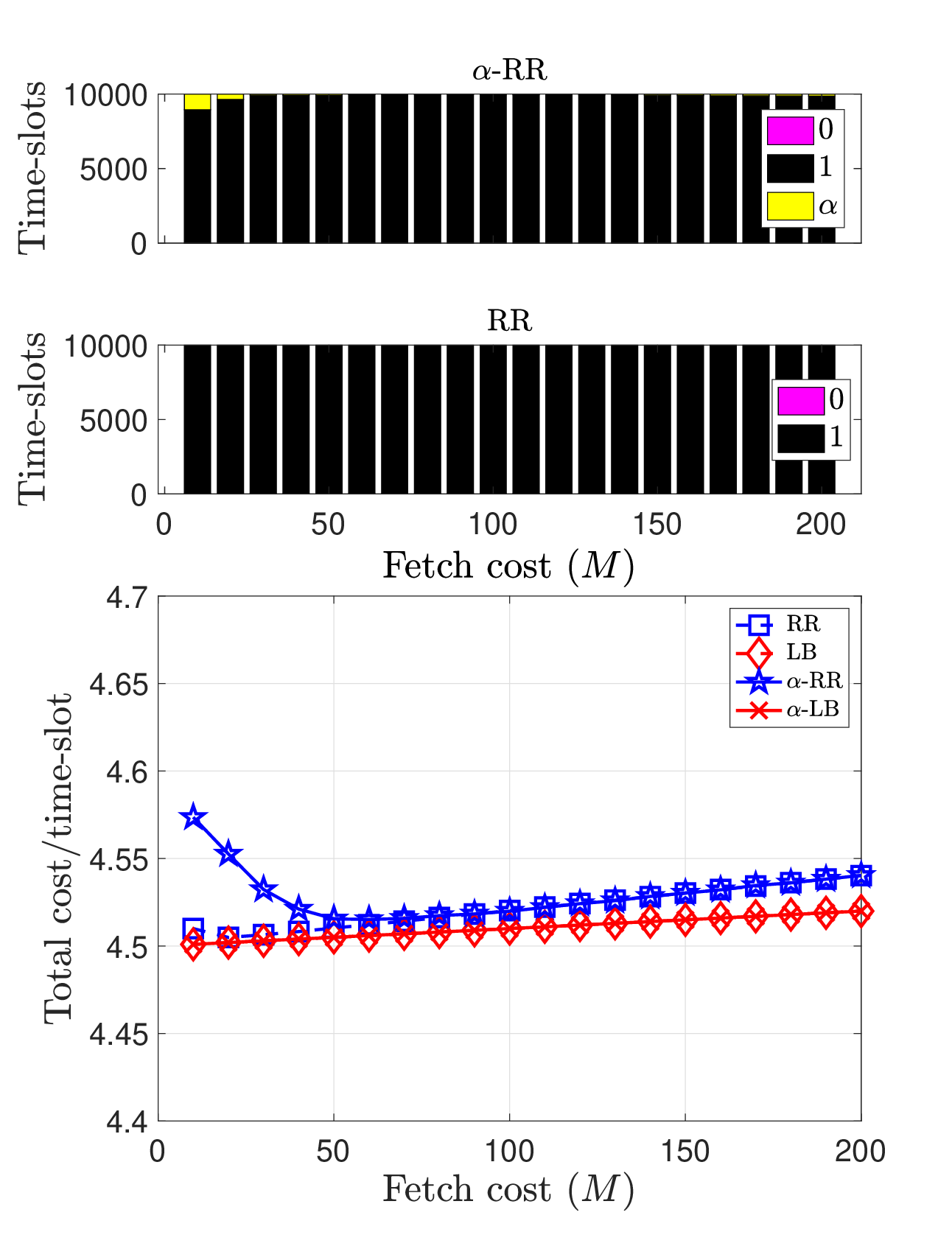}
\caption{Histogram of hosting status \& total cost per time slot as a function of fetch cost ($M$) for the case when $c=4.5$, $\alpha=0.30$, $g(\alpha)=0.50$, and $\lambda=8$ }
\label{fig:b5}}
\quad
\begin{minipage}{7.3cm}
\includegraphics[width=7.3cm]{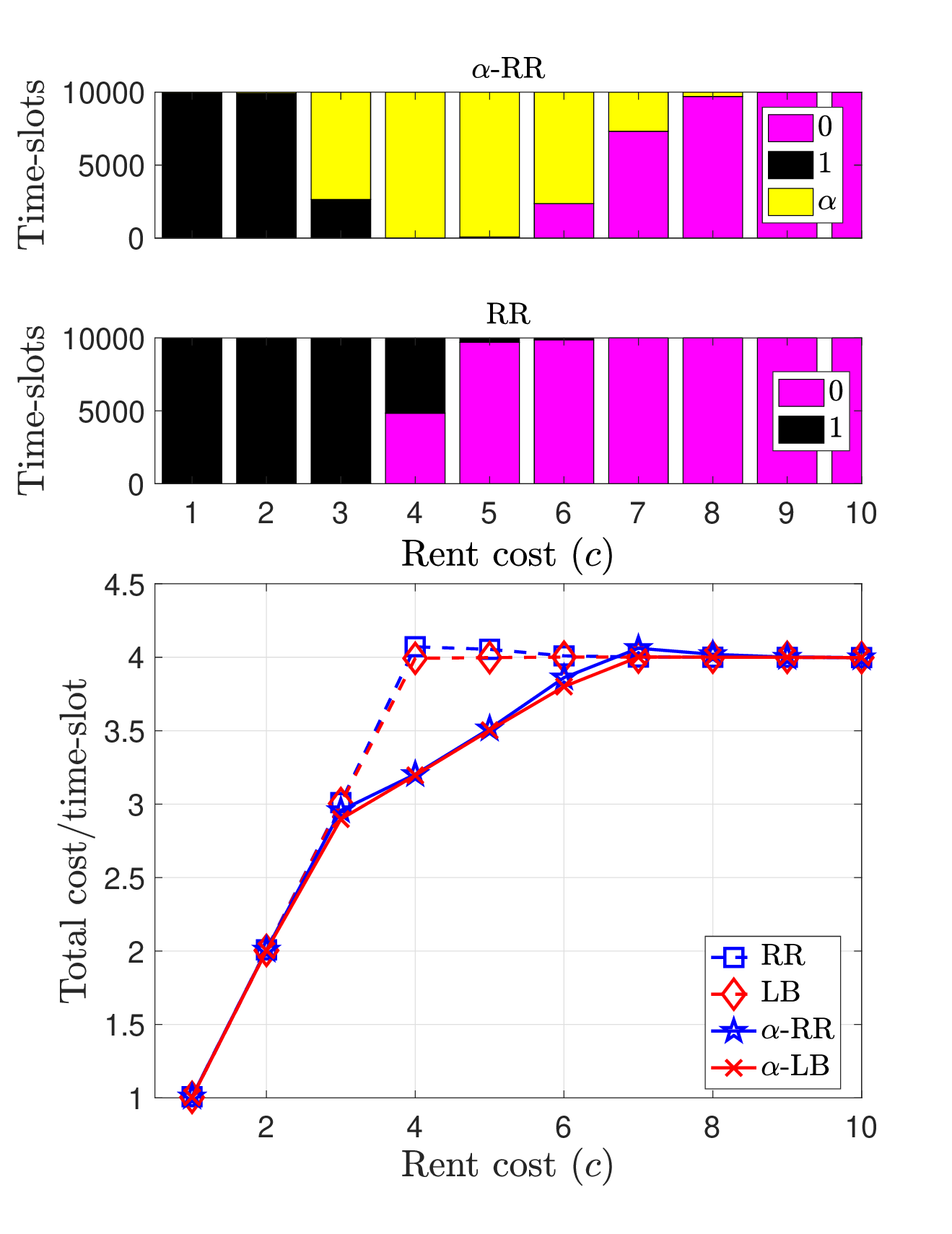}
\caption{Histogram of hosting status \& total cost per time slot as a function of rent cost ($c$) for the case $M=40$, $\alpha=0.30$, $g(\alpha)=0.50$, and $\lambda=4$}
\label{fig:b7}
\end{minipage}
\end{figure}


In Figures \ref{fig:b1}-\ref{fig:b5}, we first plot the histogram of hosting status and then the total cost per time slot under the two policies as a function of the fetch cost ($M$), for different values of request arrival intensity ($\lambda$). 

From the first two plots of Figure \ref{fig:b1}, we observe that RR does not host the service at all, while $\alpha$-RR also does the same except when $M$ is very small when it uses partial storage, albeit very rarely. Both policies lean towards not hosting the service as the average number of arrivals per time-slot $\lambda$ is smaller than the average rent cost $c$ here, and thus it is cost effective to serve requests via the cloud instead of incurring rent cost. Here, the lower bounds of online policies, namely, LB and $\alpha$-LB, coincide with each other. Also, as $M$ increases, the performance of RR and $\alpha$-RR approaches the lower bound.
In Figure \ref{fig:b3}, we note that the average number of requests and average rent cost is comparable, and here $\alpha$-RR hosts $\alpha$ fraction of the service at all times while RR hosts the entire service. We note that $\alpha$-RR outperforms RR for smaller values of $M$, while they have similar costs for larger values of $M$.
Figure \ref{fig:b5} considers a larger value of $\lambda$ where hosting the entire service is optimal, and here both the policies host the entire service at all times for all values of $M$ considered.


In Figure \ref{fig:b7}, we first plot the histogram of hosting status and then the total cost per time slot under the two policies as a function of the rent cost ($c$), for a fixed value of request arrival intensity ($\lambda$). 
From the first two plots in Figure \ref{fig:b7}, we observe that for low values of $c$, $\alpha$-RR and RR host the entire service in all time-slots. For $3\leq c \leq 7$, $\alpha$-RR hosts $\alpha$ fraction of the service for most of the time. For higher values of the rent cost ($c \geq 8$), $\alpha$-RR and RR both do not host the service. We note that $\alpha$-RR outperforms RR for values of $c$ close to average arrival rate ($\lambda$) and the performance of $\alpha$-RR and RR is comparable for low and large $c$. This is also consistent with what we observe in Figures \ref{fig:b1}-\ref{fig:b5}.

\subsubsection{Markovian Request Arrivals}

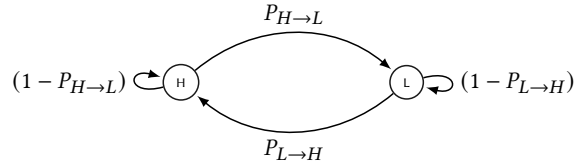
\begin{figure}[h] 
		\begin{center}
			\begin{tikzpicture}[font=\sffamily]
			\usetikzlibrary{calc,arrows.meta,positioning}
			\tikzset{node style/.style={circle,draw=black!90,thick, 
					minimum width=.4cm,
					line width=.2mm,
					fill=white!20}}
			\node[node style] at (6, 1)     (o1)     {\tiny{H}};
			\node[node style] at (9, 1)     (o2)     {\tiny{L}};
			
			\draw[every loop,
			auto=right,
			line width=0.2mm,
			>=latex,
			draw=black,
			fill=black]
		
			(o1)     edge[ bend left=40, auto=left] node {${P}_{H\rightarrow L}$} (o2)
			(o1)     edge[loop left] node {$(1-{P}_{H\rightarrow L})$} (o1)
			(o2)     edge[loop right] node {$(1-{P}_{L\rightarrow H})$} (o2)
			(o2)     edge[bend left=40, auto=left] node {${P}_{L\rightarrow H}$} (o1);

			\end{tikzpicture}
			\caption{The Gilbert-Elliot model}
			\vspace{15pt}
			\label{fig:gilbert}
		\end{center}
\end{figure}

For the next set of results, we use the Gilbert-Elliot model \cite{liu2010indexability} \color{black} for the request arrival process. The parameters of the model are as shown in Figure \ref{fig:gilbert}. The request arrival process is Poisson(200) when the Markov chain is in high state (\textit{H}), and Poisson(10) in the low state (\textit{L}). The values of partial storage $\alpha$ and forwarding cost $g(\alpha)$ used here are chosen based on a curve derived from a real dataset (see Figure \ref{fig:789}); details are presented in Section~\ref{derived galpha}. From Figure \ref{fig:789}, we see when $\alpha=0.16$, forward cost $g(0.16)$ is $0.76$ which is the minimum $\alpha + g(\alpha)$ pair among all pairs.

In the following series of experiments, we vary the transition probabilities for the request arrival process and compare the performance of various policies. We also compare the performance of RR and $\alpha$-RR with two other policies that know the statistics of the request arrivals and the expected rent cost $c$.

The first policy, referred to as the MDP policy, formulates the hosting problem as a Markov Decision Process. The second policy called Arrival Based Caching (ABC) proposed in \cite{ABC} makes hosting decisions based only on the request arrival rate in the current time-slot and the statistics of the arrival process. 

\begin{figure}
\centering
\parbox{7.3cm}{
\vspace{-0.3cm}
\includegraphics[width=7.3cm]{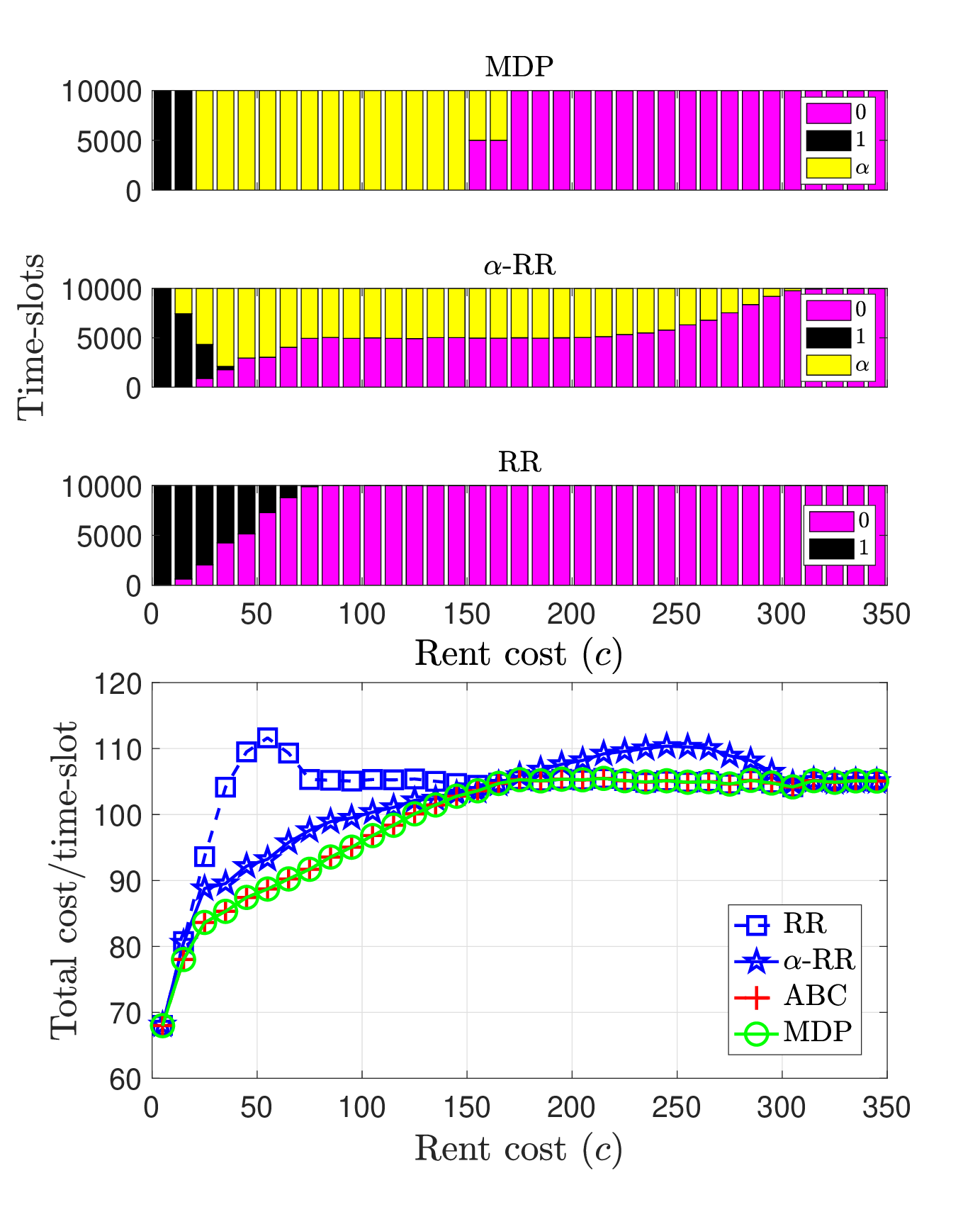}
\caption{Histogram of hosting status \& total cost per time slot as a function of rent cost ($c$) for the case when $\alpha=0.16$, $g(\alpha)=0.76$, $\alpha_1 = 1$, $g(\alpha_1) = 0.6$, $M = 50$ and ${P}_{H\rightarrow L} = 0.4$, ${P}_{L\rightarrow H} = 0.4$}
\label{fig:m1}}
\quad
\begin{minipage}{7.3cm}
\includegraphics[width=7.3cm]{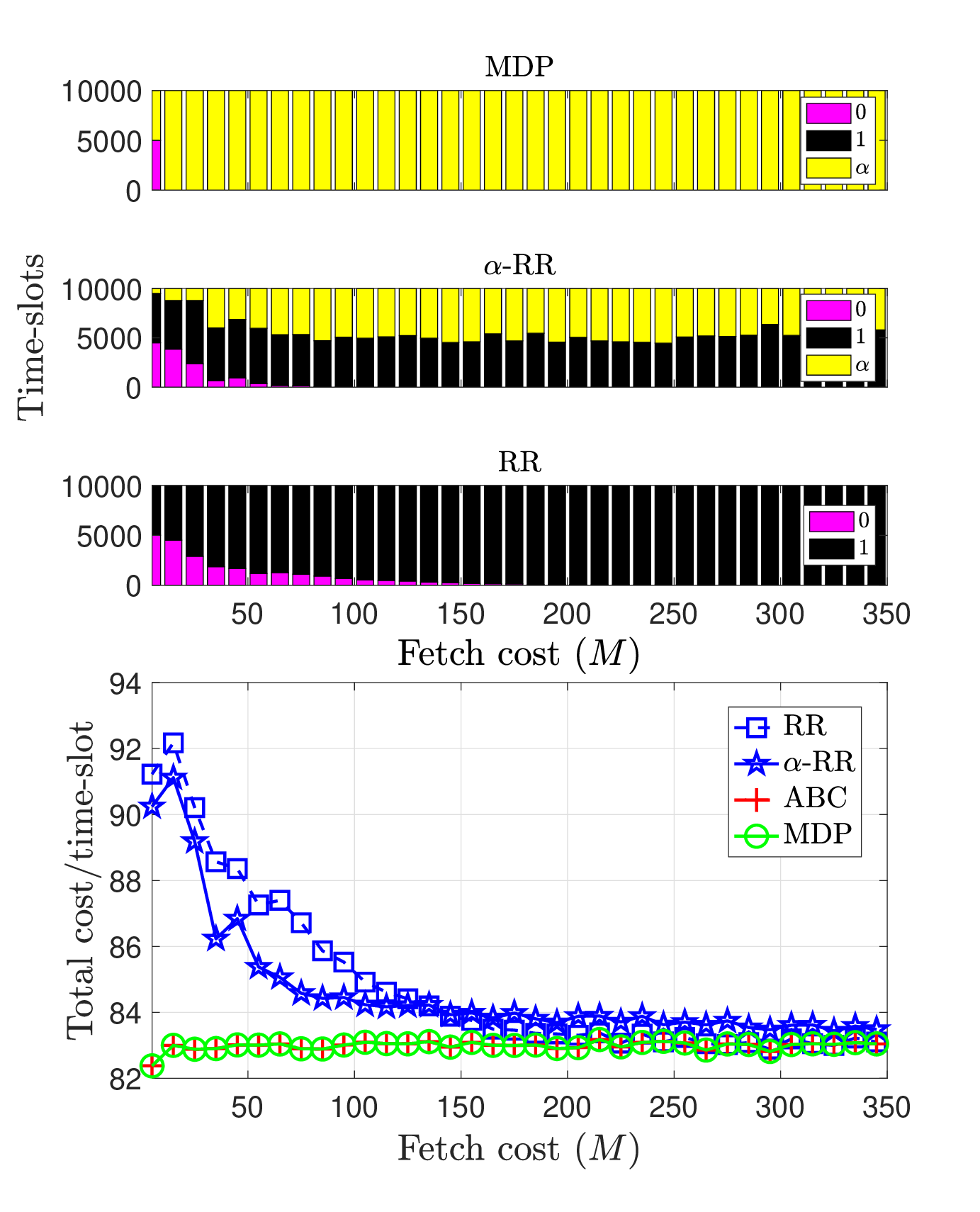}
\caption{Histogram of hosting status \& total cost per time slot as a function of fetch cost ($M$) for the case when $\alpha=0.16$, $g(\alpha)=0.76$, $\alpha_1 = 1$, $g(\alpha_1) = 0.6$, $c = 20$, and ${P}_{H\rightarrow L} = 0.4$, ${P}_{L\rightarrow H} = 0.4$}
\label{fig:m3}
\end{minipage}
\end{figure}

	


In the first three plots in Figure \ref{fig:m1}, we plot the histogram of the hosting status under the three policies as a function of the rent cost $c$. Here, the average number of requests per time slot is $105$ and transition probability ${P}_{H\rightarrow L}$ is same as ${P}_{L\rightarrow H}$ which is $0.4$. We observe that for low values of rent cost ($c \leq 20$), MDP hosts the entire service in all time-slots. For higher values of rent cost $20 < c \leq 150$, MDP hosts $\alpha$ fraction of the service in all time-slots. For $160 \leq c \leq 170$, MDP varies between hosting $\alpha$ fraction of the service and not hosting the service. For even higher values of rent cost ($c > 170$), MDP does not host the service. These trends are explained by the fact that as the expected rent cost $c$ increases, the cost incurred by serving requests via the cloud becomes lower than the cost of renting edge resources to host the service at the edge. The hosting status under $\alpha$-RR has a similar trend; however, the range of values of rent cost over which $\alpha$-RR hosts at least a part of the service is larger than MDP. RR hosts the entire service for low values of rent cost and does not host the service when rent cost $c$ is above a threshold, $c=80$. For intermediate values of rent cost, it varies between hosting the entire service and not hosting the service. As a consequence of the hosting status under the three policies, the total cost incurred under all three policies is very close for very low and very high values of rent cost. For rent costs, $c<180$, $\alpha$-RR outperforms RR and RR outperforms $\alpha$-RR for rent cost $c$ in the range $[180, 310]$.

Similarly, in the first three plots in Figure \ref{fig:m3}, we plot the histogram of hosting status under the three policies as a function of the fetch cost $M$. 
We observe that for most values of fetch cost $M$, MDP hosts $\alpha$ fraction of the service in all time-slots. 
Under $\alpha$-RR the hosting status varies between $\alpha$ fraction of the service and the entire service. We observe that $\alpha$-RR outperforms RR for low values of $M$ and their performance converges as $M$ increases. 

\begin{figure}
\centering
\parbox{7.3cm}{
\vspace{-0.3cm}
\includegraphics[width=7.3cm]{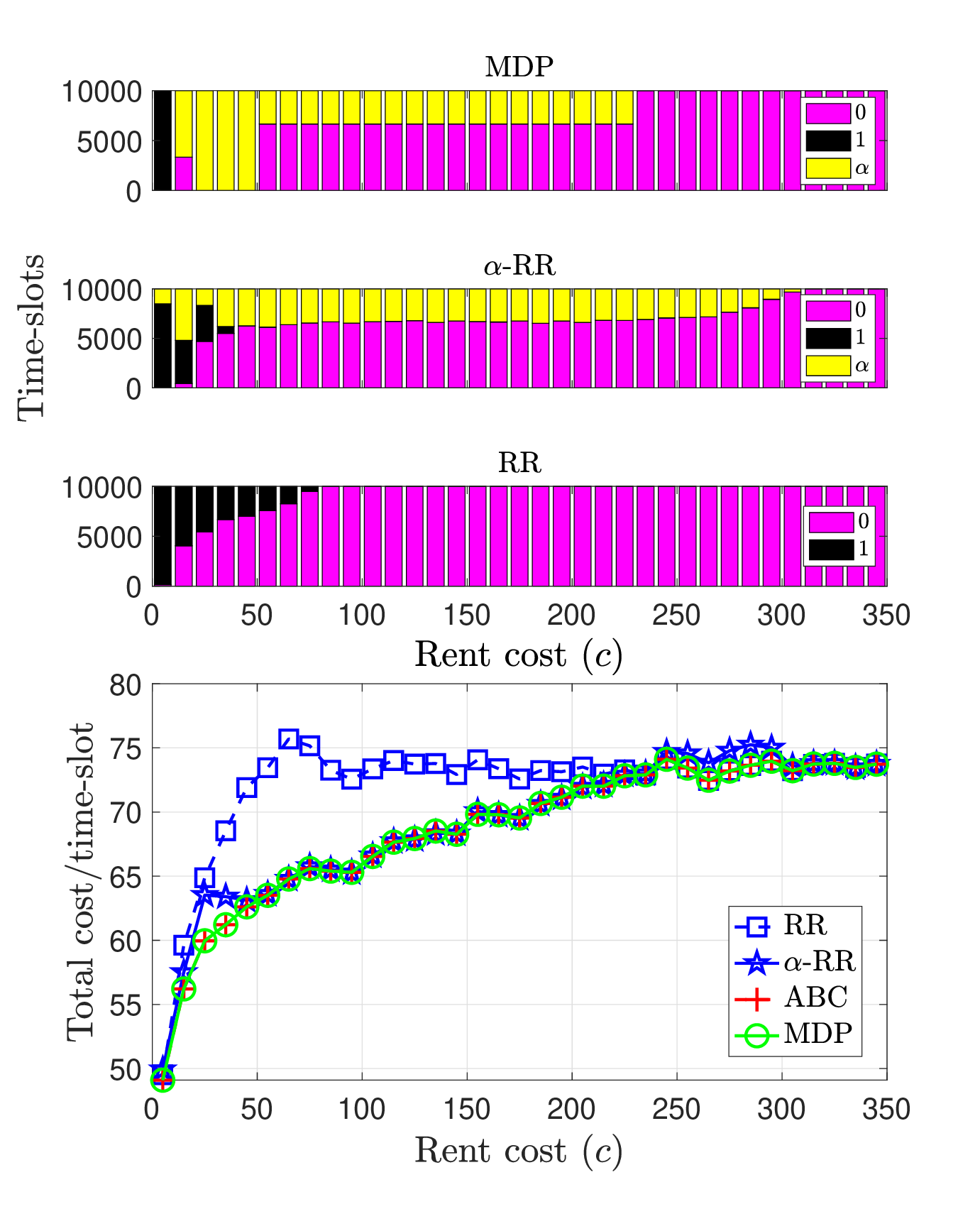}
\caption{Histogram of hosting status \& total cost per time slot as a function of rent cost ($c$) for the case when $\alpha=0.16$, $g(\alpha)=0.76$, $\alpha_1 = 1$, $g(\alpha_1) = 0.6$, $M = 50$ and ${P}_{H\rightarrow L} = 0.2$, ${P}_{L\rightarrow H} = 0.1$}
\label{fig:m5}}
\quad
\begin{minipage}{7.3cm}
\includegraphics[width=7.3cm]{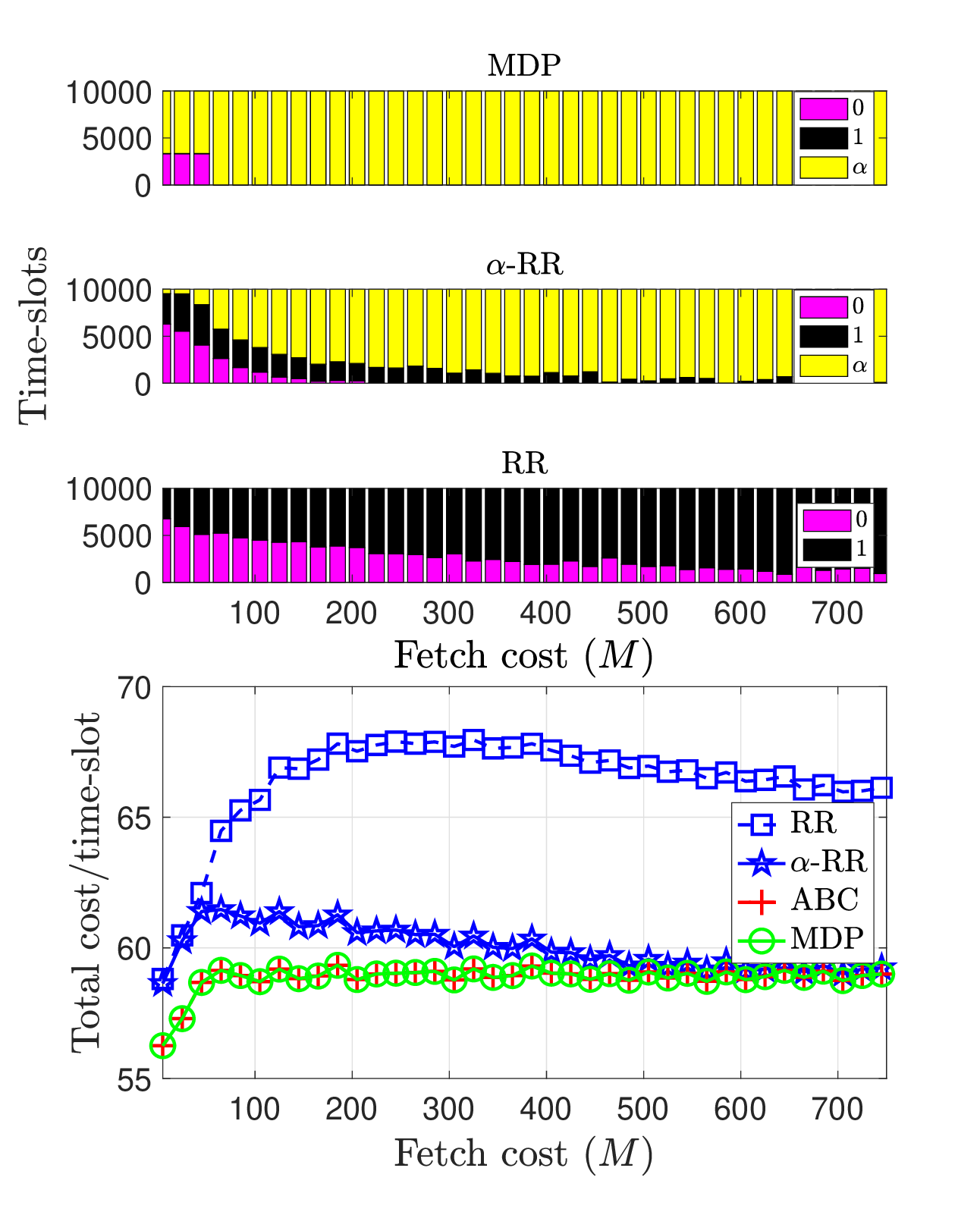}
\caption{Histogram of hosting status \& total cost per time slot as a function of fetch cost ($M$) for the case when $\alpha=0.16$, $g(\alpha)=0.76$, $\alpha_1 = 1$, $g(\alpha_1) = 0.6$, $c = 20$, and ${P}_{H\rightarrow L} = 0.2$, ${P}_{L\rightarrow H} = 0.1$}
\label{fig:m7}
\end{minipage}
\end{figure}



Unlike Figures~\ref{fig:m1} and \ref{fig:m3}, we next consider lower values of transition probabilities for the request arrival process, and again compare the performance of the three policies as a function of the rent cost $c$. For Figures \ref{fig:m5} and \ref{fig:m7}, the average number of requests per time slot is $73.33$ and transition probability ${P}_{H\rightarrow L}$ is $0.2$ whereas ${P}_{L\rightarrow H}$ is $0.1$. From Figure \ref{fig:m5}, we observe that $\alpha$-RR outperforms RR for most values of $c$. For very high values of $c$, the total cost incurred under all policies is close. We thus note that even though that MDP and ABC know the statistics of the arrival process while $\alpha$-RR does not have that information, the performance of $\alpha$-RR is comparable with MDP and ABC.  We note a similar trend in Figure \ref{fig:m7}, with RR performing poorly for all values of $M$ considered.

\begin{figure}
\centering
\parbox{7.3cm}{
\vspace{-0.3cm}
\includegraphics[width=7.3cm]{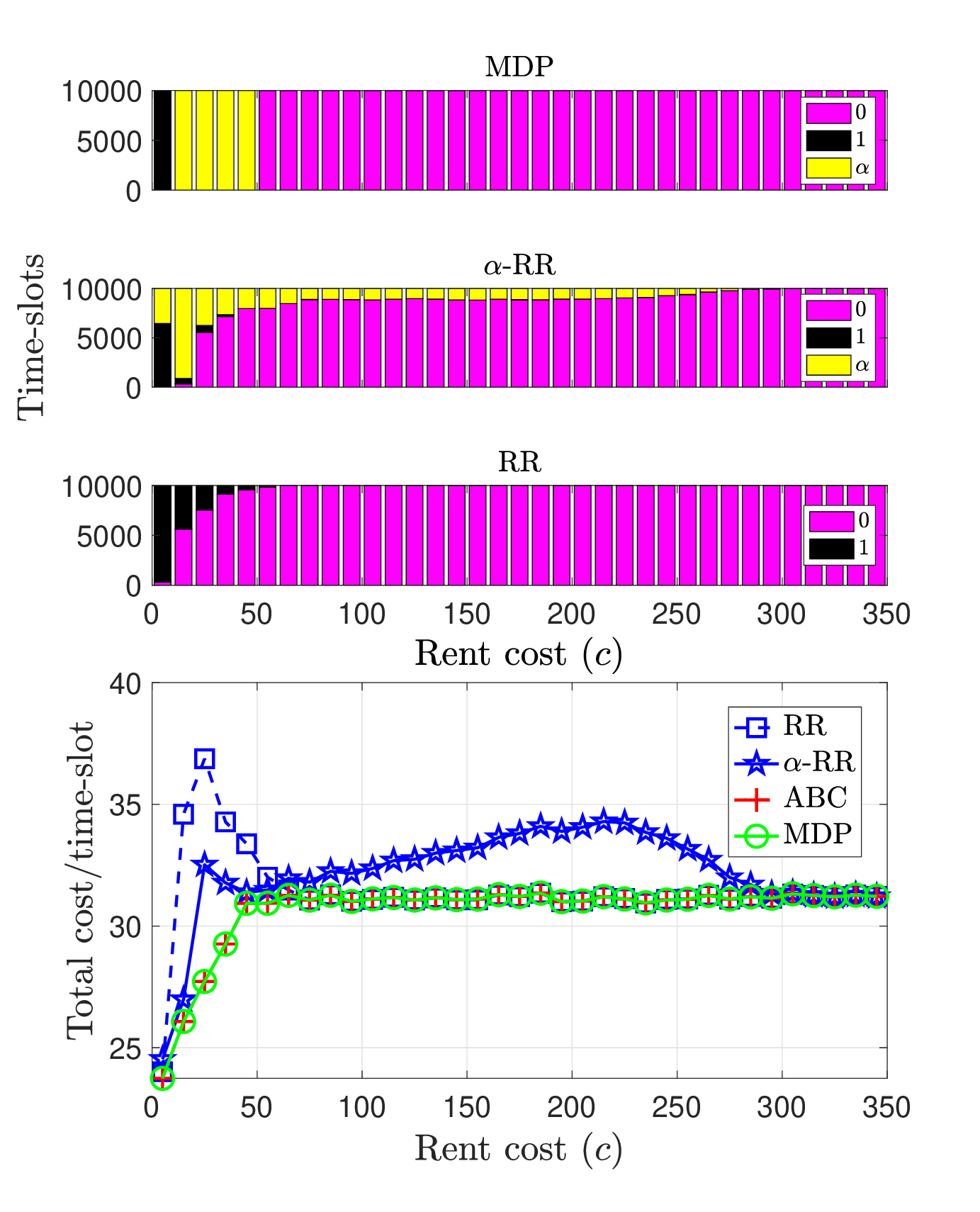}
\caption{Histogram of hosting status \& total cost per time slot as a function of rent cost ($c$) for the case when $\alpha=0.16$, $g(\alpha)=0.76$, $\alpha_1 = 1$, $g(\alpha_1) = 0.6$, $M = 50$ and ${P}_{H\rightarrow L} = 0.8$, ${P}_{L\rightarrow H} = 0.1$}
\label{fig:m9}}
\quad
\begin{minipage}{7.3cm}
\includegraphics[width=7.3cm]{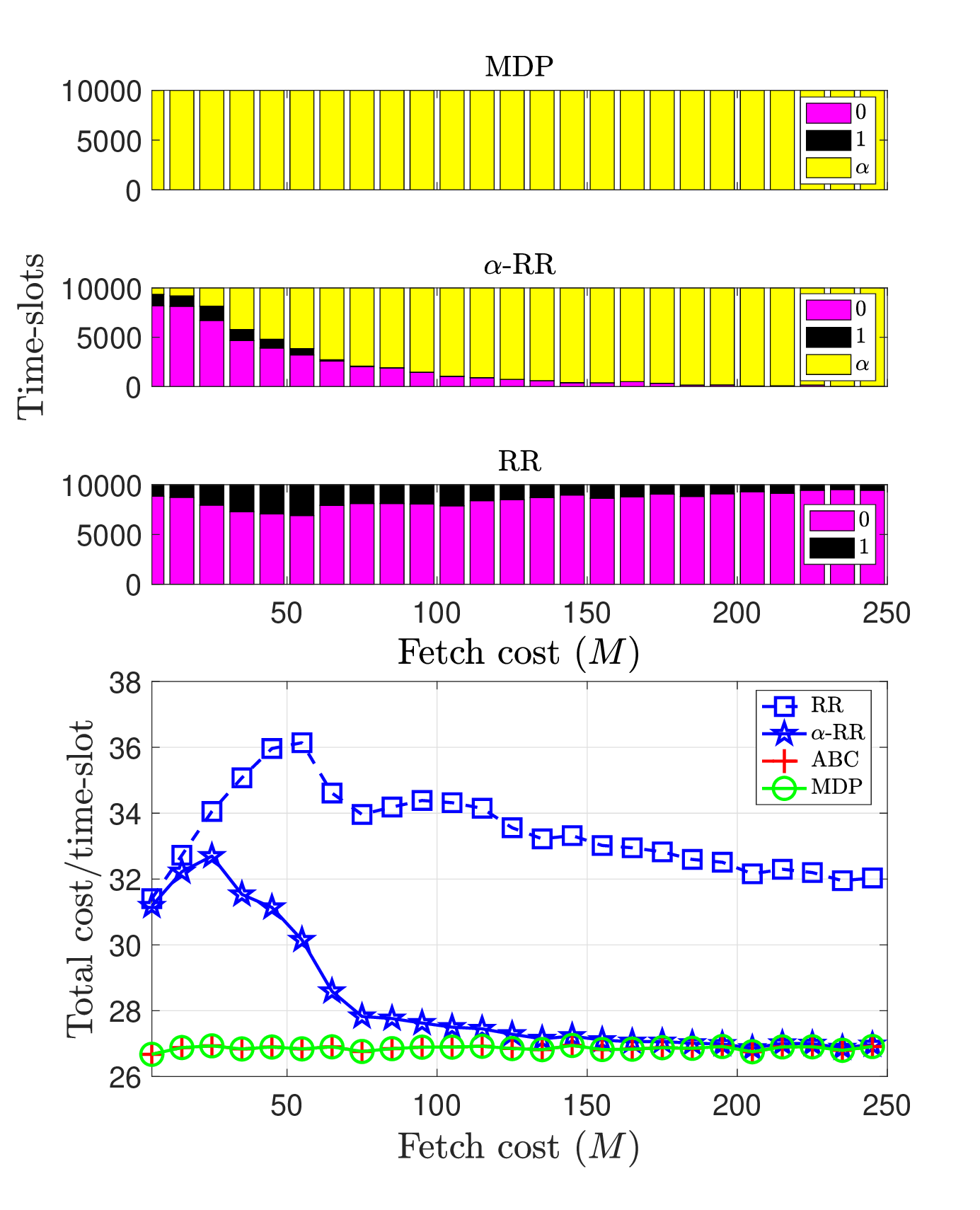}
\caption{Histogram of hosting status \& total cost per time slot as a function of fetch cost ($M$) for the case when $\alpha=0.16$, $g(\alpha)=0.76$, $\alpha_1 = 1$, $g(\alpha_1) = 0.6$, $c = 20$, and ${P}_{H\rightarrow L} = 0.8$, ${P}_{L\rightarrow H} = 0.1$}
\label{fig:m11}
\end{minipage}
\end{figure}

	

	
In the last set of  Figures \ref{fig:m9} and \ref{fig:m11},  we consider a request arrival process with very asymmetric transition probabilities. The average number of requests per time slot is $31.11$ and transition probability ${P}_{H\rightarrow L}$ is $0.8$ whereas ${P}_{L\rightarrow H}$ is $0.1$. Here, the underlying Markov chain spends most of its time in the low state.
Similar to previous experiments, from Figure \ref{fig:m9}, we conclude that for low values of rent cost $c$, $\alpha$-RR outperforms RR. For intermediate values of $c$, RR performs better than $\alpha$-RR  and for high values of $c$, the total cost incurred under all policies is close. For the same request arrival process, we plot the total cost vs $M$ in Figure \ref{fig:m11}, and find that $\alpha$-RR outperforms RR for all values of $M$ considered.


\subsection{Simulations on GPS Trajectory Dataset}\label{derived galpha}
In the next set of simulations, we use a real dataset to construct the forwarding cost function $g(\cdot)$ for a shortest path query system. In particular, we use a GPS trajectory dataset \cite{zheng2008understanding}\cite{zheng2009mining}\cite{zheng2010geolife} collected as a part of the Geolife Project by Microsoft Research Asia. This dataset contains 17,621 cab trajectories from Beijing, China, compiled over five years from 2007-12. Using this data, we model a service which when queried with a source-destination pair, returns the shortest path corresponding to that input. As before the goal is to host this service in the most cost-efficient way. Next, we describe how the edge cache size is measured and our algorithm to decide what goes in the cache.

The complete library at the server is considered to be of size $N \times P$ where $N$ is the total number of queries and $P$ is the total number of nodes in the shortest path corresponding to all $N$ queries. We host a fraction of this dataset in our locally available cache to assist with answering future queries. The cache size is measured by the total number of nodes stored across all the cached trajectories.

We extract each trajectory's start and end locations as the source and destination nodes of the shortest path query, respectively. So each query is a request to the system which returns the shortest distance from source to destination. To find the trajectory between each source-destination pair, we used Road Network data\cite{10.1145/2213836.2213872} of Beijing city. Using Dijkstra's shortest path algorithm\cite{10.1145/146585.146588}, we generated a sequence of nodes corresponding to the shortest path for each query.

We use the query data from the first three years of the GPS trajectory dataset \cite{zheng2008understanding}\cite{zheng2009mining}\cite{zheng2010geolife} to decide which paths should be hosted at the edge, given a certain cache size. We use a greedy strategy, namely the fractional knapsack\cite{fractionalknapsack} algorithm, and to enable this we sort all the paths according to their `normalized hit rate'. For any path hosted at the edge, we assume that it will result in a `hit' whenever a query has both its source and destination nodes on the path. Then, the normalized hit rate for a path is defined as the ratio of the number of hits to the total count of nodes forming that path. A path with high normalized hit rate can serve more queries per unit used cache size, if hosted at the edge. Finally, given any cache size, the fractional knapsack algorithm chooses to greedily host paths with the highest normalized hit rates. 

Next, for any given cache size and the corresponding subset of paths hosted at the edge as describe above, we use the query data of the fourth year from the GPS trajectory dataset to estimate the fraction of queries that can be answered at the edge. Figure \ref{fig:789} plots the fraction of test queries served by the edge cache vs the available cache size. Here, we represent the cache size as a fraction, defined as the ratio of the number of nodes corresponding to the cached paths and the total number of nodes involved in all paths that appear in the first three years of the GPS trajectory dataset. For any fraction $\alpha$, the value on the curve\footnote{Note that the fraction of queries answered is less than $1$ even when the cache size is large enough to store all the paths. This is because the cache content is decided based on queries from the first three years, whereas the fraction of queries answered is decided based on the data from the fourth year wherein new paths were queried whose source-destination pairs did not lie on any of the previously seen paths.} is used as a proxy for $1-g(\alpha)$. More details about the above procedure are available in \cite{GpsRoadNetworkAnalysis}.

\begin{figure}
\centering
\parbox{7.3cm}{
\includegraphics[width=7.3cm]{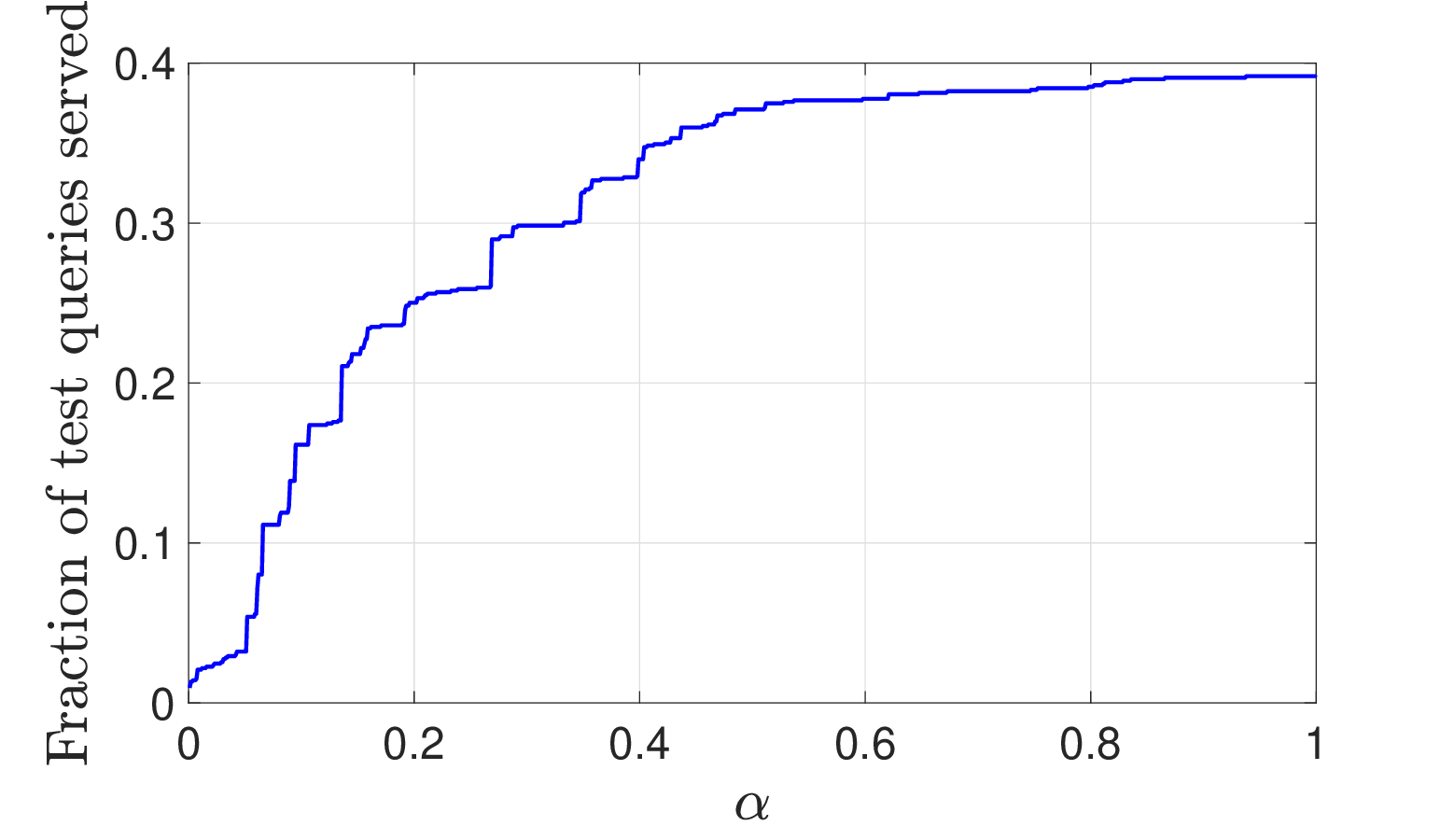}
\caption{Percentage of queries answered by the fraction of cache stored at the edge}
\label{fig:789}}
\quad
\begin{minipage}{7.3cm}
\includegraphics[width=7.3cm]{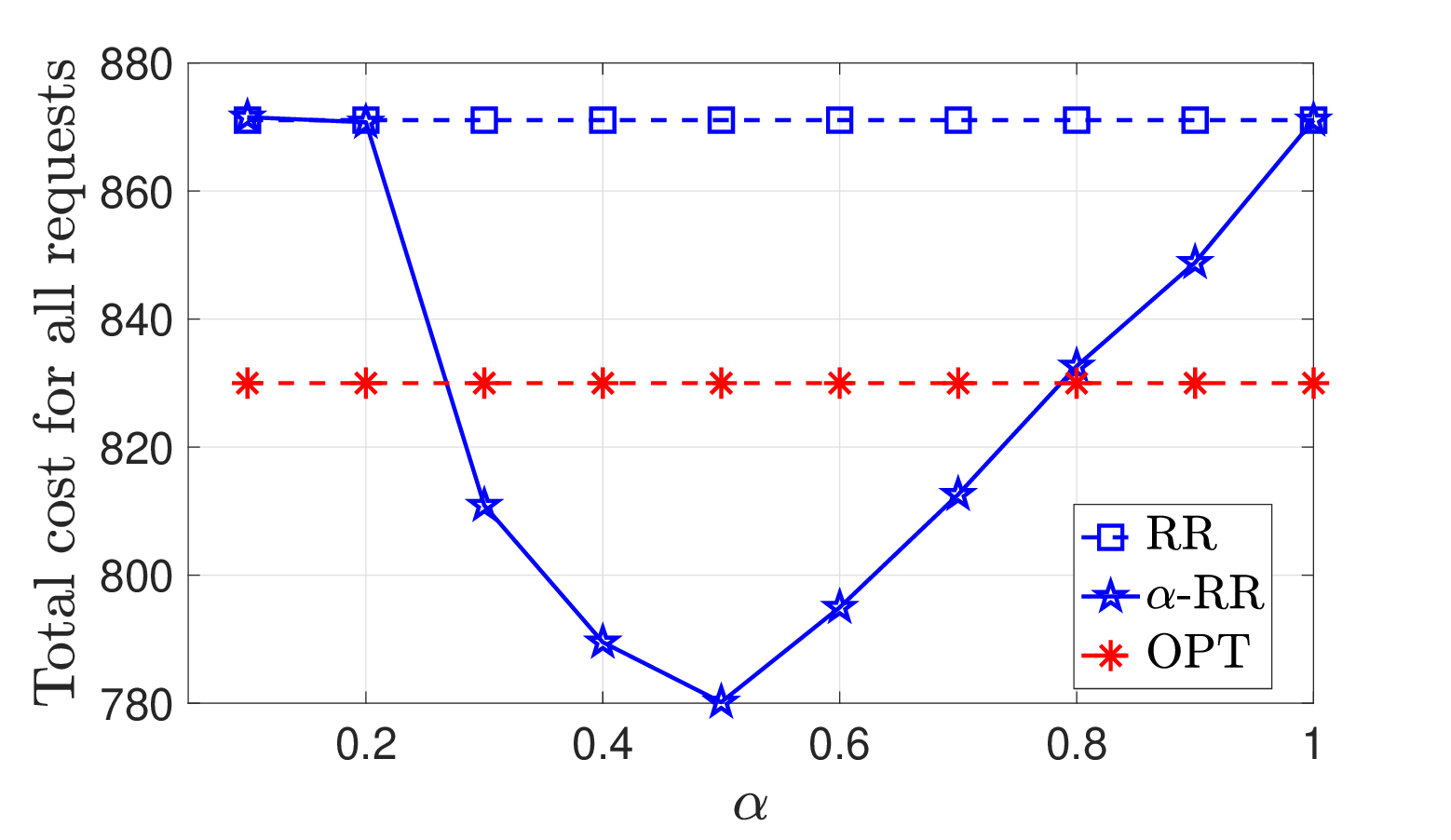}
\caption{Total cost of all requests as a function of cache stored when $c = 0.4$}
\label{fig:costvsalpha}
\end{minipage}
\end{figure}


We now compare the performance of RR and $\alpha$-RR policies on the queries from the fourth year of the GPS trajectory dataset. In Figure \ref{fig:costvsalpha}, we plot the total cost of all requests as a function of cache capacity $\alpha$ with $c = 0.4$. We observe that the cost for $\alpha$-RR is minimum when $\alpha$ is 0.5.
We use this cache size and rent cost $c = 0.4$ in Figure \ref{fig:costvsm}, where we plot the total cost per time slot as a function of fetch cost ($M$). We observe that $\alpha$-RR performs much better than RR, thus highlighting the potential cost benefits that can be derived by efficiently utilizing partial storage. Moreover, it even outperforms offline optimal policy without partial storage (OPT).


\begin{figure}
	\begin{center}
		\includegraphics[width=0.5\columnwidth]{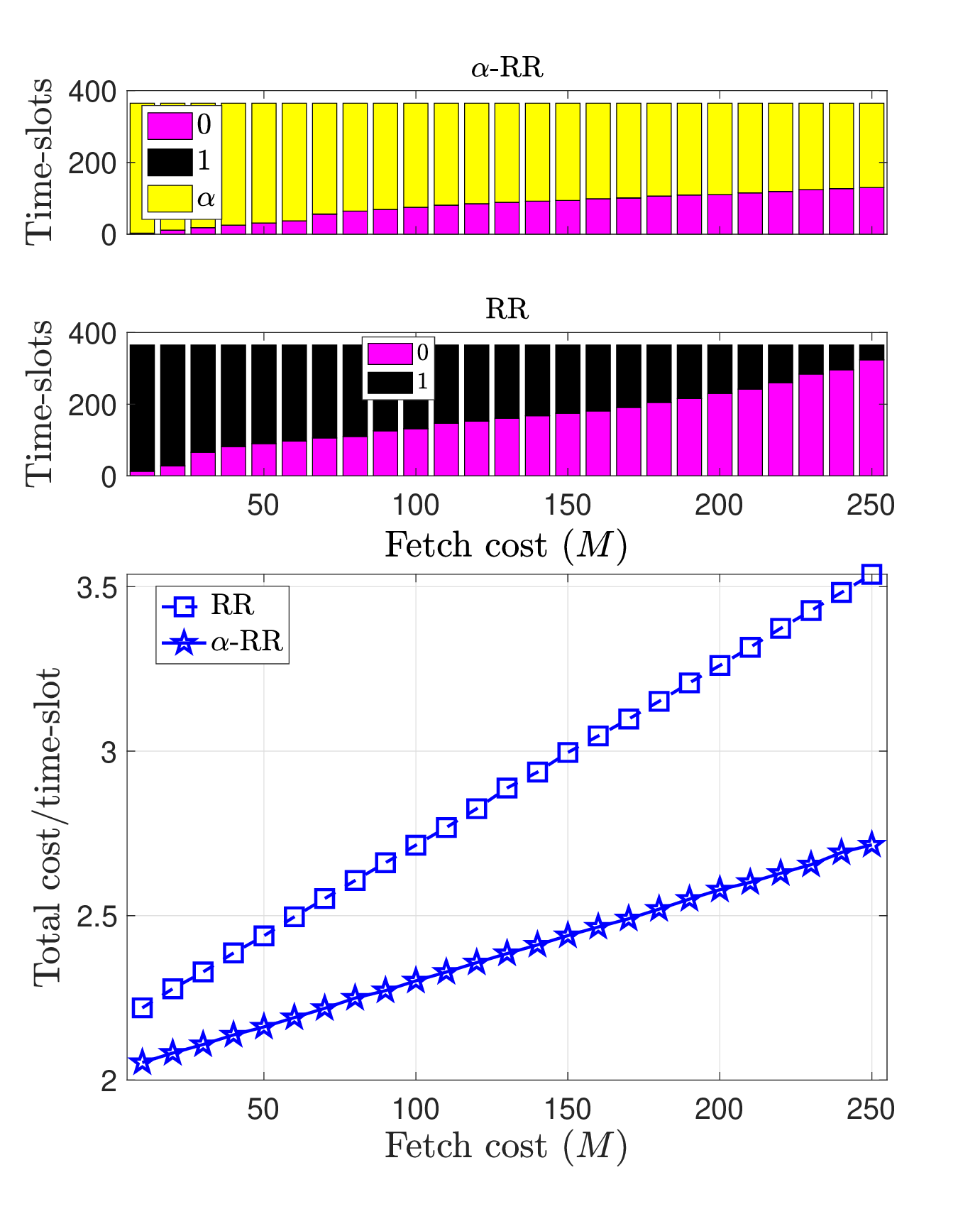}
		\caption{Histogram of hosting status \& total cost per time slot as a function of fetch cost ($M$) for the case when $\alpha = 0.5$ and $c = 0.4$}\label{fig:costvsm}
	\end{center}
\end{figure}

\
\section{Conclusions}
We consider the algorithmic challenge   of  dynamic  service  hosting  on third-party edge computing platforms  in  the setting  where  the  service  can  also  be  partially hosted.  We show that the  benefits  of  partial hosting  are  limited  if reduction in latency cost at the intermediate hosting level is not more than fraction of service hosted. We propose an online policy named $\alpha-$RR and provide performance guarantees for both adversarial and stochastic request arrival models. In addition, via simulations for synthetic and trace-based arrivals, we show that $\alpha-$RR performs well for a wide array of request arrival processes  and  rent cost sequences. The benefits of using more than three levels of service hosting is an open problem.

\bibliographystyle{ACM-Reference-Format}
\bibliography{main.bbl}

 \section{Appendix A}
 \label{sec:proofs_adversarial}
 
In this section, we discuss the proofs of the results presented in Section \ref{sec:analyticalResults_adversarial}.  Note that we use Model  \ref{model:partial} to prove results in this section. 
\subsection{Proof of Theorem \ref{thm:opt_nointermediate}}

We use the following lemmas about the offline optimal policy $\alpha$-OPT to prove our results. 

The first lemma characterizes sufficient conditions on the various costs and arrival/rent costs processes over a time interval in which the hosting status goes from $a$ to $a+b$ and then returns to $a$ for $(a,b)=(0,\alpha)$, $(a,b)=(0,1)$, and $(a,b)=(\alpha,1-\alpha)$.
\begin{lemma}\label{lem:lemma_opt}
Let $r^*_t$ denote the hosting state under $\alpha$-OPT policy in time-slot $t$. If $r_{n-1}^* = a$, $r_t^* = a+b$ for $n \leq t \leq m$ and $r_{m+1}^* = a$, then for $(a,b)=(0,\alpha)$, $(a,b)=(0,1)$, and $(a,b)=(\alpha,1-\alpha)$, 
 $$
 (g(a)-g(a+b))\displaystyle\sum_{l=n}^m x_l \geq b M+b\displaystyle\sum_{l=n}^m c_l.
$$
\end{lemma}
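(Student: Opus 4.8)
The plan is to prove this by an interchange (exchange) argument that exploits the optimality of $\alpha$-OPT. The hypothesis describes a single ``bump'' in the hosting trajectory: $\alpha$-OPT sits at level $a$ up to slot $n-1$, raises its hosting level to $a+b$ for all of $[n,m]$, and drops back to $a$ at slot $m+1$. The natural comparison policy $\eta$ is the one that agrees with $\alpha$-OPT in every time-slot outside $[n,m]$ but remains at level $a$ for all $t\in[n,m]$, thereby flattening the bump. Since $a\in\{0,\alpha\}$ in each of the three cases $(a,b)\in\{(0,\alpha),(0,1),(\alpha,1-\alpha)\}$, the policy $\eta$ is feasible (its hosting levels lie in $\{0,\alpha,1\}$), and because $\eta$ and $\alpha$-OPT occupy the same state at both endpoints ($a$ at $n-1$ and at $m+1$), their trajectories coincide everywhere outside $[n-1,m]$.

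Next I would isolate exactly where the two policies' costs differ. Both occupy level $a$ in slot $n-1$, so their rent and service costs there coincide; the only difference created at the end of slot $n-1$ is the fetch cost, where $\alpha$-OPT pays $bM$ to raise its level from $a$ to $a+b$ (recall the fetch charge is $M(r_{t+1}-r_t)^+$), while $\eta$ pays nothing. Throughout $[n,m]$, $\alpha$-OPT incurs rent $c_l(a+b)$ and service $g(a+b)x_l$ per slot, whereas $\eta$ incurs $c_l a$ and $g(a)x_l$. At the end of slot $m$, $\alpha$-OPT performs an eviction $a+b\to a$, which is free because the fetch cost depends only on the positive part of the level increase; $\eta$ makes no transition there. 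Hence the total-cost difference is
\begin{align*}
C^{\alpha\text{-OPT}} - C^{\eta} = bM + b\sum_{l=n}^m c_l + \big(g(a+b)-g(a)\big)\sum_{l=n}^m x_l.
\end{align*}

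Finally, since $\alpha$-OPT is the cost-minimizing offline policy and $\eta$ is an admissible competitor that coincides with $\alpha$-OPT off the interval, optimality forces $C^{\alpha\text{-OPT}}\le C^{\eta}$, i.e.\ the displayed difference is nonpositive. Rearranging yields
\begin{align*}
\big(g(a)-g(a+b)\big)\sum_{l=n}^m x_l \geq bM + b\sum_{l=n}^m c_l,
\end{align*}
which is precisely the claimed inequality.

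I expect the delicate points to be bookkeeping rather than conceptual: correctly attributing the fetch charge $bM$ to the $n-1\to n$ transition, confirming that the downward transition at $m$ contributes zero fetch cost (so $\eta$ gains nothing by skipping it), and checking that $\eta$ matches $\alpha$-OPT's state at $m+1$ so that no compensating fetch is needed later in the horizon. Once these are pinned down the optimality comparison is immediate. It is worth stating the cost accounting in a reusable form, since the same flattening construction underlies the companion converse statement.
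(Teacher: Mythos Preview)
Your proposal is correct and follows essentially the same approach as the paper: both construct the comparison policy $\eta$ that coincides with $\alpha$-OPT outside $[n,m]$ but stays at level $a$ on $[n,m]$, compute the same cost difference, and invoke optimality of $\alpha$-OPT. The only cosmetic difference is that the paper phrases it as a proof by contradiction (assume the inequality fails and derive $C^\eta < C^{\alpha\text{-OPT}}$), whereas you argue directly that $C^{\alpha\text{-OPT}}\le C^\eta$ and rearrange.
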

\begin{proof}
If $r_{n-1}^* = a$, $r_t^* = a+b$ for $n \leq t \leq m$ and $r_{m+1}^* = a$, then the cost incurred by $\alpha$-OPT in $n \leq t \leq m$ is $C^{\alpha-\text{OPT}}=bM+(a+b)\displaystyle\sum_{l=n}^m c_l+g(a+b)\displaystyle\sum_{l=n}^m x_l.$ 
We prove the result by contradiction. Let us assume that $(g(a)-g(a+b))\displaystyle\sum_{l=n}^m x_l < b M+b\displaystyle\sum_{l=n}^m c_l.$

 We construct another policy $\eta$ which behaves same as $\alpha$-OPT except that $r^{\eta}_t = a$ for $n \leq t \leq m.$
The total cost incurred by $\eta$ in $n \leq t \leq m$ is $C^{\eta}=a\displaystyle\sum_{l=n}^m c_l+g(a)\displaystyle\sum_{l=n}^m x_l.$ It follows that
 $C^{\eta}-C^{\alpha-\text{OPT}}=(g(a)-g(a+b)) \displaystyle\sum_{l=n}^m x_l-bM-b\displaystyle\sum_{l=n}^m c_l,$ which is negative by our assumption. This contradicts the definition of the $\alpha$-OPT policy, thus proving the result.
 \end{proof}

The next lemma characterizes a sufficient condition on the various costs and arrival/rent costs processes over an interval of time which ensures that the hosting status under $\alpha$-OPT does not remain static in that interval.

\begin{lemma}\label{lem:OPT_download}
Let $r^*_t$ denote the hosting state under the offline optimal policy in time-slot $t$. If $r_{n-1}^* = a$, and $(g(a)-g(a+b))\displaystyle\sum_{l=n}^m x_l \geq bM+b\displaystyle\sum_{l=n}^m c_l,$ then for $(a,b)=(0,\alpha)$, $(a,b)=(0,1)$ or $(a,b)=(\alpha,1-\alpha)$, $r^*_{\tau} \neq a$ for at least one value of $\tau$ such that $n\leq \tau \leq m.$
\end{lemma}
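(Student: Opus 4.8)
The plan is to argue by contradiction, reusing the exchange construction from the proof of Lemma~\ref{lem:lemma_opt}. Suppose the conclusion fails, i.e.\ $r^*_\tau = a$ for every $\tau$ with $n \le \tau \le m$. Combined with the hypothesis $r^*_{n-1}=a$, this says $\alpha$-OPT hosts exactly the fraction $a$ throughout $[n-1,m]$. I would then introduce a competing policy $\eta$ that coincides with $\alpha$-OPT at every time-slot outside the window $[n,m]$ and hosts $a+b$ during $[n,m]$; note $a+b$ is a legal hosting level in each of the three cases, since $(0,\alpha)\mapsto \alpha$, $(0,1)\mapsto 1$, and $(\alpha,1-\alpha)\mapsto 1$. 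The goal is to show $C^{\eta}\le C^{\alpha\text{-OPT}}$, contradicting optimality.

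Next I would isolate the slots where the two policies differ, which are confined to $[n-1,m]$. Since both are at $a$ just before $n$, the entry transition $n-1\to n$ costs $0$ for $\alpha$-OPT but $bM$ for $\eta$; the internal transitions on $[n,m-1]$ cost $0$ for both; and at the exit transition $m\to m+1$, both land on the common value $r^*_{m+1}$, so $\alpha$-OPT pays $(r^*_{m+1}-a)^+M$ while $\eta$ pays $(r^*_{m+1}-(a+b))^+M$. Adding the rent and service differences over $[n,m]$ gives
\[
C^{\eta}-C^{\alpha\text{-OPT}} = bM + \Big[(r^*_{m+1}-(a+b))^+-(r^*_{m+1}-a)^+\Big]M + b\sum_{l=n}^{m} c_l - (g(a)-g(a+b))\sum_{l=n}^{m} x_l .
\]
Because $a+b>a$, the bracketed exit-fetch term is nonpositive, so
\[
C^{\eta}-C^{\alpha\text{-OPT}} \le bM + b\sum_{l=n}^{m} c_l - (g(a)-g(a+b))\sum_{l=n}^{m} x_l ,
\]
and invoking the hypothesis $(g(a)-g(a+b))\sum_{l=n}^{m} x_l \ge bM + b\sum_{l=n}^{m} c_l$ yields $C^{\eta}\le C^{\alpha\text{-OPT}}$, contradicting the optimality of $\alpha$-OPT and establishing that $r^*_\tau\neq a$ for at least one $\tau\in[n,m]$.

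The main obstacle is the bookkeeping at the two boundaries: one must attribute the entry fetch $bM$ correctly to $\eta$ and, more delicately, verify that $\eta$ never pays more than $\alpha$-OPT on the exit transition out of the window (the nonpositivity of the bracketed term), since eviction itself is free but a subsequent fetch need not be. A secondary point is the strict-versus-weak inequality: when the hypothesis holds strictly the contradiction is immediate ($C^{\eta}<C^{\alpha\text{-OPT}}$), whereas in the equality case $\eta$ merely matches the optimal cost while deviating from $a$ on $[n,m]$, so one concludes that a cost-minimizing policy differing from $a$ on the window exists and, under the usual tie-breaking convention for $\alpha$-OPT, the claim follows.
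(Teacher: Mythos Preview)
Your proof is correct and follows essentially the same exchange argument as the paper: assume $r^*_\tau=a$ on $[n,m]$, switch to $a+b$ on that window, and compare costs. You are in fact more careful than the paper, which silently ignores the exit transition at $m\to m+1$ and the equality case; your observation that the exit-fetch correction $(r^*_{m+1}-(a+b))^+-(r^*_{m+1}-a)^+$ is nonpositive and your remark on tie-breaking close those gaps.
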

\begin{proof}
 We prove Lemma \ref{lem:OPT_download} by contradiction. Let us assume that $r^*_{\tau} = a$ during $n\leq \tau \leq m$. So the cost incurred by $\alpha$-OPT in $n \leq t \leq m$ is $C^{\alpha-\text{OPT}}=a\displaystyle\sum_{l=n}^m c_l+g(a)\displaystyle\sum_{l=n}^m x_l.$ 
 
 We  construct another policy $\eta$ which behaves same as $\alpha$-OPT except that $r^{\eta}_t = a+b$ for $n \leq t \leq m.$. 
 The total cost incurred by $\eta$ in $n \leq t \leq m$ is $C^{\eta}=Mb+(a+b)\displaystyle\sum_{l=n}^m c_l+g(a+b)\displaystyle\sum_{l=n}^m x_l.$ It follows that
 $C^{\eta}-C^{\alpha-\text{OPT}}= bM+b\displaystyle\sum_{l=n}^m c_l-(g(a)-g(a+b))\displaystyle\sum_{l=n}^m x_l,$ which is negative. Hence there exists at least one policy $\eta$ which performs better than $\alpha$-OPT. This contradicts the definition of the $\alpha$-OPT policy, thus proving the result.
\end{proof}

The following lemma characterizes sufficient conditions under which $\alpha$-OPT never fetches specific parts of the service. 
\begin{lemma}\label{lem:min_rent_cost}
If $bc_{\text{min}}+g(b)\geq 1$ for $b \in \{\alpha, 1\}$ then $\alpha$-OPT does not fetch $b$ fraction of the service.
Similarly, if $\alpha c_{\text{min}}+g(\alpha)\leq c_{\text{min}}$ then $\alpha$-OPT does not fetch $(1-\alpha)$ fraction of the service.
\end{lemma}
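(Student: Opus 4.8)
The plan is to prove both statements by an exchange argument: assuming $\alpha$-OPT performs the forbidden fetch, I construct a competing policy $\eta$ that avoids it and show $\eta$ costs no more, contradicting optimality. The engine behind both parts is a \emph{per-slot domination} observation. Under the hypothesis of Part~1, for every admissible $x_t\in\{0,1\}$ and $c_t\ge c_{\text{min}}$ the rent-plus-service cost of being in state $b$ weakly exceeds that of state $0$, since $b c_t + g(b)x_t - x_t = b c_t - (1-g(b))x_t \ge b c_{\text{min}} - (1-g(b)) \ge 0$ by $b c_{\text{min}}+g(b)\ge 1$. Under the hypothesis of Part~2, state $\alpha$ is weakly cheaper than state $1$ in every slot, because $c_t - \alpha c_t - g(\alpha)x_t = (1-\alpha)c_t - g(\alpha)x_t \ge (1-\alpha)c_{\text{min}} - g(\alpha) \ge 0$, which is precisely $\alpha c_{\text{min}}+g(\alpha)\le c_{\text{min}}$. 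So in both settings the cheaper ``baseline'' state dominates the hosted state slot-by-slot, and it only remains to account for the one-time fetch costs at the endpoints of the hosting interval.

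Concretely, for Part~1 suppose $r^*_{n-1}=0$ and $r^*_n=b$ for some $b\in\{\alpha,1\}$, and let $[n,m]$ be the maximal run with $r^*_t=b$. Define $\eta$ to agree with $\alpha$-OPT everywhere except that $\eta$ remains in state $0$ throughout $[n,m]$; at slot $m+1$ I let $\eta$ move to the state $e=r^*_{m+1}$ that $\alpha$-OPT occupies next, possibly paying a fetch to climb to $e$. The per-slot domination makes the rent-plus-service comparison on $[n,m]$ favor $\eta$, and $\eta$ avoids the size-$b$ fetch that $\alpha$-OPT pays at $n$; the only adverse term is an extra fetch $\eta$ may incur at $m+1$ when $e$ lies above $0$, which in the two relevant configurations equals $\alpha M$. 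Tallying these, $C^\eta - C^{\alpha\text{-OPT}}$ is bounded above by a strictly negative multiple of $M$ in every configuration except the single case $b=\alpha,\ e=1$, which settles Part~1. Part~2 is analogous with the roles of $0$ and $\alpha$ played by $\alpha$ and $1$: assuming $r^*_{n-1}=\alpha$, $r^*_n=1$ on a maximal run $[n,m]$, the policy $\eta$ that stays at $\alpha$ saves the $(1-\alpha)M$ upgrade fetch, is weakly cheaper per slot, and the matching at $m+1$ (down to $0$ or $\alpha$) is free since both are evictions; hence $C^\eta - C^{\alpha\text{-OPT}} \le -(1-\alpha)M < 0$ strictly in all cases, because state $1$ admits no upward exit. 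Wherever a run exits back to its entry state I can instead invoke Lemma~\ref{lem:lemma_opt} directly, which yields the same contradiction more quickly.

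The main obstacle is the borderline case in Part~1 with $b=\alpha$ and the run exiting \emph{upward} to state $1$ (the pattern $0\to\alpha\to1$). Here the extra $\alpha M$ fetch that $\eta$ pays at $m+1$ exactly cancels the $\alpha M$ fetch it saved at $n$, so the comparison collapses to $C^\eta-C^{\alpha\text{-OPT}}=\sum_{l=n}^m\big[(1-g(\alpha))x_l-\alpha c_l\big]\le 0$, which is strict only when there is slack in the hypothesis or in the arrival/rent sequence and merely ties at the knife edge $\alpha c_{\text{min}}+g(\alpha)=1$ with every $x_l=1$ and $c_l=c_{\text{min}}$. I would resolve this by reading the claim in the ``without loss of generality'' sense standard for these lemmas: since $\eta$ (which jumps $0\to1$ directly and never fetches $\alpha$) is never worse, repeatedly applying the replacement to every such interval produces an optimal policy performing no size-$\alpha$ fetch, so we may take $\alpha$-OPT to be that policy. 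The remaining bookkeeping—checking that $\eta$ is a valid hosting sequence and that its states agree with $\alpha$-OPT at both ends of the modified window—I expect to be routine.
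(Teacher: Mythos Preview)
Your proof is correct and follows the same exchange/contradiction argument as the paper: construct a policy $\eta$ that stays in the cheaper baseline state over the offending interval and show it is no worse. Your treatment is in fact more careful than the paper's in one respect—you explicitly track the exit state $r^*_{m+1}$ and isolate the borderline $0\to\alpha\to1$ pattern (where the paper's proof tacitly assumes the run ends in an eviction), correctly noting that only a weak inequality holds there and resolving it via the standard WLOG tie-breaking.
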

\begin{proof}
We prove this by contradiction.
When $bc_{\text{min}}+g(b)\geq 1$, assume that $\alpha$-OPT fetches $b$ fraction of service for $b \in \{\alpha, 1\}$ at the end of time-slot $n$ and hosts till the time-slot $m$.
The cost under $\alpha$-OPT during the  time $(n, m]$ is $C^{\alpha-\text{OPT}}=bM+g(b)\displaystyle\sum_{l=n+1}^{m} x_l+b\displaystyle\sum_{l=n+1}^{m} c_l$.

We construct another policy $\eta$ which behaves same as $\alpha$-OPT except that during $(n, m]$, the policy $\eta$ forwards all incoming requests. The cost under $\eta$ during the  time $(n, m]$ is $C^{\eta}=\displaystyle\sum_{l=n+1}^{m} x_l$.
The difference $C^{\eta}-C^{\alpha-\text{OPT}}$ is negative since $bc_{\text{min}}+g(b)\geq 1$. Which is a contradiction. 

Now consider the other case $\alpha c_{\text{min}}+g(\alpha)\leq c_{\text{min}}$. Let $r^*_t$ denote the hosting state under $\alpha$-OPT in a time-slot $t$. If $r^*_n=\alpha$, assume that $\alpha$-OPT fetches remaining $(1-\alpha)$ fraction of service for  at the end of time-slot $n$ and hosts till the time-slot $m$.
The cost under $\alpha$-OPT during the  time $(n, m]$ is $C^{\alpha-\text{OPT}}=(1-\alpha)M+\displaystyle\sum_{l=n+1}^{m} c_l$.

We construct another policy $\eta$ which behaves same as $\alpha$-OPT except that during $(n, m]$, the policy $\eta$ hosts only $\alpha$ fractions of the service. The cost under $\eta$ during the  time $(n, m]$ is $C^{\eta}=g(\alpha)\displaystyle\sum_{l=n+1}^{m} x_l+\alpha\displaystyle\sum_{l=n+1}^{m} c_l$.
The difference $C^{\eta}-C^{\alpha-\text{OPT}}$ is negative since $\alpha c_{\text{min}}+g(\alpha)\leq c_{\text{min}}$. Which is a contradiction. 
\end{proof}

\color{black}
The next lemma characterizes a lower limit on the number of time-slots for which under $\alpha$-OPT, the hosting status remains unchanged following a service fetch.
\begin{lemma}\label{lem:OPT_slots}
For $(a,b)=(0,\alpha)$ or $(a,b)=(0,1)$ or $(a,b)=(\alpha,1-\alpha)$ and $(a+b)c_{\text{min}}+g(a+b)<1$,
once $\alpha$-OPT fetches $b \in \{1,\alpha, 1-\alpha\}$ fraction of the service, the hosting status remains unchanged for the next $\frac{bM}{g(a)-g(a+b)-bc_{\text{min}}}$ time-slots.
\end{lemma}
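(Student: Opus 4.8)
The plan is to prove the bound by an interchange (exchange) argument against the optimality of $\alpha$-OPT, amortizing the one-time fetch cost $bM$ against the bounded per-slot advantage that the higher level $a+b$ enjoys over level $a$. Fix the fetch in question: suppose $r^*_{n-1}=a$ and $r^*_t = a+b$ for $n \le t \le m$, where $m+1$ is the first slot in which the hosting status leaves $a+b$. The two elementary per-slot estimates that do all the work are: by Assumption \ref{assum_one_request} every $x_l \le 1$, so the service-cost saving of level $a+b$ over level $a$ in any slot is at most $g(a)-g(a+b)$; and by Assumption \ref{bounded_rent} every $c_l \ge c_{\text{min}}$, so the extra rent paid for level $a+b$ over level $a$ in any slot is at least $b c_{\text{min}}$. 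Hence the net per-slot benefit of sitting at $a+b$ rather than $a$ is at most $(g(a)-g(a+b)) - b c_{\text{min}}$, and the fetch cost $bM$ can only be recouped over sufficiently many slots.

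The cleanest case is when $\alpha$-OPT returns exactly to level $a$ at slot $m+1$, i.e. $r^*_{m+1}=a$. Then Lemma \ref{lem:lemma_opt} directly gives
\begin{align*}
(g(a)-g(a+b))\sum_{l=n}^m x_l \ge bM + b\sum_{l=n}^m c_l.
\end{align*}
Upper-bounding the left side by $(g(a)-g(a+b))(m-n+1)$ and lower-bounding $\sum_{l=n}^m c_l \ge c_{\text{min}}(m-n+1)$ on the right yields
\begin{align*}
\big[(g(a)-g(a+b)) - b c_{\text{min}}\big](m-n+1) \ge bM,
\end{align*}
which rearranges to the claimed inequality $m-n+1 \ge \frac{bM}{g(a)-g(a+b)-bc_{\text{min}}}$, provided the denominator is positive. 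Positivity is exactly where the hypothesis and Lemma \ref{lem:min_rent_cost} enter: for $(a,b)=(0,\alpha)$ it is the stated condition $\alpha c_{\text{min}}+g(\alpha)<1$; for $(a,b)=(0,1)$ it is $c_{\text{min}}<1$; and for $(a,b)=(\alpha,1-\alpha)$ the fact that $\alpha$-OPT fetched the $(1-\alpha)$-fraction at all forces $\alpha c_{\text{min}}+g(\alpha)>c_{\text{min}}$ by the contrapositive of Lemma \ref{lem:min_rent_cost}, i.e. $g(\alpha)-(1-\alpha)c_{\text{min}}>0$.

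The remaining work, and the main obstacle, is that $\alpha$-OPT need not step back to level $a$ at slot $m+1$; it may move to the third hosting level. When the level drops but not all the way to $a$ (the transition $0\to 1\to\alpha$ for $(a,b)=(0,1)$), I would compare instead against the policy $\eta$ that holds level $\alpha$ throughout $[n,m]$, so that the boundary fetch/eviction costs do not erase the saved $bM$; optimality of $\alpha$-OPT then gives $m-n+1 \ge \frac{(1-\alpha)M}{g(\alpha)-(1-\alpha)c_{\text{min}}}$, and this dominates the target $\frac{M}{1-c_{\text{min}}}$ precisely when $\alpha+g(\alpha)\le 1$, while for $\alpha+g(\alpha)\ge 1$ Theorem \ref{thm:opt_nointermediate} rules out the level $\alpha$ altogether, leaving only the round-trip case above. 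When instead the level increases ($0\to\alpha\to1$, possible only for $(a,b)=(0,\alpha)$), the freshly fetched $b$-fraction is retained rather than evicted, so the statement is read as a bound on the number of slots before that fraction is evicted; here the fetch costs at the two interval boundaries cancel, and one must track $\alpha$-OPT's cost over the entire stretch during which the level stays $\ge a+b$. I expect the delicate part to be precisely this boundary bookkeeping — confirming that in every admissible transition the comparison policy can be chosen so that the $bM$ term survives intact; once that is secured, the conclusion follows uniformly from the single per-slot estimate of the first paragraph together with the optimality of $\alpha$-OPT.
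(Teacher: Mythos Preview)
Your ``cleanest case'' argument---invoking Lemma \ref{lem:lemma_opt} for the round trip $a\to a+b\to a$, then bounding $\sum x_l\le m-n+1$ and $\sum c_l\ge c_{\text{min}}(m-n+1)$---is exactly the paper's own proof, line for line. The paper's proof stops there: it writes ``evicts it at the end of time-slot $m$'' and applies Lemma \ref{lem:lemma_opt}, implicitly treating only the return to level $a$.

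The additional cases you worry about (transitions to the third hosting level, e.g.\ $0\to 1\to\alpha$ or $0\to\alpha\to1$) are not addressed in the paper's proof at all. Your instinct that these need separate treatment is sound as a matter of rigor, and your interchange argument for $0\to1\to\alpha$ is correct, including the observation that the resulting bound $\tfrac{(1-\alpha)M}{g(\alpha)-(1-\alpha)c_{\text{min}}}$ dominates $\tfrac{M}{1-c_{\text{min}}}$ precisely when $\alpha+g(\alpha)\le 1$, with Theorem \ref{thm:opt_nointermediate} disposing of the complementary regime. Your treatment of the upward transition $0\to\alpha\to1$, however, is left as a reinterpretation of the statement rather than a proof, and the case $\alpha\to1\to0$ is not mentioned.

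In the paper's actual use of Lemma \ref{lem:OPT_slots} (inside the Type-$(1-\alpha)$, Type-$\alpha$, and Type-$1$ frame analysis of Theorem \ref{thm:RR_adv}(b)), the frame and sub-frame structure guarantees the round-trip behavior a priori, so only the case you and the paper both prove is ever invoked. Thus your core argument suffices for the paper's purposes, and your extra case analysis, while a legitimate attempt to make the lemma self-contained, is not needed for the downstream results.
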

\begin{proof}
 Suppose $\alpha$-OPT fetches $b$ fraction of the service at the end of the $(n-1)^{\text{th}}$ time-slot and evicts it at the end of time-slot $m > n$.
 From Lemma \ref{lem:lemma_opt}, $(g(a)-g(a+b))\displaystyle\sum_{l=n}^m x_l \geq bM+b\displaystyle\sum_{l=n}^m c_l.$
 Since $\displaystyle\sum_{l=n}^m x_l \leq (m-n+1)$ and $\displaystyle\sum_{l=n}^m c_l \geq (m-n+1)c_{\text{min}}$ , $(g(a)-g(a+b))(m-n+1) \geq Mb+b(m-n+1) c_{\text{min}}$, i.e, $(m-n+1) \geq \frac{bM}{g(a)-g(a+b)-bc_{\text{min}}}.$ This proves the result.
\end{proof}

\begin{proof}[Proof of Theorem \ref{thm:opt_nointermediate}]
	
We first focus on the offline optimal policy and prove the result by contradiction.
	We start with $r^*_{n-1}=0$ and  $r^*_n= \alpha$. 
	Hence from Lemma \ref{lem:lemma_opt}, for some $m >n$ 
	\begin{align} \label{eq:2opt}
	(1-g(\alpha))\displaystyle\sum_{l=n}^{m} x_l \geq \alpha\displaystyle\sum_{l=n}^{m} c_l+\alpha M.
	\end{align}
	
	Since $\alpha> 1-g(\alpha)$, from (\ref{eq:2opt}) we have $\displaystyle\sum_{l=n}^{m} x_l \geq \displaystyle\sum_{l=n}^{m} c_l+ M$.
	We construct a policy $\eta$ which behaves same as $\alpha$-OPT except that $r^{\eta}_n =1$ for $n \leq t \leq m.$
	The total cost incurred by $\eta$ in $n \leq t \leq m$ is $C^{\eta}(n,m)=M+\displaystyle\sum_{l=n}^{m} c_l.$
	Where as the total cost incurred by $\alpha$-OPT in $n \leq t \leq m$ is  $C^{\text{$\alpha$-OPT}}(n,m)=  \alpha\displaystyle\sum_{l=n}^{m} c_l+\alpha M+g(\alpha)\displaystyle\sum_{l=n}^{m} x_l$.
	It follows that
	\begin{align*}
	&C^{\eta}(n,m)-C^{\text{$\alpha$-OPT}}(n,m)\\
	&=(1-\alpha)M+(1-\alpha)\displaystyle\sum_{l=n}^{m} c_l-g(\alpha)\displaystyle\sum_{l=n}^{m} x_l\\
	& < g(\alpha)(\displaystyle\sum_{l=n}^{m} c_l+ M-\displaystyle\sum_{l=n}^{m} x_l) < 0.
	\end{align*}

	This shows that fetching and hosting $\alpha$ fraction of service is costlier compared to fetching and renting the entire service during the period $[n, m]$.
	
	 Along similar lines we prove that the other case $r^*_{n-1}=1, r^*_t= \alpha$ for  $n \leq t \leq m$ is not possible under $\alpha$-OPT when $\alpha+g(\alpha)\geq 1.$

We now focus on the $\alpha$-RR policy and prove the result by contradiction.
Let $r^{\alpha\text{-RR}}_{\tilde{t}}=0$ and $r^{\alpha\text{-RR}}_t = \alpha$ for some $t>\tilde{t}$. 
Hence from Algorithm \ref{algo:alphaRR}, for some $\tau_{\alpha} \in (\tilde{t},t)$ 
\begin{equation} \label{eq:2}
(1-g(\alpha))\displaystyle\sum_{l=\tau_{\alpha}}^t x_l \geq \alpha \displaystyle\sum_{l=\tau_{\alpha}}^t c_l+\alpha M.
\end{equation}
Since $\alpha \geq 1-g(\alpha)$, from (\ref{eq:2}) we have $\displaystyle\sum_{l=\tau_{\alpha}}^t x_l \geq \displaystyle\sum_{l=\tau_{\alpha}}^t c_l+M $.
In addition, 
\begin{align*}
 \alpha M&+\alpha\displaystyle\sum_{l=\tau_{\alpha}}^t c_l+g(\alpha) \displaystyle\sum_{l=\tau_{\alpha}}^t x_l\\
 &\geq  \alpha M+\alpha\displaystyle\sum_{l=\tau_{\alpha}}^t c_l+(1-\alpha) \displaystyle\sum_{l=\tau_{\alpha}}^t x_l \\
 &\geq \alpha M+\alpha\displaystyle\sum_{l=\tau_{\alpha}}^t c_l+(1-\alpha) \left(\displaystyle\sum_{l=\tau_{\alpha}}^t c_l+M\right)\\
 &\geq \displaystyle\sum_{l=\tau_{\alpha}}^t c_l+M.
\end{align*}
The last inequality implies that the cost of fetching and hosting $\alpha$ fraction of service incurs more cost than that of fetching and renting the entire service during the period $(\tau_{\alpha},t]$. 
Therefore $r^{\alpha\text{-RR}}_t \neq \alpha$.

Along similar lines we prove that the other case, $r^{\alpha\text{-RR}}_{\tilde{t}}=1, r^{\alpha\text{-RR}}_t = \alpha$ for any $t>\tilde{t}$, is not possible $\alpha\text{-RR}$ when $\alpha+g(\alpha)\geq 1.$

	\end{proof}

\subsection{Proof of Theorem \ref{thm:RR_adv}}

We first focus on the first part of the theorem. 

\begin{proof}[Proof of Theorem \ref{thm:RR_adv}(a)]
We first focus on the case when $c_{\text{min}}\geq 1$ and $r_t=0$ for some $t$ under $\alpha$-RR. In this case, 
$$\text{totalCost}(R_0^{(\tau_0)},I_t) < \text{totalCost}(R_1^{(\tau_1)},I_t) \  \forall \tau_1 \in (t_{\text{recent}},t).$$
Therefore, under $\alpha$-RR, $r_{t+1} \neq 1$. 

When $\alpha c_{\text{min}}+g(\alpha)\geq 1$ and $r_t=0$ for some $t$ under $\alpha$-RR,  $$\text{totalCost}(R_0^{(\tau_0)},I_t) < \text{totalCost}(R_{\alpha}^{(\tau_{\alpha})},I_t) \ \forall \tau_{\alpha} \in (t_{\text{recent}},t).$$ 
Therefore, under $\alpha$-RR, $r_{t+1} \neq \alpha$.

Since $r_0 = 0$ under $\alpha$-RR, it follows that, $\alpha$-RR does not fetch the service if $\alpha c_{\text{min}}+g(\alpha)\geq 1$ and $c_{\text{min}}\geq 1$. Also from Lemma \ref{lem:min_rent_cost}, it follows that the $\alpha$-OPT does not fetch any fraction of the service  when  $\alpha c_{\text{min}}+g(\alpha)\geq 1$ and $c_{\text{min}}\geq 1$. Thus the performance of $\alpha$-RR is same as that of $\alpha$-OPT when $\alpha c_{\text{min}}+g(\alpha)\geq 1$ and $c_{\text{min}}\geq 1$. 

\end{proof}

We now focus on the second part of Theorem \ref{thm:RR_adv}. We use the following lemmas about $\alpha$-RR to prove our results. 

\begin{lemma}\label{lem:RR_slots}
Let $r^{\alpha\text{-RR}}_t$ denote the hosting status under $\alpha$-RR in time-slot $t$. If $r^{\alpha\text{-RR}}_t = a$ and $r^{\alpha\text{-RR}}_{t+1} =b$, then for $(a,b)=(0,\alpha)$, $(a,b)=(0,1)$ or $(a,b)=(\alpha,1-\alpha)$,  by the definition of the $\alpha\text{-RR}$ policy, $\exists \tau_{a+b} < n$ such that 
 $$\text{totalCost}(R_a^{(\tau_a)},I_t) > \text{totalCost}(R_{a+b}^{(\tau_{a+b})},I_t) \ \forall \tau_a \in (t_{\text{recent}},t].$$ 

 Then, 
$$(t-\tau_{a+b}) \geq \frac{bM}{g(a)-g(a+b)-bc_{\text{min}}}.$$ 
\end{lemma}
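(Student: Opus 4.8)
The plan is to translate the cost-comparison hypothesis into a concrete inequality on the arrivals and rent costs over the tail of the current frame, and then convert that inequality into a slot count using Assumptions \ref{assum_one_request} and \ref{bounded_rent}, exactly paralleling the argument for $\alpha$-OPT in Lemma \ref{lem:OPT_slots}. First I would fix the active frame $(t_{\text{recent}}, t]$, over which $\alpha\text{-RR}$ hosts the constant level $a$ (by definition $t_{\text{recent}}$ is the most recent slot at which the hosting status changed). The candidate $R_a^{(\tau_a)}$ keeps the status at $r_t = a$ for the whole frame, so its totalCost is independent of $\tau_a$ and equals the cost $\alpha\text{-RR}$ actually incurs; the competing candidate $R_{a+b}^{(\tau_{a+b})}$ agrees with $R_a$ on $(t_{\text{recent}}, \tau_{a+b}]$ and switches up to level $a+b$ on $(\tau_{a+b}, t]$, paying a single fetch term $M|(a+b)-a| = Mb$ at the switch (the totalCost loop of Algorithm \ref{algo:alphaRR} charges $M|R(j)-R(j-1)|$ only at that transition). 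Existence of a $\tau_{a+b}$ satisfying the stated strict inequality is precisely what the $\arg\min$ in Line 12 guarantees once $r_{t+1} = a+b \neq a = r_t$ is selected.

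Next I would subtract the two totalCost expressions. The contributions of slots on or before $\tau_{a+b}$ cancel, while each tail slot $l \in (\tau_{a+b}, t]$ contributes a rent difference of $-b\,c_l$ and a service difference of $(g(a)-g(a+b))\,x_l$, together with the one-time $-Mb$. Hence the hypothesis $\text{totalCost}(R_{a+b}^{(\tau_{a+b})},I_t) < \text{totalCost}(R_a^{(\tau_a)},I_t)$ rearranges to
\[
(g(a)-g(a+b))\sum_{l=\tau_{a+b}+1}^{t} x_l \;>\; b\sum_{l=\tau_{a+b}+1}^{t} c_l + Mb .
\]

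Finally, writing $P = t - \tau_{a+b}$ for the number of tail slots, I would bound the left side above by $(g(a)-g(a+b))P$ using $x_l \le 1$ (Assumption \ref{assum_one_request}) and the right side below by $bP c_{\text{min}} + Mb$ using $c_l \ge c_{\text{min}}$ (Assumption \ref{bounded_rent}). Combining gives $P\big[(g(a)-g(a+b)) - b c_{\text{min}}\big] > Mb$; since the left side is positive this forces the bracket to be positive, so dividing yields
\[
t - \tau_{a+b} \;=\; P \;\geq\; \frac{bM}{g(a)-g(a+b)-b c_{\text{min}}},
\]
as claimed, for each of $(a,b) \in \{(0,\alpha),(0,1),(\alpha,1-\alpha)\}$. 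I do not expect a substantive obstacle: the work is entirely in the careful bookkeeping of the cost difference over the tail (reconciling the index $n$ in the statement with the frame endpoint $t$, confirming $R_a^{(\tau_a)}$ is genuinely constant in $\tau_a$, and checking that the denominator is automatically positive), after which the counting step is immediate.
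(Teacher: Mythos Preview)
Your proposal is correct and follows essentially the same approach as the paper: the paper's proof is a one-line invocation of $\sum x_l \le \tau$ and $\sum c_l \ge \tau c_{\text{min}}$, which is exactly your final step. You simply make explicit the intermediate bookkeeping (subtracting the two totalCosts to obtain the sum inequality) that the paper takes for granted as part of the lemma's hypothesis.
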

\begin{proof}
Since $\displaystyle\sum_{l=n-\tau+1}^n x_l \leq \tau,$ and $\displaystyle\sum_{l=n-\tau+1}^n c_l \geq \tau c_{\text{min}},$  the results follow. 
\end{proof}

The next lemma characterizes sufficient conditions on the various costs and arrival/rent costs processes over a time interval in which the hosting status goes from $a$ to $a+b$ and then returns to $a$ for $(a,b)=(0,\alpha)$, $(a,b)=(0,1)$, and $(a,b)=(\alpha,1-\alpha)$.

\begin{lemma}\label{lem:max_requests}
Let $r^{\alpha\text{-RR}}_t$ denote the hosting status under $\alpha$-RR in time-slot $t$. If $r^{\alpha\text{-RR}}_{n-1}=a+b$, $r^{\alpha\text{-RR}}_t=a$ for $n \leq t \leq m$ and $r^{\alpha\text{-RR}}_{m+1}=a+b$ then for $(a,b)=(0,\alpha)$, $(a,b)=(0,1)$ or $(a,b)=(\alpha,1-\alpha)$,
	$$
	(g(a)-g(a+b))\displaystyle\sum_{l=n}^m x_l < b\displaystyle\sum_{l=n}^{m-1} c_l+Mb+(g(a)-g(a+b)).
	$$   
\end{lemma}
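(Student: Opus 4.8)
The plan is to read the claimed bound directly off the $\arg\min$ decision rule of Algorithm \ref{algo:alphaRR}, exploiting that under the hypothesis $\alpha$-RR holds the hosting level at $a$ throughout $[n,m]$ and only fetches the extra $b$ fraction at the very end. First I would pin down the bookkeeping. Since $r^{\alpha\text{-RR}}_{n-1}=a+b$ and $r^{\alpha\text{-RR}}_n=a$, the status changes at slot $n$, so $t_{\text{recent}}=n-1$, and this value stays frozen for every decision made at slots $n,n+1,\dots,m-1$ (the status remains $a$). Hence at each such decision the frame is $\{n,\dots,t\}$ and the admissible switch times obey $\tau_{a+b}\in(n-1,t)$, so the earliest one is $\tau_{a+b}=n$, which inherits $a$ on slot $n$ and only toggles to $a+b$ from slot $n+1$ onward.

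The key step is the decision taken at slot $m-1$. Because $r^{\alpha\text{-RR}}_m=a$, the constant candidate $R_a$ is selected over the switch candidate, so by the $\arg\min$ rule the cost computed by \textsf{totalCost} (Lines 17--25) satisfies $\textsf{totalCost}(R_a,I_{m-1})\le \textsf{totalCost}(R_{a+b}^{(\tau_{a+b})},I_{m-1})$ for every admissible $\tau_{a+b}$. I would instantiate this with $\tau_{a+b}=n$. The two sequences agree on slot $n$ (both host $a$, with no fetch charged on the first frame slot) and differ on $\{n+1,\dots,m-1\}$, where one hosts $a$ and the other hosts $a+b$ after paying the single fetch $Mb$. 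Writing out \textsf{totalCost}, the slot-$n$ terms $a\,c_n+g(a)x_n$ cancel on both sides and the remaining rent/service terms rearrange into
$$(g(a)-g(a+b))\sum_{l=n+1}^{m-1}x_l \;\le\; Mb + b\sum_{l=n+1}^{m-1}c_l,$$
which is the interior version of the claim (using $g(0)=1$, $g(1)=0$ as set inside \textsf{totalCost}).

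It then remains to restore the two boundary slots $n$ and $m$. The request $x_m$ is not seen by the slot-$(m-1)$ decision, so I would add its contribution separately, bounding $(g(a)-g(a+b))x_m\le g(a)-g(a+b)$ via $x_m\in\{0,1\}$ (Assumption \ref{assum_one_request}); this is exactly the additive term $(g(a)-g(a+b))$ appearing in the statement. The slot-$n$ request is reinstated using $x_n\le 1$ together with the rent term $b\,c_n$ that surfaces when the sum is extended to begin at $n$ (here $c_l\ge c_{\text{min}}>0$ from Assumption \ref{bounded_rent}), which is what promotes the rent sum from $\sum_{l=n+1}^{m-1}c_l$ to $\sum_{l=n}^{m-1}c_l$. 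Finally I would note that the three cases $(a,b)\in\{(0,\alpha),(0,1),(\alpha,1-\alpha)\}$ go through verbatim: only the fetched fraction $b$ and the service-cost drop $g(a)-g(a+b)$ change, and the cancellation and rearrangement are identical in each.

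The part I expect to be most delicate is precisely this endpoint accounting, which is forced by the \emph{causal}, frame-based nature of $\alpha$-RR. Unlike the offline analogue in Lemma \ref{lem:lemma_opt}, where hindsight optimality compares the whole interval against a single alternative and yields a clean full-range inequality, here the no-switch guarantee is available only through slot $m-1$ and only on the interior of the frame, because the first slot of every frame is inherited as $r_t=a$ and cannot be toggled by any switch candidate. Making the boundary terms line up so that they produce exactly the truncated rent sum $\sum_{l=n}^{m-1}c_l$ and the single additive $(g(a)-g(a+b))$, rather than a looser correction, is the crux of the argument; I would handle it by charging $x_m$ to the explicit additive slack and absorbing $x_n$ against the reinstated rent contribution $b\,c_n$.
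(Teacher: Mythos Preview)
Your overall strategy---use the slot-$(m-1)$ decision (which kept level $a$) to extract a cost inequality, then restore the missing slot $m$ via $x_m\le 1$---matches the paper. The gap is in your endpoint accounting at slot $n$. You instantiate the comparison with $\tau_{a+b}=n$, which under your reading leaves slot $n$ at level $a$ in \emph{both} sequences; after the slot-$n$ terms cancel you obtain only the interior bound over $[n+1,m-1]$, and you then propose to ``absorb $x_n$ against the reinstated rent contribution $b\,c_n$.'' But that step needs $(g(a)-g(a+b))x_n \le b\,c_n$, which fails whenever $x_n=1$ and $c_n < (g(a)-g(a+b))/b$. For $(a,b)=(0,1)$ this would require $c_n\ge 1$, precisely the regime the lemma must cover (Theorem~\ref{thm:RR_adv}(b) is for $c_{\min}<1$); for $(a,b)=(0,\alpha)$ it would require $c_n\ge (1-g(\alpha))/\alpha$, again not guaranteed. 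So when $x_n=x_m=1$ and $c_n$ is small, your chain of inequalities does not reach the stated bound.

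The paper sidesteps this by comparing the all-$a$ sequence against the candidate that hosts $a+b$ on the \emph{entire} frame $[n,m-1]$, yielding directly
\[
(g(a)-g(a+b))\sum_{l=n}^{m-1} x_l \;<\; b\sum_{l=n}^{m-1} c_l + bM,
\]
so that only the slot-$m$ term remains, and $(g(a)-g(a+b))x_m \le g(a)-g(a+b)$ finishes. The fix to your argument is therefore to take the comparison sequence to be all-$(a+b)$ over $[n,m-1]$ (this is the algorithm's intent, the open-interval notation for $\tau$ notwithstanding) rather than switching only from slot $n+1$ onward; your delicate two-endpoint repair is then unnecessary.
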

\begin{proof}
 Given $r^{\alpha\text{-RR}}_m=a$ and $r^{\alpha\text{-RR}}_{m+1}=a+b.$ So from Algorithm \ref{algo:alphaRR}, we have that
 $\text{totalCost}(R_a^n,I_m) \geq \text{totalCost}(R_{a+b}^{(\tau_{a+b})},I_m) \ \text{ for } \tau_{a+b} \in [n,m]$
 and hence
 $\text{totalCost}(R_a^{n},I_{m-1}) < \text{totalCost}(R_{a+b}^n,I_{m-1})$. Which implies
 $g(a)\displaystyle\sum_{l=n}^{m-1} x_l+a\displaystyle\sum_{l=n}^{m-1} c_l < g(a+b)\displaystyle\sum_{l=n}^{m-1} x_l+ (a+b)\displaystyle\sum_{l=n}^{m-1} c_l+bM$.\\
 By definition,
 \begin{align*}
  &\displaystyle\sum_{l=n}^m x_l= \left(\displaystyle\sum_{l=n}^{m-1} x_l\right)+x_{m}\\
 &(g(a)-g(a+b))\displaystyle\sum_{l=n}^{m-1} x_l  <b(\displaystyle\sum_{l=n}^m c_l+M)+(g(a)-g(a+b))\\
  &< b\displaystyle\sum_{l=n}^{m-1} c_l+Mb+(g(a)-g(a+b)).
 \end{align*}
 Thus proving the result.
\end{proof}

The next lemma characterizes sufficient conditions on the various costs and arrival/rent costs processes over a time interval in which the hosting status goes from $a+b$ to $a$ and then returns to $a+b$ for $(a,b)=(0,\alpha)$, $(a,b)=(0,1)$ or $(a,b)=(\alpha,1-\alpha)$.
\begin{lemma}\label{lem:min_requests}
Let $r^{\alpha\text{-RR}}_t$ denote the hosting status under $\alpha$-RR in time-slot $t$.
  If $r_{n-1}^*= a+b,  r^*_n=a, r^{\alpha\text{-RR}}_t =a+b$ for $n \leq t \leq m$ and  $r^{\alpha\text{-RR}}_{m+1}=a.$ then for $(a,b)=(0,\alpha)$, $(a,b)=(0,1)$ or $(a,b)=(\alpha,1-\alpha)$, $$(g(a)-g(a+b))\displaystyle\sum_{l=n}^{m} x_l\geq b\displaystyle\sum_{l=n}^{m-1} c_l-Mb.$$    	
 \end{lemma}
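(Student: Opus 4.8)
The plan is to read off the required inequality directly from the single $\alpha$-RR decision that retains the hosting level $a+b$ one slot before the eviction, exploiting the fact that eviction is free in the cost model.

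First I would fix the frame. The hypotheses $r^*_{n-1}=a+b$ and $r^*_n=a\neq r^*_{n-1}$, together with $r^{\alpha\text{-RR}}_{n-1}=a+b$ (the frame boundary, where $\alpha$-RR agrees with $\alpha$-OPT), let me invoke Lemma~\ref{lem:RR_goback_time}: when making its decisions during $[n,m]$, $\alpha$-RR may be analysed as if its last change occurred at $n-1$, so that the relevant frame is $[n,\cdot]$ and an eviction to $a$ at time $n$ is an admissible candidate switch. Throughout the frame the current level is $a+b$, so the candidate $R_{a+b}^{(\cdot)}$ simply hosts $a+b$ in every slot, whereas $R_a^{(n)}$ hosts $a$ in every slot; since the fetch term $\Delta_t=(r_{t+1}-r_t)^+$ charges only \emph{increases} in the stored fraction, the transition $a+b\to a$ contributes no fetch cost, so $R_a^{(n)}$ carries no $M$-term over the frame.

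The key step is the decision taken in slot $m-1$. Because $r^{\alpha\text{-RR}}_m=a+b$ equals the current level, $\alpha$-RR \emph{retained} $a+b$, which by the definition of the policy forces $\text{minCost}(a+b)\le\text{minCost}(a)$ for the frame ending at $m-1$, and in particular $\text{totalCost}(R_{a+b},I_{m-1})\le\text{totalCost}(R_a^{(n)},I_{m-1})$. Spelling out the two costs over $[n,m-1]$ yields
\begin{align*}
(a+b)\sum_{l=n}^{m-1}c_l+g(a+b)\sum_{l=n}^{m-1}x_l\ \le\ a\sum_{l=n}^{m-1}c_l+g(a)\sum_{l=n}^{m-1}x_l,
\end{align*}
equivalently $(g(a)-g(a+b))\sum_{l=n}^{m-1}x_l\ge b\sum_{l=n}^{m-1}c_l$. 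Finally, since $g(a)-g(a+b)\ge 0$ and $x_m\ge 0$ I may enlarge the left sum to run to $m$, and since $Mb\ge 0$ I may subtract it on the right, obtaining $(g(a)-g(a+b))\sum_{l=n}^{m}x_l\ge b\sum_{l=n}^{m-1}c_l-Mb$, as required. The three instances $(a,b)\in\{(0,\alpha),(0,1),(\alpha,1-\alpha)\}$ are covered simultaneously, as the argument only uses $g(a)>g(a+b)$ and the costless-eviction property. I keep the $-Mb$ slack (rather than the sharper bound without it) solely so that the statement mirrors Lemma~\ref{lem:max_requests} and feeds cleanly into the frame accounting of Lemma~\ref{lem:Competive_frame}.

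The main obstacle is bookkeeping rather than algebra: I must be sure the candidate eviction time $\tau_a=n$ lies in the admissible open window $(t_{\text{recent}},m-1)$, which needs the frame to be long enough ($m\ge n+2$) and the reduction to $t_{\text{recent}}=n-1$ supplied by Lemma~\ref{lem:RR_goback_time}. The degenerate frames $m=n$ and $m=n+1$ must be checked separately: for $m=n$ the right-hand side is $-Mb\le 0\le\text{LHS}$ and the claim is vacuous, while $m=n+1$ requires a short direct comparison of the two slots in the frame. Handling these boundary frames correctly, and confirming that the go-back reduction does not distort the cost comparison, is where I would spend the most care.
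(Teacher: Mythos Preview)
Your approach---use the retention decision at slot $m-1$ and compare the constant-$(a{+}b)$ trajectory with the candidate that drops to $a$ at the start of the frame---is exactly the paper's. But there is one genuine error in your execution. You claim that $R_a^{(n)}$ ``carries no $M$-term over the frame'' because the \emph{cost model} charges only for increases via $\Delta_t=(r_{t+1}-r_t)^+$. That conflates the system's actual cost with the $\alpha$-RR algorithm's internal \texttt{totalCost}. Look at the subroutine in Algorithm~\ref{algo:alphaRR}: it adds $M\cdot|R(j)-R(j-1)|$, an absolute value, so the eviction $a{+}b\to a$ inside the candidate trajectory contributes $Mb$. The correct comparison at time $m-1$ is therefore
\[
(a{+}b)\sum_{l=n}^{m-1}c_l+g(a{+}b)\sum_{l=n}^{m-1}x_l \;\le\; a\sum_{l=n}^{m-1}c_l+g(a)\sum_{l=n}^{m-1}x_l + Mb,
\]
which rearranges directly to $(g(a)-g(a{+}b))\sum_{l=n}^{m-1}x_l\ge b\sum_{l=n}^{m-1}c_l-Mb$, and then you extend the left sum to $m$ as you do. So the $-Mb$ is not slack you retain ``to mirror Lemma~\ref{lem:max_requests}''; it is forced by the policy's own bookkeeping, and your sharper intermediate inequality without it is not justified. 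Once you fix this one line, your argument and the paper's coincide; the boundary-frame checks you flag are indeed absent from the paper and are worth keeping.
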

 \begin{proof} 
 	Given $r^{\alpha\text{-RR}}_m=a+b$ and $r^{\alpha\text{-RR}}_{m+1}=a.$ So from Algorithm \ref{algo:alphaRR}, we have
 	$\text{totalCost}(R_{a+b}^n,I_m) > \text{totalCost}(R_a^{(\tau_a)},I_m)$  for  $\tau_a \in [n,m]$ and hence
 	$\text{totalCost}(R_a^n,I_{m-1}) \geq \text{totalCost}(R_{a+b}^n,I_{m-1})$.
 	Which implies
 	$g(a)\displaystyle\sum_{l=n}^{m-1} x_l+a\displaystyle\sum_{l=n}^{m-1} c_l+Mb \geq g(a+b)\displaystyle\sum_{l=n}^{m-1} x_l+(a+b)\displaystyle\sum_{l=n}^{m-1} c_l$.\\
 By definition,
 \begin{align*}
 \displaystyle\sum_{l=n}^m x_l&= \left(\displaystyle\sum_{l=n}^{m-1} x_l\right)+x_{m}\\
(g(a)-g(a+b))\displaystyle\sum_{l=n}^m x_l &\geq b\displaystyle\sum_{l=n}^{m-1} c_l-M+0\\
 &\geq b\displaystyle\sum_{l=n}^{m-1}x_l-Mb.
 \end{align*}
 Thus proving the result.
 \end{proof}
 The next lemma characterizes upper limit on the past time from which $\alpha$-RR compares costs  to make a fetching decision.   

 \begin{lemma}\label{lem:RR_goback_time}
 Let $r^*_t$, $r^{\alpha\text{-RR}}_t$ denote the hosting states under $\alpha$-OPT, $\alpha$-RR policies respectively in time-slot $t$.
If $r_{n-1}^*=r^{\alpha\text{-RR}}_{n-1}\neq r^*_n$, then  $\alpha\text{-RR}$ checks back only till time-slot $n$ to make a fetching decision.
\end{lemma}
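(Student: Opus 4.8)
The plan is to recast the claim in terms of the inner minimizations that $\alpha$-RR performs over the switch-time $\tau$ in Lines 9--11 of Algorithm \ref{algo:alphaRR}, and to show that each such minimization is attained at some $\tau \geq n$; once this holds, the windows $[\tau,t]$ with $\tau < n$ are irrelevant and $\alpha$-RR may be regarded as looking back only to time-slot $n$. Write $a := r^{\alpha\text{-RR}}_{n-1} = r^*_{n-1}$ for the common hosting level at $n-1$, and let $b' = r^*_n \neq a$ be the level $\alpha$-OPT switches to at $n$.

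First I would record a shift-invariance identity for the function totalCost. For any candidate target $b$ and any switch-time $\tau$, the difference between the strategy that switches from $a$ to $b$ at $\tau$ and the strategy that holds $a$ throughout $[t_{\text{recent}},t]$ is
\begin{align*}
\Delta_b(\tau) = |b-a|M + \sum_{l=\tau}^{t}\big((b-a)c_l + (g(b)-g(a))x_l\big),
\end{align*}
which depends only on the sub-interval $[\tau,t]$ and not on $t_{\text{recent}}$ or the shared prefix. Hence, for $\tau \le n-1$,
\begin{align*}
\Delta_b(\tau) = \Delta_b(n) + \phi_b(\tau), \qquad \phi_b(\tau) := \sum_{l=\tau}^{n-1}\big((b-a)c_l + (g(b)-g(a))x_l\big).
\end{align*}
So it suffices to prove $\phi_b(\tau)\ge 0$ for every relevant $\tau<n$ and every candidate target $b$ in the direction of $\alpha$-OPT's switch; then $\Delta_b(\tau)\ge \Delta_b(n)$ and the minimizer over $\tau$ lies at $\tau\ge n$.

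The sign of $\phi_{b'}(\tau)$ I would obtain by an exchange argument against the optimality of $\alpha$-OPT. On the stretch where $\alpha$-OPT holds level $a$ and which ends at $n-1$, construct a policy $\eta$ agreeing with $\alpha$-OPT everywhere except that it performs the very same change to $b'$ earlier, at $\tau$ instead of $n$, and rejoins $\alpha$-OPT at $n$. Since $\eta$ uses exactly the same single fetch as $\alpha$-OPT, the fetch costs cancel and the only difference is the rent/service cost on $[\tau,n-1]$, which is precisely $\phi_{b'}(\tau)$; optimality of $\alpha$-OPT then forces $\phi_{b'}(\tau)\ge 0$. This is the heart of the argument: because $\alpha$-OPT chose to keep the cheaper level $a$ up to $n-1$, the pre-$n$ history can only make the switch look worse, so $\alpha$-RR gains nothing by reaching past $n$.

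The main obstacle is that $\alpha$-RR compares against all levels in $\{0,\alpha,1\}\setminus\{a\}$, whereas the clean single-change exchange above only controls $\phi_b$ for $b=b'$ and in $\alpha$-OPT's direction. For a candidate level $b\neq b'$, or for a candidate switch opposite to $\alpha$-OPT's, shifting the change earlier introduces an extra fetch or eviction, so the exchange is no longer cost-neutral. I would dispatch these residual cases by a short case analysis on whether $\alpha$-OPT fetches ($b'>a$) or evicts ($b'<a$) at $n$, combining the optimality bounds of Lemmas \ref{lem:lemma_opt} and \ref{lem:OPT_download} for $\alpha$-OPT with $\alpha$-RR's own non-switching at $n-1$ (which, by the same identity, yields $\Delta_b(\tau)\ge 0$ on every window ending at $n-1$) to preclude any window $[\tau,t]$ with $\tau<n$ from being the strict minimizer. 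Verifying that no such cross-direction window can dominate is the one step needing care; the remaining steps are routine substitutions into the identity above.
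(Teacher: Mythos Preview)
Your approach is essentially the paper's: both use the same exchange argument against $\alpha$-OPT (shift the switch from $n$ to $\tau<n$ and invoke optimality) to conclude that looking back past $n$ cannot help, with your shift-invariance identity $\Delta_b(\tau)=\Delta_b(n)+\phi_b(\tau)$ being a cleaner packaging of the paper's contradiction step. The residual $b\neq b'$ case you flag is likewise not treated in the paper's proof---the lemma is only invoked downstream after Lemmas~\ref{lem:download_oneminusalpha}--\ref{lem:download_alpha} have already pinned $\alpha$-RR's target level to $b'$, so the matching-level case suffices.
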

\begin{proof}
We prove this result by contradiction.
Suppose $r_{n-1}^*=r^{\alpha\text{-RR}}_{n-1}=a$, $r^*_n =a+b$ and let $r^{\alpha\text{-RR}}_{m_1}=a+b$  at some $m_1 \in [n+1,m]$ for $(a,b)=(0,\alpha)$, $(a,b)=(0,1)$, and $(a,b)=(\alpha,1-\alpha)$.
  This implies there exists an $n_1<m_1$ such that $\text{totalCost}(R_a^{n_1},I_{m_1})\geq \text{totalCost}(R_{a+b}^{n_1},I_{m_1}).$
  That is  $(g(a)-g(a+b))\displaystyle\sum_{l=n_1}^{m_1} x_l \geq b \displaystyle\sum_{l=n_1}^{m_1} c_l+b M$. 
  If $n_1<n$ then we have $(g(a)-g(a+b))\displaystyle\sum_{l=n}^{m_1} x_l < b \displaystyle\sum_{l=n}^{m_1} c_l+b M$ that implies $(g(a)-g(a+b))\displaystyle\sum_{l=n_1}^{n-1} x_l >b \displaystyle\sum_{l=n_1}^{n-1} c_l$.
  We  construct another policy $\eta$ which is same as $\alpha$-OPT except that  $r^{\eta}_t = a+b$ during  $n_1 \leq t < n$. 
  The total cost incurred by $\eta$ in $n_1 \leq t < n$ is $C^{\eta}=bM+b \displaystyle\sum_{l=n_1}^{n} c_l+g(a+b)\displaystyle\sum_{l=n_1}^{n} x_l.$
  The total cost incurred by $\alpha$-OPT in $n_1 \leq t < n$ is $C^{\alpha\text{-OPT}}=b M+g(a)\displaystyle\sum_{l=n_1}^{n-1} x_l+(a+b)c_n+g(a+b)x_n.$
  The difference $C^{\eta}-C^{\alpha\text{-OPT}}=b \displaystyle\sum_{l=n_1}^{n-1} c_l-(g(a)-g(a+b))\displaystyle\sum_{l=n_1}^{n-1} x_l$,  is negative. 

   Thus the result follows.
\end{proof}

 \begin{lemma}\label{lem:download_oneminusalpha}
 Let $r^{\alpha\text{-RR}}_t$ denote the hosting status under $\alpha$-RR in time-slot $t$.
If  $r_{n-1}^*=r^{\alpha\text{-RR}}_{n-1}=\alpha$, $r_t^* =1$ for $n \leq t \leq m$, and $r_{m+1}^*\neq 1$ then $\alpha\text{-RR}$ fetches  $(1-\alpha)$ fraction of service at some time-slot in $[n+1, m]$. 
After fetching $(1-\alpha)$ fraction of service, $\alpha\text{-RR}$ does not evict any fraction of service till the end of time-slot $m.$
\end{lemma}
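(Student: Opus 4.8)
The plan is to combine the necessary condition that $\alpha$-OPT's choice to host the full service on $[n,m]$ imposes with a direct reading of the cost comparisons performed inside Algorithm~\ref{algo:alphaRR}. Throughout I may assume $\alpha+g(\alpha)<1$, since otherwise Theorem~\ref{thm:opt_nointermediate} shows $\alpha$-OPT never uses level $\alpha$ and the hypothesis $r^*_{n-1}=\alpha$ is vacuous; in this regime $g(\alpha)<1-\alpha<(1-\alpha)M$ because $M>1$ (Assumption~\ref{ass:largeM}). First I would apply Lemma~\ref{lem:lemma_opt} with $(a,b)=(\alpha,1-\alpha)$ to the run $r^*_{n-1}=\alpha$, $r^*_t=1$ on $[n,m]$, $r^*_{m+1}\neq 1$; using $g(1)=0$ this yields the governing inequality
\begin{equation}
g(\alpha)\sum_{l=n}^{m}x_l \;\ge\; (1-\alpha)M+(1-\alpha)\sum_{l=n}^{m}c_l. \tag{$\ast$}
\end{equation}
Two structural facts will be used repeatedly. (i) The hindsight cost totalCost charges every level change by $M\lvert R(j)-R(j-1)\rvert$, i.e.\ it penalises evictions and fetches \emph{symmetrically}; this is the crux of the whole argument. (ii) Since $r^*_{n-1}=r^{\alpha\text{-RR}}_{n-1}=\alpha\neq r^*_n$, Lemma~\ref{lem:RR_goback_time} lets me restrict every window $[\tau,t]$ that $\alpha$-RR inspects during $[n,m]$ to $\tau\ge n$, so that $\alpha$-OPT is at full service on the whole of any such window.

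For the existence of the fetch I would show that $\alpha$-RR stays at level $\alpha$ until it jumps to full. While at level $\alpha$, comparing minCost$(\alpha)$ with minCost$(0)$ shows an eviction to $0$ at slot $t$ is chosen only if some window $[\tau,t]\subseteq[n,m]$ satisfies $M\alpha+(1-g(\alpha))\sum_{\tau}^{t}x_l<\alpha\sum_{\tau}^{t}c_l$; but optimality of $\alpha$-OPT (which keeps full service and would otherwise drop to $0$ and re-fetch at cost $M$) forces $\sum_{\tau}^{t}c_l-\sum_{\tau}^{t}x_l\le M$, and since $\tfrac{1-g(\alpha)}{\alpha}>1$ in our regime these two are contradictory. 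Hence $\alpha$-RR never evicts to $0$ on $[n,m]$ before fetching. Reading off minCost$(1)$ versus minCost$(\alpha)$, the fetch $\alpha\to 1$ is selected at slot $t$ as soon as some $[\tau,t]$ obeys $(1-\alpha)M+(1-\alpha)\sum_\tau^t c_l<g(\alpha)\sum_\tau^t x_l$; by $(\ast)$ this holds at the latest for the window $[n,m]$ evaluated at $t=m$, so the fetch occurs no later than $m$, while $g(\alpha)x_l\le g(\alpha)<(1-\alpha)M$ shows a single slot can never trigger it, placing the fetch slot $t_f$ in $[n+1,m]$.

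For the second claim I would show $\alpha$-RR never evicts on $(t_f,m]$. After the fetch $t_{\text{recent}}=t_f$, so any eviction of $1-\alpha$ (to level $\alpha$) at a slot $t_e\le m$ is triggered only when some window $[\tau,t_e]\subseteq[n,m]$ obeys, from minCost$(\alpha)<$~minCost$(1)$ together with fact (i),
\[
g(\alpha)\sum_{l=\tau}^{t_e}x_l+(1-\alpha)M<(1-\alpha)\sum_{l=\tau}^{t_e}c_l .
\]
On the other hand, because $\alpha$-OPT keeps the full service on all of $[n,m]\supseteq[\tau,t_e]$ rather than dropping to $\alpha$ and re-fetching $1-\alpha$ at cost $(1-\alpha)M$, its optimality gives $(1-\alpha)\sum_{\tau}^{t_e}c_l\le (1-\alpha)M+g(\alpha)\sum_{\tau}^{t_e}x_l$, which is the exact negation of the displayed inequality. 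The same comparison against level $0$ rules out a full eviction. Therefore $\alpha$-RR remains at full service throughout $(t_f,m]$, which is the remaining assertion.

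The main obstacle, and the point that makes the statement true at all, is fact (i): if the hindsight cost penalised only fetches (as the realised cost does), a light sub-interval inside $[n,m]$ could make $\alpha$-RR evict while $\alpha$-OPT, facing a genuine re-fetch cost, stays put, and the lemma would be false. It is precisely the symmetric $M\lvert R(j)-R(j-1)\rvert$ term that aligns $\alpha$-RR's eviction threshold with the re-fetch penalty appearing in $\alpha$-OPT's optimality condition, turning the two inequalities into exact negations. The remaining care is bookkeeping: the off-by-one boundary terms, handled cleanly by Lemmas~\ref{lem:max_requests} and \ref{lem:min_requests}, and tie-breaking when $(\ast)$ holds with equality, which I would resolve by noting that $\alpha$-OPT's fetch is strictly preferred, so the inequalities can be taken strict.
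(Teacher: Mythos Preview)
Your proposal is correct and follows essentially the same approach as the paper: both arguments read off the threshold inequality that would make $\alpha$-RR evict (or fail to fetch) on some sub-window of $[n,m]$ and then contradict it by exhibiting an alternative hosting profile on that same sub-window that would beat $\alpha$-OPT, thereby violating its optimality. Your explicit identification of ``fact (i)''---the symmetric $M|R(j)-R(j-1)|$ penalty in \texttt{totalCost}---is a clean way to explain why $\alpha$-RR's eviction threshold coincides exactly with $\alpha$-OPT's re-fetch penalty; the paper carries out the same comparison by constructing the policy $\eta$ case by case without naming this mechanism, but the resulting inequalities are identical.
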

\begin{proof}
 Since $r^*_{n-1}=\alpha$, $r_t^* =1$ for $n \leq t \leq m$, and $r_{m+1}^* \neq 1$, the total cost incurred by $\alpha$-OPT in $n \leq t \leq m$ is $C^{\alpha\text{-OPT}}=\displaystyle\sum_{l=n}^{m} c_l+(1-\alpha)M.$
 Assume that $\alpha\text{-RR}$ evicts $\alpha$ fraction of service at $m_1 \in [n+1,m]$. Therefore there exists $n< n_1<m_1$  such that $\text{totalCost}(R_{\alpha}^{n_1},I_{m_1})\geq \text{totalCost}(R_0^{n_1},I_{m_1})$.  Which implies
 $(1-g(\alpha))\displaystyle\sum_{l=n_1}^{m_1} x_l+\alpha M < \alpha \displaystyle\sum_{l=n_1}^{m_1} c_l$ and hence $\displaystyle\sum_{l=n_1}^{m_1} x_l + M< \displaystyle\sum_{l=n_1}^{m_1} c_l$ when $\alpha+g(\alpha)<1$. 
 We  construct another policy $\eta$ which is same $\alpha$-OPT except that  $r^{\eta}_t = 0$ during  $n_1 \leq t \leq m_1$. 
 The total cost incurred by $\eta$ in $n \leq t \leq m$ is $C^{\eta}=(1-\alpha)M+\displaystyle\sum_{l=n}^{n_1-1} c_l+\displaystyle\sum_{l=n_1}^{m_1} x_l+M+\displaystyle\sum_{l=m_1+1}^{m-1} c_l.$
 The difference $C^{\eta}-C^{\alpha-\text{OPT}}=\displaystyle\sum_{l=n_1}^{m_1} x_l+ M-\displaystyle\sum_{l=n_1}^{m_1} c_l$,  is negative. Therefore our assumption that $\alpha\text{-RR}$ evicts $\alpha$ fraction of service during $[n+1, m]$ is false.
 
 Assume that $\alpha\text{-RR}$ does not fetch remaining $(1-\alpha)$ fraction of service at any time-slot during $[n+1,m].$
 Therefore for any $n< n_1<m$, $\text{totalCost}(R_{\alpha}^{n_1},I_m)< \text{totalCost}(R_1^{n_1},I_m)$. Which implies $g(\alpha)\displaystyle\sum_{l=n_1}^m x_l < (1-\alpha)\displaystyle\sum_{l=n_1}^m c_l+M(1-\alpha)$ for any $n_1 \in (n,m)$.
 We  construct another policy $\eta$ which is same $\alpha$-OPT except that  $r^{\eta}_t = \alpha$ during  $n \leq t \leq m$.
 The total cost incurred by $\eta$ in $n\leq t \leq m$ is $C^{\eta}=\alpha \displaystyle\sum_{l=n}^{m} c_l+ g(\alpha)\displaystyle\sum_{l=n}^{m} x_l.$
The difference $C^{\eta}-C^{\alpha\text{-OPT}}=g(\alpha)\displaystyle\sum_{l=n}^{m} x_l-(1-\alpha)M-(1-\alpha)\displaystyle\sum_{l=n}^{m} c_l$,  is negative. So our assumption is false.

Assume that after fetching $(1-\alpha)$ fraction of service at $\widetilde{n} \in [n+1, m] $, $\alpha\text{-RR}$ evicts full service at $m_1 \in [\widetilde{n}+1,m]$. 
Therefore there exists $\widetilde{n}< n_1<m_1$  such that
$\text{totalCost}(R_0^{n_1},I_{m_1})< \text{totalCost}(R_1^{n_1},I_{m_1})$. Which means
$\displaystyle\sum_{l=n_1}^{m_1} x_l+ M < \displaystyle\sum_{l=n_1}^{m_1} c_l$. 
 We  construct another policy $\eta$ which is same $\alpha$-OPT except that  $r^{\eta}_t = 0$ during  $n_1 \leq t \leq m_1$. 
 The total cost incurred by $\eta$ in $n \leq t \leq m$ is $C^{\eta}=M(1-\alpha)+\displaystyle\sum_{l=n}^{n_1-1} c_l+\displaystyle\sum_{l=n_1}^{m_1} x_l+M+\displaystyle\sum_{l=m_1+1}^{m} c_l.$
 The difference $C^{\eta}-C^{\alpha-\text{OPT}}=\displaystyle\sum_{l=n_1}^{m_1} x_l+M-\displaystyle\sum_{l=n_1}^{m_1} c_l$, is negative. So our assumption is false.
 
 Assume that after fetching $(1-\alpha)$ fraction of service at $\widetilde{n} \in [n+1, m] $, $\alpha\text{-RR}$ evicts $(1-\alpha)$ service at $m_1 \in [\widetilde{n}+1,m]$. 
Therefore there exists $\widetilde{n}< n_1<m_1$  such that 
$\text{totalCost}(R_{\alpha}^{n_1},I_{m_1})< \text{totalCost}(R_1^{n_1},I_{m_1})$. Which implies
$g(\alpha)\displaystyle\sum_{l=n_1}^{m_1} x_l+(1-\alpha) M < (1-\alpha)\displaystyle\sum_{l=n_1}^{m_1} c_l$. 
 We  construct another policy $\eta$ which is same $\alpha$-OPT except that  $r^{\eta}_t = \alpha$ during  $n_1 \leq t \leq m_1$. 
 The total cost incurred by $\eta$ in $n \leq t \leq m$ is $C^{\eta}=M(1-\alpha)+\displaystyle\sum_{l=n}^{n_1-1} c_l+g(\alpha)\displaystyle\sum_{l=n_1}^{m_1} x_l+\alpha \displaystyle\sum_{l=n_1}^{m_1} c_l+M(1-\alpha)+\displaystyle\sum_{l=m_1+1}^{m} c_l.$
 The difference $C^{\eta}-C^{\alpha-\text{OPT}}=g(\alpha)\displaystyle\sum_{l=n_1}^{m_1} x_l+M(1-\alpha)-(1-\alpha)\displaystyle\sum_{l=n_1}^{m_1} c_l$, is negative. So our assumption is false.
 This proves the result.
\end{proof}

\begin{lemma}\label{lem:evict_oneminusalpha}
Let $r^{\alpha\text{-RR}}_t$ denote the hosting status under $\alpha$-RR in time-slot $t$.
If $r_{n-1}^*=r^{\alpha\text{-RR}}_{n-1}=1$, $r_t^* =\alpha$ for $n \leq t \leq m$, and $r_{m+1}^*=1$ then $\alpha\text{-RR}$ evicts  $(1-\alpha)$ fraction of service at some time-slot in $[n+1, m]$. 
After evicting $(1-\alpha)$ fraction of service, $\alpha\text{-RR}$ does not evict $\alpha$ fraction of service or does not fetch $(1-\alpha)$ fraction of service till the end of time-slot $m.$
\end{lemma}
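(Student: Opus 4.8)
The plan is to prove this lemma as the exact dual of Lemma~\ref{lem:download_oneminusalpha} (which I may assume), replacing every fetch by an eviction and swapping the roles of levels $0$ and $1$. The hypotheses synchronize $\alpha$-OPT and $\alpha$-RR at level $1$ at the end of slot $n-1$, with $\alpha$-OPT sitting at level $\alpha$ throughout $[n,m]$. Since $r^{\alpha\text{-RR}}_{n-1}=r^*_{n-1}=1\neq\alpha=r^*_n$, Lemma~\ref{lem:RR_goback_time} applies and tells me that $\alpha$-RR only looks back to slot $n$ when making any decision in $[n,m]$, so every sub-interval arising below starts at some $n_1\geq n$. The argument then proceeds by four successive proofs-by-contradiction, each following the template: assume $\alpha$-RR makes an undesired transition at a slot $m_1$; read off from Algorithm~\ref{algo:alphaRR} (whose internal cost charges $M\lvert R(j)-R(j-1)\rvert$ for \emph{every} level change) the pairwise $\mathrm{minCost}$ inequality that this transition forces over the sub-interval $[n_1,m_1]$ selected by the minimizing $\tau$; then build a policy $\eta$ agreeing with $\alpha$-OPT everywhere except on $[n_1,m_1]$ and show $C^{\eta}<C^{\alpha\text{-OPT}}$, contradicting the optimality of $\alpha$-OPT.

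Concretely, I would rule out, in order: (i) that $\alpha$-RR evicts the full service ($1\to 0$) at some $m_1\in[n+1,m]$ before any partial eviction; (ii) that $\alpha$-RR keeps the full service for all of $[n,m]$, so it must evict $(1-\alpha)$ at some slot $\tilde n$; (iii) that after reaching $\alpha$ at $\tilde n$ it evicts the remaining $\alpha$ ($\alpha\to 0$); and (iv) that after reaching $\alpha$ it re-fetches $(1-\alpha)$ ($\alpha\to 1$). Together these give the claim. Because $\alpha$-OPT rests at level $\alpha$, in every step the inequality I extract is the comparison of the \emph{target level against level $\alpha$}: moves toward $0$ use $\mathrm{minCost}(0)\le\mathrm{minCost}(\alpha)$, and moves toward $1$ use $\mathrm{minCost}(1)\le\mathrm{minCost}(\alpha)$. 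For the move-to-$0$ case (steps (i) and (iii), which yield the identical inequality whether $\alpha$-RR is currently at $1$ or at $\alpha$) this reads $\alpha M+(1-g(\alpha))\sum_{l=n_1}^{m_1}x_l<\alpha\sum_{l=n_1}^{m_1}c_l$; the matching $\eta$ drops $\alpha$-OPT from $\alpha$ to $0$ on $[n_1,m_1]$ (eviction is free) and re-fetches $\alpha$ afterwards at cost $\alpha M$, so that $C^{\eta}-C^{\alpha\text{-OPT}}=(1-g(\alpha))\sum x_l+\alpha M-\alpha\sum c_l<0$, exactly the negated forced inequality. The move-to-$1$ cases are handled symmetrically, with $\eta$ holding level $1$ on the relevant interval and either saving a boundary $(1-\alpha)M$ fetch (step (ii)) or paying one (step (iv)); in each case $\eta$ agrees with $\alpha$-OPT from $m_1+1$ (or $m+1$) onward, so $\eta$ is a legitimate globally defined competitor and the cost difference again collapses to the negation of the forced inequality.

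The hard part will be choosing the correct pairwise $\mathrm{minCost}$ comparison, which is precisely where a naive mirror of Lemma~\ref{lem:download_oneminusalpha} fails. We operate in the regime $\alpha+g(\alpha)<1$ — the only regime in which $\alpha$-OPT ever rests at $\alpha$, by Theorem~\ref{thm:opt_nointermediate} — so $1-g(\alpha)>\alpha$, and consequently the ``strong'' inequality $\sum x_l+M<\sum c_l$ coming from $\mathrm{minCost}(0)\le\mathrm{minCost}(1)$ does \emph{not} imply the inequality $(1-g(\alpha))\sum x_l+\alpha M<\alpha\sum c_l$ needed to make an $\eta$ that deviates from $\alpha$-OPT's level-$\alpha$ trajectory profitable. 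The remedy is to always use the half of the $\arg\min$ condition that compares the target level directly against level $\alpha$, never against level $1$; this produces exactly the inequality dual to $\eta$'s deviation and sidesteps the wrong-direction estimate. The remaining work — bookkeeping the boundary transition terms ($\alpha M$ re-fetches, or the saved/paid $(1-\alpha)M$ fetch at the frame edge) and verifying the off-by-one endpoints, just as in the proof of Lemma~\ref{lem:download_oneminusalpha} — I expect to be routine once these comparisons are fixed.
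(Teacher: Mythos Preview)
Your proposal is correct and matches the paper's own proof: the same four contradiction cases in the same order, the same extraction of the comparison against level $\alpha$ (rather than against level $1$) from the $\arg\min$, and the same competitor $\eta$ deviating from $\alpha$-OPT on $[n_1,m_1]$ in each case. One small caveat: Lemma~\ref{lem:RR_goback_time} as stated and proved in the paper covers only \emph{fetching} decisions, so to guarantee $n_1\ge n$ for the eviction steps you strictly need its (straightforward) eviction dual; the paper itself simply asserts $n_1\in(n,m_1]$ without citing any lemma.
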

\begin{proof}
 Since $r_{n-1}^*=1$, $r_t^* =\alpha$ for $n \leq t \leq m$, and $r_{m+1}^*\neq \alpha$ the total cost incurred by $\alpha$-OPT in $n \leq t \leq m$ is $C^{\alpha\text{-OPT}}=g(\alpha)\displaystyle\sum_{l=n}^m x_l +\alpha \displaystyle\sum_{l=n}^m c_l.$
 Assume that $\alpha\text{-RR}$  evicts full service at  $m_1 \in [n+1,m]$. Therefore there exists $n_1 \in (n, m_1]$ such that
 $\text{totalCost}(R_0^{n_1},I_{m_1})< \text{totalCost}(R_1^{n_1},I_{m_1})$ and $\text{totalCost}(R_0^{n_1},I_{m_1})< \text{totalCost}(R_{\alpha}^{n_1},I_{m_1})$. Which implies
 $\displaystyle\sum_{l=n_1}^{m_1} x_l+ M < \displaystyle\sum_{l=n_1}^{m_1} c_l$ and $(1-g(\alpha))\displaystyle\sum_{l=n_1}^{m_1} x_l+\alpha M< \alpha \displaystyle\sum_{l=n_1}^{m_1} c_l$.
 We  construct another policy $\eta$ which is same $\alpha$-OPT except that  $r^{\eta}_t = 0$ during  $n_1 \leq t \leq m_1$. 
 The total cost incurred by $\eta$ in $n \leq t \leq m$ is $C^{\eta}=\alpha\displaystyle\sum_{l=n}^{n_1-1} c_l+g(\alpha)\displaystyle\sum_{l=n}^{n_1-1} x_l+\displaystyle\sum_{l=n_1}^{m_1} x_l+\alpha M +\alpha\displaystyle\sum_{l=m_1+1}^{m} c_l+g(\alpha)\displaystyle\sum_{l=m_1+1}^{m}x_l.$
 The difference $C^{\eta}-C^{\alpha\text{-OPT}}=(1-g(\alpha))\displaystyle\sum_{l=n_1}^{m_1} x_l+\alpha M- \alpha \displaystyle\sum_{l=n_1}^{m_1} c_l$, is negative. So our assumption is false.
 
Assume that $\alpha\text{-RR}$ does not evict $(1-\alpha)$ fraction of service at any time-slot in $[n+1,m]$.
Therefore for any $ n_1\in (n, m]$,
$\text{totalCost}(R_{\alpha}^{n_1},I_m)\geq \text{totalCost}(R_1^{n_1},I_m)$. Which implies
$g(\alpha)\displaystyle\sum_{l=n_1}^{m} x_l+(1-\alpha) M \geq (1-\alpha) \displaystyle\sum_{l=n_1}^{m} c_l$.
 We  construct another policy $\eta$ which is same $\alpha$-OPT except that  $r^{\eta}_t = 1$ during  $n \leq t \leq m$. 
 The total cost incurred by $\eta$ in $n \leq t \leq m$ is $C^{\eta}=\displaystyle\sum_{l=n}^{m} c_l.$
 When $r_{m+1}^*=1$, the difference $C^{\eta}-C^{\alpha-\text{OPT}}=(1-\alpha) \displaystyle\sum_{l=n}^{m} c_l-g(\alpha)\displaystyle\sum_{l=n}^{m} x_l -(1-\alpha) M$, is negative. So our assumption is false.

 Assume that after evicting $(1-\alpha)$ fraction of service  at $\widetilde{n} \in [n+1, m] $, $\alpha\text{-RR}$ evicts $\alpha$ fraction of service  at $m_1 \in [\widetilde{n}+1,m]$.
 Therefore there exists $\widetilde{n}< n_1<m_1$  such that
 $\text{totalCost}(R_0^{n_1},I_{m_1})< \text{totalCost}(R_{\alpha}^{n_1},I_{m_1})$. Which implies
 $(1-g(\alpha))\displaystyle\sum_{l=n_1}^{m_1} x_l+\alpha M < \alpha \displaystyle\sum_{l=n_1}^{m_1} c_l$. 
 We  construct another policy $\eta$ which is same $\alpha$-OPT except that  $r^{\eta}_t = 0$ during  $n_1 \leq t \leq m_1$. 
 The total cost incurred by $\eta$ in $n \leq t \leq m$ is $C^{\eta}=\alpha \displaystyle\sum_{l=n}^{n_1-1} c_l+g(\alpha)\displaystyle\sum_{l=n}^{n_1-1} x_l+\displaystyle\sum_{l=n_1}^{m_1} x_l+\alpha \displaystyle\sum_{l=m_1+1}^{m} c_l+g(\alpha)\displaystyle\sum_{l=m_1+1}^{m} x_l+\alpha M.$
 The difference $C^{\eta}-C^{\alpha-\text{OPT}}=(1-g(\alpha))\displaystyle\sum_{l=n_1}^{m_1} x_l+\alpha M-\alpha \displaystyle\sum_{l=n_1}^{m_1} c_l $, is negative. So our assumption is false.
 
 Assume that after evicting $(1-\alpha)$ fraction of service at $\widetilde{n} \in [n+1, m] $, $\alpha\text{-RR}$ fetches $(1-\alpha)$ fraction of service  at $m_1 \in [\widetilde{n}+1,m]$.
 Therefore there exists $\widetilde{n}< n_1<m_1$  such that
 $\text{totalCost}(R_{\alpha}^{n_1},I_{m_1})\geq \text{totalCost}(R_1^{n_1},I_{m_1})$. Which implies
 $g(\alpha)\displaystyle\sum_{l=n_1}^{m_1} x_l\geq (1-\alpha) \displaystyle\sum_{l=n_1}^{m_1} c_l+(1-\alpha)M$.
 We  construct another policy $\eta$ which is same $\alpha$-OPT except that  $r^{\eta}_t = 1$ during  $n_1 \leq t \leq m_1$. 
 The total cost incurred by $\eta$ in $n \leq t \leq m$ is $C^{\eta}=\alpha \displaystyle\sum_{l=n}^{n_1-1} c_l+g(\alpha)\displaystyle\sum_{l=n}^{n_1-1} x_l+(1-\alpha) M+\displaystyle\sum_{l=n_1}^{m_1} c_l+\alpha \displaystyle\sum_{l=m_1+1}^{m} c_l+g(\alpha)\displaystyle\sum_{l=m_1+1}^{m} x_l.$
 The difference $C^{\eta}-C^{\alpha-\text{OPT}}=(1-\alpha) \displaystyle\sum_{l=n_1}^{m_1} c_l+(1-\alpha) M-g(\alpha)\displaystyle\sum_{l=n_1}^{m_1} x_l$,  is negative. So our assumption is false. 
 This proves the result.
 \end{proof}

 \begin{lemma}\label{lem:evict_full}
 Let $r^*_t$ and $r^{\alpha\text{-RR}}_t$ denote the hosting status under $\alpha$-OPT and $\alpha$-RR respectively in time-slot $t$. If $r_{n-1}^*=r^{\alpha\text{-RR}}_{n-1}=1$, $r_t^* =0$ for $n \leq t \leq m$, and $r_{m+1}^* \neq 0$ then $\alpha\text{-RR}$ evicts  full service at some time-slot in $[n+1, m]$. 
After evicting full service, $\alpha\text{-RR}$ does not fetch any fraction of service  till the end of time-slot $m.$
\end{lemma}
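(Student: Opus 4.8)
The plan is to mirror the contradiction arguments already used for Lemmas \ref{lem:download_oneminusalpha} and \ref{lem:evict_oneminusalpha}: whenever $\alpha$-RR is assumed to deviate from the claimed behaviour, I extract a cost inequality directly from the decision rule of Algorithm \ref{algo:alphaRR}, and then exhibit a policy $\eta$ that agrees with $\alpha$-OPT everywhere except on the offending subinterval yet is strictly cheaper, contradicting the optimality of $\alpha$-OPT. Two preliminary observations set this up. First, since $r^*_t=0$ on $[n,m]$ and eviction is free, $\alpha$-OPT's cost on this frame is simply $C^{\alpha\text{-OPT}}=\sum_{l=n}^{m}x_l$. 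Second, because $r^{\alpha\text{-RR}}_{n-1}=r^*_{n-1}=1\neq r^*_n$, Lemma \ref{lem:RR_goback_time} lets me restrict every \texttt{totalCost} comparison made by $\alpha$-RR in this frame to windows whose left endpoint is at least $n$, which keeps the bookkeeping local.

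First I would establish the existence of a clean $1\to 0$ eviction inside $[n+1,m]$, ruling out the two ways this could fail. If $\alpha$-RR were to remain at $1$ throughout $[n,m]$, the rule would declare staying at $1$ no costlier than switching to $0$, i.e. $\sum_{l=n_1}^{m}c_l\le\sum_{l=n_1}^{m}x_l$ for the relevant windows; then the policy $\eta$ that copies $\alpha$-OPT but hosts the full service on $[n,m]$ costs at most $\alpha$-OPT, and strictly less once the re-fetch that $\alpha$-OPT must pay at $m+1$ (guaranteed by $r^*_{m+1}\neq0$) is charged against it --- a contradiction. If instead $\alpha$-RR were to stop at the intermediate level $\alpha$ without ever reaching $0$, then $\alpha$-RR preferring $\alpha$ over $0$ gives $\alpha M+\alpha\sum_{l=n_1}^{m_1}c_l+g(\alpha)\sum_{l=n_1}^{m_1}x_l<\sum_{l=n_1}^{m_1}x_l$ on some window, so an $\eta$ hosting $\alpha$ on $[n_1,m_1]$ (fetching $\alpha M$ at $n_1$, evicting for free at $m_1$) strictly beats $\alpha$-OPT, again a contradiction. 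Ruling out both obstructions forces a direct eviction of the full service at some $\widetilde{n}\in[n+1,m]$.

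Next I would show that after $\widetilde{n}$, $\alpha$-RR fetches no fraction of the service before the end of $m$. Suppose it fetched the full service at some $m_1\in[\widetilde{n}+1,m]$: the rule yields $\sum_{l=n_1}^{m_1}x_l>\sum_{l=n_1}^{m_1}c_l+M$ on some window, and the policy $\eta$ that copies $\alpha$-OPT but hosts the full service on $[n_1,m_1]$ is strictly cheaper. Suppose instead it fetched only $\alpha$: the rule yields $(1-g(\alpha))\sum_{l=n_1}^{m_1}x_l>\alpha\sum_{l=n_1}^{m_1}c_l+\alpha M$, and an $\eta$ hosting $\alpha$ on $[n_1,m_1]$ is strictly cheaper. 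Each case contradicts the optimality of $\alpha$-OPT on $[n,m]$, so $\alpha$-RR stays at $0$ until $m$, which completes the proof.

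The hard part will be the sub-cases involving the intermediate level $\alpha$, where the extracted inequalities mix the service term $g(\alpha)$ with the rent/fetch term $\alpha$; here I expect to need the working regime $\alpha+g(\alpha)<1$ (equivalently $1-g(\alpha)>\alpha$) to convert a partial-hosting-favourable inequality into a full-hosting or forwarding-favourable one before building $\eta$, exactly as in Lemma \ref{lem:download_oneminusalpha}. The second recurring nuisance is the boundary bookkeeping: the windows $[n_1,m_1]$ produced by the decision rule must be aligned with the interval on which $\eta$ deviates from $\alpha$-OPT, and the strictness of the contradiction in the \emph{stays-at-$1$} case hinges specifically on the re-fetch cost that $r^*_{m+1}\neq0$ forces on $\alpha$-OPT at $m+1$.
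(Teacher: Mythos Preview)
Your overall plan---contradict the optimality of $\alpha$-OPT by exhibiting a cheaper $\eta$ whenever $\alpha$-RR deviates from the claimed trajectory---is exactly the paper's approach, and your treatment of the second part (no fetch after the eviction) matches the paper's argument essentially word for word.

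The one place where your outline diverges from the paper and develops a gap is the first part. You split according to $\alpha$-RR's trajectory (stays at $1$ vs.\ drops to $\alpha$), whereas the paper splits according to $r_{m+1}^*\in\{1,\alpha\}$. In your subcase~B the inequality you write down does not follow from the decision rule: when $\alpha$-RR is at level $1$ (or $\alpha$) and ``prefers $\alpha$ over $0$'', the switch costs in \texttt{totalCost} are $(1-\alpha)M$ and $M$ (resp.\ $0$ and $\alpha M$), so what you actually get on the relevant window is $(1-g(\alpha))\sum x_l+\alpha M\geq \alpha\sum c_l$, not $(1-g(\alpha))\sum x_l>\alpha M+\alpha\sum c_l$. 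With the $\alpha M$ on the other side, your $\eta$ that \emph{fetches} $\alpha$ from state $0$ cannot be shown cheaper.

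The paper avoids this by never having $\eta$ fetch. It sets $\eta$ to host at level $r_{m+1}^*$ throughout $[n,m]$: since $r_{n-1}^*=1$, going to $r_{m+1}^*\in\{1,\alpha\}$ is an eviction and hence free for $\eta$, while $\alpha$-OPT must pay the fetch $r_{m+1}^*M$ (resp.\ $M$ or $\alpha M$) at the end of slot $m$. The inequality one then needs---$\sum c_l\le\sum x_l+M$ when $r_{m+1}^*=1$, and $\alpha\sum c_l\le(1-g(\alpha))\sum x_l+\alpha M$ when $r_{m+1}^*=\alpha$---is precisely what the $\alpha$-RR rule delivers from ``never choosing $0$''. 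Re-organising your first part along the paper's $r_{m+1}^*$ split will close the gap; your subcase~A already uses the re-fetch at $m+1$ correctly, you just need the analogous construction for $r_{m+1}^*=\alpha$ in place of your current subcase~B.
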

\begin{proof}
 Since $r_{n-1}^*=1$, $r_t^* =0$ for $n \leq t \leq m$, and $r_{m+1}^* \neq 0$,  the total cost incurred by $\alpha$-OPT in $n \leq t \leq m$ is $C^{\alpha\text{-OPT}}=\displaystyle\sum_{l=n}^{m} x_l.$
 
 Assume that $\alpha\text{-RR}$ does not evict full service at any $m_1 \in [n+1,m]$. Therefore for any $n\leq n_1<m_1\leq m$,
 $\text{totalCost}(R_0^{n_1},I_{m_1}) \geq \text{totalCost}(R_1^{n_1},I_{m_1})$ and $\text{totalCost}(R_0^{n_1},I_{m_1}) \geq \text{totalCost}(R_{\alpha}^{n_1},I_{m_1})$. Which implies
 $\displaystyle\sum_{l=n_1}^{m_1} x_l+ M \geq \displaystyle\sum_{l=n_1}^{m_1} c_l$ and $\displaystyle\sum_{l=n_1}^{m_1} x_l+ M \geq g(\alpha)\displaystyle\sum_{l=n_1}^{m_1} x_l+ (1-\alpha)M \alpha\displaystyle\sum_{l=n_1}^{m_1} c_l$.
 Suppose $r_{m+1}^*=1$, we  construct another policy $\eta$ which is same $\alpha$-OPT except that  $r^{\eta}_t = 1$ during  $n \leq t \leq m$. When , the total cost incurred by $\eta$ in $n \leq t \leq m+1$ is $C^{\eta}=\displaystyle\sum_{l=n}^{m+1} c_l.$
  In this case the difference $C^{\eta}-C^{\alpha-\text{OPT}}=\displaystyle\sum_{l=n}^{m} c_l-\displaystyle\sum_{l=n}^{m} x_l-M $, is negative. So our assumption is false. Therefore, $\alpha\text{-RR}$ evicts  full service at some time-slot in $[n+1, m]$. Suppose $r_{m+1}^*=\alpha$, we  construct another policy $\eta$ which is same $\alpha$-OPT except that  $r^{\eta}_t = \alpha$ during  $n \leq t \leq m$. When , the total cost incurred by $\eta$ in $n \leq t \leq m+1$ is $C^{\eta}=\alpha \displaystyle\sum_{l=n}^{m+1} c_l+g(\alpha) x_l.$
  In this case the difference $C^{\eta}-C^{\alpha-\text{OPT}}=\alpha\displaystyle\sum_{l=n}^{m} c_l-(1-g(\alpha))\displaystyle\sum_{l=n}^{m} x_l-\alpha M $, is negative.
  
  Assume that after evicting full service at $\widetilde{n} \in [n+1, m] $, $\alpha\text{-RR}$ fetches $\alpha$ fraction of service  at $m_1 \in [\widetilde{n}+1,m]$.
 Therefore there exists $\widetilde{n}< n_1<m_1$  such that
 $\text{totalCost}(R_0^{n_1},I_{m_1}) \geq \text{totalCost}(R_{\alpha}^{n_1},I_{m_1})$. Which implies
 
 $(1-g(\alpha))\displaystyle\sum_{l=n_1}^{m_1} x_l \geq \alpha \displaystyle\sum_{l=n_1}^{m_1} c_l+\alpha M$.
 We  construct another policy $\eta$ which is same $\alpha$-OPT except that  $r^{\eta}_t = \alpha$ during  $n_1 \leq t \leq m_1$. 
 The total cost incurred by $\eta$ in $n \leq t \leq m$ is $C^{\eta}=\displaystyle\sum_{l=n}^{n_1-1} c_l+\alpha \displaystyle\sum_{l=n_1}^{m_1} c_l+g(\alpha)\displaystyle\sum_{l=n_1}^{m_1} x_l+\alpha M+\displaystyle\sum_{l=m_1+1}^{m} x_l.$
 The difference $C^{\eta}-C^{\alpha-\text{OPT}}=\alpha M+ \alpha\displaystyle\sum_{l=n_1}^{m_1} c_l-(1-g(\alpha)) \displaystyle\sum_{l=n_1}^{m_1} x_l $, is negative. So our assumption is false.
 
 Assume that after evicting full service at $\widetilde{n} \in [n+1, m] $, $\alpha\text{-RR}$ fetches full service  at $m_1 \in [\widetilde{n}+1,m]$.
 Therefore there exists $\widetilde{n}< n_1<m_1$  such that
 $\text{totalCost}(R_0^{n_1},I_{m_1}) \geq \text{totalCost}(R_1^{n_1},I_{m_1})$.Which implies
 $\displaystyle\sum_{l=n_1}^{m_1} x_l \geq \displaystyle\sum_{l=n_1}^{m_1} c_l+ M$. 
 We  construct another policy $\eta$ which is same $\alpha$-OPT except that  $r^{\eta}_t = 1$ during  $n_1 \leq t \leq m_1$. 
 The total cost incurred by $\eta$ in $n \leq t \leq m$ is $C^{\eta}=\displaystyle\sum_{l=n}^{n_1-1} x_l+ M+\displaystyle\sum_{l=n_1}^{m_1} c_l+\displaystyle\sum_{l=m_1+1}^{m} x_l.$
 The difference $C^{\eta}-C^{\alpha-\text{OPT}}= M+\displaystyle\sum_{l=n_1}^{m_1} c_l-\displaystyle\sum_{l=n_1}^{m_1} x_l $ is negative. So our assumption is false. 

 This proves the result.
 \end{proof}

 \begin{lemma}\label{lem:download_full}
 Let $r^{\alpha\text{-RR}}_t$ denote the hosting status under $\alpha$-RR in time-slot $t$.
If $r_{n-1}^*=r^{\alpha\text{-RR}}_{n-1}=0$, $r_t^* =1$ for $n \leq t \leq m$, and $r_{m+1}^* \neq 1$ then $\alpha\text{-RR}$ fetches  full service at some time-slot in $[n+1, m]$. 
After fetching full service, $\alpha\text{-RR}$ does not evict any fraction of service till the end of time-slot $m.$
\end{lemma}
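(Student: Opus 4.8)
The plan is to reuse the interchange-argument template that drives the companion lemmas (Lemmas~\ref{lem:download_oneminusalpha}--\ref{lem:evict_full}): every assertion is proved by contradiction, and each contradiction is produced by exhibiting an alternative policy $\eta$ that copies $\alpha$-OPT everywhere except on one sub-interval, on which it is strictly cheaper, violating the optimality of $\alpha$-OPT. Since $r^*_{n-1}=0$ and $\alpha$-OPT fetches the whole service at the end of slot $n-1$ and holds it through $m$, its cost over $[n,m]$ is $M+\sum_{l=n}^m c_l$. Because $r^{\alpha\text{-RR}}_{n-1}=r^*_{n-1}=0\neq r^*_n$, Lemma~\ref{lem:RR_goback_time} lets me assume $\alpha$-RR looks back only as far as slot $n$ when deciding to fetch, so the retrospective comparisons of Algorithm~\ref{algo:alphaRR} take place over windows contained in $[n,m]$. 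I split the proof into the two claimed parts: (i) $\alpha$-RR reaches the full state $r=1$ somewhere in $[n+1,m]$, and (ii) after it does so it performs no eviction before the end of slot $m$.

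For part (i) I assume the contrary, that the hosting status of $\alpha$-RR stays in $\{0,\alpha\}$ on all of $[n,m]$. The $\arg\min$ rule of Algorithm~\ref{algo:alphaRR} then makes the full pattern strictly costlier over the window than both alternatives; substituting $\text{totalCost}(R_1^{n},I_m)=M+\sum_{l=n}^m c_l$, $\text{totalCost}(R_0^{n},I_m)=\sum_{l=n}^m x_l$ and $\text{totalCost}(R_\alpha^{n},I_m)=\alpha M+\alpha\sum_{l=n}^m c_l+g(\alpha)\sum_{l=n}^m x_l$ yields the strict inequalities $\sum_{l=n}^m x_l<M+\sum_{l=n}^m c_l$ and $g(\alpha)\sum_{l=n}^m x_l<(1-\alpha)M+(1-\alpha)\sum_{l=n}^m c_l$. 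I then split on the boundary slot $m+1$: if $r^*_{m+1}=0$ I let $\eta$ copy $\alpha$-OPT but stay at $r=0$ on $[n,m]$, giving gap $\sum_{l=n}^m x_l-M-\sum_{l=n}^m c_l<0$; if $r^*_{m+1}=\alpha$ I instead let $\eta$ stay at $r=\alpha$ on $[n,m]$ (so it rejoins $\alpha$-OPT at $m+1$ with no extra fetch), giving gap $g(\alpha)\sum_{l=n}^m x_l-(1-\alpha)M-(1-\alpha)\sum_{l=n}^m c_l<0$. In both cases $\eta$ strictly beats $\alpha$-OPT, a contradiction, so $\alpha$-RR fetches the full service at some $\widetilde n\in[n+1,m]$.

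For part (ii) I suppose that, after fetching the full service at $\widetilde n$, $\alpha$-RR evicts part of it at some $m_1\in[\widetilde n+1,m]$. If it evicts all the way to $r=0$, the defining comparison yields an $n_1\in(\widetilde n,m_1)$ with $\text{totalCost}(R_0^{n_1},I_{m_1})<\text{totalCost}(R_1^{n_1},I_{m_1})$, i.e.\ $\sum_{l=n_1}^{m_1}x_l+M<\sum_{l=n_1}^{m_1}c_l$; if it evicts only down to $r=\alpha$, I get $g(\alpha)\sum_{l=n_1}^{m_1}x_l+(1-\alpha)M<(1-\alpha)\sum_{l=n_1}^{m_1}c_l$. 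In either case I let $\eta$ copy $\alpha$-OPT but drop to $r=0$ (resp.\ $r=\alpha$) on $[n_1,m_1]$ and re-fetch to rejoin full hosting at the end of $m_1$ (legitimate because $m_1<m$ and $\alpha$-OPT still holds the full service there), so that $C^{\eta}-C^{\alpha\text{-OPT}}$ collapses to $\sum_{l=n_1}^{m_1}x_l+M-\sum_{l=n_1}^{m_1}c_l$ (resp.\ $g(\alpha)\sum_{l=n_1}^{m_1}x_l-(1-\alpha)\sum_{l=n_1}^{m_1}c_l+(1-\alpha)M$), which is negative by the displayed inequality, contradicting optimality of $\alpha$-OPT.

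The main obstacle is part (i). Unlike Lemma~\ref{lem:download_oneminusalpha}, where $\alpha$-RR starts from $r=\alpha$ and a single comparison ($R_\alpha$ versus $R_1$) settles the matter, here $\alpha$-RR starts from $r=0$ and the ``never reaches full'' hypothesis must simultaneously exclude its dwelling at $0$ and its dwelling at $\alpha$; this forces me to carry both the $R_1$-versus-$R_0$ and the $R_1$-versus-$R_\alpha$ strict inequalities and to pair each boundary case for $r^*_{m+1}$ with the \emph{matching} choice of $\eta$ (at $0$ or at $\alpha$). Keeping this consistent also requires care with the look-back anchor $t_{\text{recent}}$, which resets if $\alpha$-RR first moves to $\alpha$; Lemma~\ref{lem:RR_goback_time} is what guarantees the relevant windows still reach back to $n$, and the strictness in every comparison is exactly what the $\arg\min$ definition of ``does not fetch/evict'' supplies. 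Once this bookkeeping is fixed, the remaining computations are the routine cost-difference expansions already rehearsed in the preceding lemmas.
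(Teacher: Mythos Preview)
Your part (ii) is essentially correct and parallels what the paper does (which simply invokes Lemma~\ref{lem:download_oneminusalpha} for that half). The gap is in part (i).

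The step ``the $\arg\min$ rule of Algorithm~\ref{algo:alphaRR} then makes the full pattern strictly costlier over the window $[n,m]$ than both alternatives'' is not justified. The $\arg\min$ in Algorithm~\ref{algo:alphaRR} is taken over the window $[t_{\text{recent}}+1,t]$, and $t_{\text{recent}}$ is reset as soon as $\alpha$-RR changes state. So if $\alpha$-RR moves from $0$ to $\alpha$ at some $\widetilde n\in(n,m]$, every subsequent comparison is over $[\widetilde n,\cdot]$, not $[n,\cdot]$. Lemma~\ref{lem:RR_goback_time}, which you invoke to push the window back, does not rescue this: its hypothesis $r^*_{n-1}=r^{\alpha\text{-RR}}_{n-1}$ holds only at the initial boundary, so it pins the anchor at $n$ only \emph{before} $\alpha$-RR makes any move within the frame. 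Concretely, nothing prevents a trajectory $0,\ldots,0,\alpha,\ldots,\alpha$ on $[n,m]$ for which $\sum_{l=n}^m x_l\geq M+\sum_{l=n}^m c_l$ (so $R_0^n$ does \emph{not} beat $R_1^n$ over $[n,m]$), while every local comparison that $\alpha$-RR actually performs still favors $0$ or $\alpha$ over $1$. Your displayed inequality $\sum_{l=n}^m x_l<M+\sum_{l=n}^m c_l$ can therefore fail under your own hypothesis.

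The paper closes this by splitting on $\alpha$-RR's trajectory rather than on $r^*_{m+1}$: (a) $\alpha$-RR stays at $0$ throughout; (b) $\alpha$-RR goes to $\alpha$ and then evicts back to $0$ inside $[n,m]$; (c) $\alpha$-RR goes to $\alpha$ and never fetches the remaining $1-\alpha$. In (a) the $R_0$-versus-$R_1$ comparison and $\eta\equiv0$ suffice. In (b) the eviction condition $(1-g(\alpha))\sum x_l+\alpha M<\alpha\sum c_l$ on the relevant sub-interval, combined with $\alpha<1-g(\alpha)$, is upgraded to $\sum x_l+M<\sum c_l$, and $\eta\equiv0$ on that sub-interval contradicts optimality. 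In (c) the $R_\alpha$-versus-$R_1$ comparison over $[\widetilde n,m]$ is used with $\eta\equiv\alpha$ on $[n,m]$. Your plan tries to collapse (a) and (c) into a single inequality over $[n,m]$, and that collapse is precisely what does not go through without the intermediate trajectory case analysis.
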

\begin{proof}
  Since $r_{n-1}^*=0$, $r_t^* =1$ for $n \leq t \leq m$, and $r_{m+1}^* \neq 1$, the total cost incurred by $\alpha$-OPT in $n \leq t \leq m$ is $C^{\alpha-\text{OPT}}(n,m)=M+\displaystyle\sum_{l=n}^{m} c_l.$
  
If $\alpha+g(\alpha)\geq 1$ then by Theorem \ref{thm:opt_nointermediate}, RR will not fetch $\alpha$ fraction of service, so we focus on the case when $\alpha+g(\alpha)< 1$.
 Assume that $\alpha\text{-RR}$ does not fetch any service during $[n+1,m]$. 
 Therefore for any $n_1 \in [n+1,m]$, $\text{totalCost}(R_0^{n_1},I_m) < \text{totalCost}(R_1^{n_1},I_m)$. Which implies
 
 $\displaystyle\sum_{l=n_1}^m x_l < \displaystyle\sum_{l=n_1}^m c_l +M.$
 We  construct another policy $\eta$ which is same $\alpha$-OPT except that  $r^{\eta}_t = 0$ during  $n \leq t \leq m$. 
 The total cost incurred by $\eta$ in $n \leq t \leq m$ is $C^{\eta}=\displaystyle\sum_{l=n}^{m} x_l.$
 The difference $C^{\eta}-C^{\alpha-\text{OPT}}=\displaystyle\sum_{l=n}^{m} x_l-\displaystyle\sum_{l=n_1}^{m_1} c_l-M$ is negative. So our assumption is false.

 If $\alpha\text{-RR}$  fetches  $\alpha$ fraction of service at any $\widetilde{n} \in [n+1,m)$, assume that it evicts the service at $m_1 \in [\widetilde{n}+1,m]$.
 Therefore there exists $\widetilde{n}< n_1<m_1$ such that $\text{totalCost}(R_0^{n_1},I_{m_1}) < \text{totalCost}(R_{\alpha}^{n_1},I_{m_1})$.
 Which implies $(1-g(\alpha))\displaystyle\sum_{l=n_1}^{m_1} x_l+\alpha M < \alpha \displaystyle\sum_{l=n_1}^{m_1} c_l$. Using $\alpha<1-g(\alpha)$ we get $\displaystyle\sum_{l=n_1}^{m_1} x_l+M<\displaystyle\sum_{l=n_1}^{m_1} c_l$
 We  construct another policy $\eta$ which is same $\alpha$-OPT except that  $r^{\eta}_t = 0$ during  $n_1 \leq t \leq m_1$. 
 The total cost incurred by $\eta$ in $n_1 \leq t \leq m_1$ is $C^{\eta}=\displaystyle\sum_{l=n_1}^{m_1} x_l+M.$
 The difference $C^{\eta}-C^{\alpha-\text{OPT}}=\displaystyle\sum_{l=n_1}^{m_1} x_l+M-\displaystyle\sum_{l=n_1}^{m_1} c_l$, is negative. So our assumption is false.
 
 When  $\alpha\text{-RR}$  fetches  $\alpha$ fraction of service at any $\widetilde{n} \in [n+1,m)$, assume that $\alpha\text{-RR}$ does not fetch remaining $(1-\alpha)$ fraction of service at any $m_1 \in [\widetilde{n}+1,m]$.
 Therefore for any $n< n_1<m_1$, $\text{totalCost}(R_{\alpha}^{n_1},I_{m_1}) < \text{totalCost}(R_1^{n_1},I_{m_1})$. $g(\alpha)\displaystyle\sum_{l=n_1}^{m_1} x_l < (1-\alpha) \displaystyle\sum_{l=n_1}^{m_1} c_l+M(1-\alpha).$
 We  construct another policy $\eta$ which is same $\alpha$-OPT except that  $r^{\eta}_t = \alpha$ during  $n\leq t \leq m$. 
 The total cost incurred by $\eta$ in $n \leq t \leq m$ is $C^{\eta}=\alpha M+g(\alpha)\displaystyle\sum_{l=n}^{m} x_l+\displaystyle\sum_{l=n}^{m} c_l.$
 The difference $C^{\eta}-C^{\alpha-\text{OPT}}=g(\alpha)\displaystyle\sum_{l=n}^{m_1} x_l-\displaystyle\sum_{l=n}^{m_1} c_l-(1-\alpha)M$ is negative. So our assumption is false. Therefore $\alpha\text{-RR}$ fetches  full service at some time-slot in $[n+1, m]$.
 
 Using Lemma \ref{lem:download_oneminusalpha}, we conclude that when $r_t^*=r^{\alpha\text{-RR}}_t=1$ for $t\leq m$ then $\alpha\text{-RR}$ does not evict any fraction of service till the time-slot $m.$
\end{proof}

\begin{lemma}\label{lem:download_alpha}
Let $r^{\alpha\text{-RR}}_t$ denote the hosting status under $\alpha$-RR in time-slot $t$.
If $r_{n-1}^*=r^{\alpha\text{-RR}}_{n-1}=0$, $r_t^* =\alpha$ for $n \leq t \leq m$, and $r_{m+1}^* \neq \alpha$ then $\alpha\text{-RR}$ fetches  $\alpha$ fraction of service at some time-slot in $[n+1, m]$. 
After fetching $\alpha$ fraction of service,$\alpha\text{-RR}$ does not evict $\alpha$ fraction of service or does not fetch $(1-\alpha)$ fraction of service till the end of  time-slot $m.$
\end{lemma}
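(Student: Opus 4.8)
The plan is to follow the interchange-argument template already used for Lemmas~\ref{lem:download_full}, \ref{lem:download_oneminusalpha} and \ref{lem:evict_full}: argue by contradiction, read off from Algorithm~\ref{algo:alphaRR} the cost inequality that forced $\alpha$-RR into a forbidden action (the winning candidate has the strictly smaller $\text{totalCost}$), build a reference policy $\eta$ that copies $\alpha$-OPT outside a short window and differs on it, and show $\eta$ is cheaper than $\alpha$-OPT, contradicting optimality. Since $\alpha$-OPT actually hosts $\alpha$ throughout $[n,m]$, Theorem~\ref{thm:opt_nointermediate} places us in the regime $\alpha+g(\alpha)<1$, i.e.\ $\alpha<1-g(\alpha)$. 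I record $C^{\alpha\text{-OPT}}(n,m)=\alpha M+g(\alpha)\sum_{l=n}^m x_l+\alpha\sum_{l=n}^m c_l$, and observe that, because $r^*_{n-1}=r^{\alpha\text{-RR}}_{n-1}=0\neq r^*_n$, Lemma~\ref{lem:RR_goback_time} lets $\alpha$-RR look back only to slot $n$; hence every interchange window lies inside $[n,m]$, where $\alpha$-OPT holds exactly $\alpha$.

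The first claim (that $\alpha$-RR fetches $\alpha$ in $[n+1,m]$) is obtained by excluding the only two ways it could fail. If $\alpha$-RR never leaves $0$, then at slot $m$ the choice $R_0$ beats $R_\alpha^{(n)}$, giving $(1-g(\alpha))\sum_{l=n}^m x_l\le \alpha M+\alpha\sum_{l=n}^m c_l$; when $r^*_{m+1}=0$ this directly contradicts the download inequality $(1-g(\alpha))\sum_{l=n}^m x_l\ge \alpha M+\alpha\sum_{l=n}^m c_l$ furnished by Lemma~\ref{lem:lemma_opt}. If instead $\alpha$-RR jumps $0\to1$ at some $\widetilde n$, the winning inequality $\text{totalCost}(R_1^{(n_1)},I_{\widetilde n})\le\text{totalCost}(R_\alpha^{(n_1)},I_{\widetilde n})$ simplifies to $(1-\alpha)M+(1-\alpha)\sum_{l=n_1}^{\widetilde n}c_l\le g(\alpha)\sum_{l=n_1}^{\widetilde n}x_l$; letting $\eta$ upgrade $\alpha\to1$ on $[n_1,\widetilde n]$ (paying $(1-\alpha)M$ and evicting for free afterwards) makes $C^{\eta}-C^{\alpha\text{-OPT}}$ equal to this same quantity, hence nonpositive, a contradiction. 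With both excluded, the first departure from $0$ must be the $0\to\alpha$ fetch.

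The second claim (that once at $\alpha$ it holds until $m$) needs two more interchange arguments, now on interior windows so that no boundary fetches appear. If $\alpha$-RR evicts $\alpha$ at $m_1$, the condition $\text{totalCost}(R_0^{(n_1)},I_{m_1})<\text{totalCost}(R_\alpha^{(n_1)},I_{m_1})$ becomes $(1-g(\alpha))\sum_{l=n_1}^{m_1}x_l+\alpha M<\alpha\sum_{l=n_1}^{m_1}c_l$; the $\eta$ that drops to $0$ on $[n_1,m_1]$ and re-fetches $\alpha$ afterwards has exactly this quantity as its gap over $\alpha$-OPT (the true re-fetch cost $\alpha M$ matching the eviction term $M\lvert 0-\alpha\rvert$ that $\text{totalCost}$ charges internally), so $\eta$ is cheaper, a contradiction. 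If $\alpha$-RR fetches $1-\alpha$ at $m_1$, the winning inequality gives $(1-\alpha)M+(1-\alpha)\sum_{l=n_1}^{m_1}c_l\le g(\alpha)\sum_{l=n_1}^{m_1}x_l$, and the $\eta$ that upgrades to full on $[n_1,m_1]$ is again cheaper. Both departures being impossible, $\alpha$-RR stays at $\alpha$ through slot $m$.

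The main obstacle is the stay-at-$0$ sub-case of the first claim when $r^*_{m+1}=1$ rather than $0$. Here Lemma~\ref{lem:lemma_opt} does not apply: comparing $\alpha$-OPT against the policy that forwards on $[n,m]$ and performs a single full download at $m$ shows only $(1-g(\alpha))\sum_{l=n}^m x_l\ge \alpha\sum_{l=n}^m c_l$, which is weaker than the stay-at-$0$ inequality $(1-g(\alpha))\sum_{l=n}^m x_l\le \alpha M+\alpha\sum_{l=n}^m c_l$ by a term of order $\alpha M$, so the two can coexist and the naive boundary accounting does not close the gap. Resolving this is the crux, and I expect it to require the structural fact that, in the frame decomposition driving the competitive-ratio proof, the maximal $\alpha$-segment treated here terminates in an eviction ($r^*_{m+1}=0$), the upgrade case $r^*_{m+1}=1$ being absorbed into the adjacent type-$(1-\alpha)$ sub-frame handled by Lemma~\ref{lem:download_oneminusalpha}; alternatively one strengthens the argument so that $\alpha$-OPT's staged fetch forces a within-$[n,m]$ download condition via Lemma~\ref{lem:OPT_download}. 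The remaining work is tie-breaking in the $\arg\min$ of Algorithm~\ref{algo:alphaRR}, needed to turn each ``$\le$'' above into a strict contradiction.
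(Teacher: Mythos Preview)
Your proposal follows the paper's proof essentially line for line: the paper also first excludes the $0\to1$ jump by the interchange $\eta=1$ on $[n_1,m_1]$ (using the inequality $g(\alpha)\sum x_l\ge(1-\alpha)\sum c_l+(1-\alpha)M$), then excludes the stay-at-$0$ alternative by taking $\eta=0$ on $[n,m]$, and for the second claim simply invokes (the argument of) Lemma~\ref{lem:evict_oneminusalpha}, which amounts to the two interchange arguments you spell out explicitly.

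Where you are more careful than the paper is the boundary sub-case $r^*_{m+1}=1$ of the stay-at-$0$ contradiction. The paper's proof uses the same $\eta=0$ on $[n,m]$ and asserts $C^{\eta}-C^{\alpha\text{-OPT}}=(1-g(\alpha))\sum_{l=n}^{m}x_l-\alpha\sum_{l=n}^{m}c_l-\alpha M<0$ without accounting for the extra $\alpha M$ that $\eta$ must pay at slot $m$ to reach level~$1$ (while $\alpha$-OPT pays only $(1-\alpha)M$). So the gap you flag is present in the paper's own argument as well; the paper does not close it, and your proposed resolution via the frame decomposition (absorbing the $r^*_{m+1}=1$ transition into the adjacent Type-$(1-\alpha)$ sub-frame treated by Lemma~\ref{lem:download_oneminusalpha}) is not made explicit there either.
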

\begin{proof}
 Since $r_{n-1}^*=0$, $r_t^* =\alpha$ for $n \leq t \leq m$, and $r_{m+1}^* \neq \alpha$, the total cost incurred by $\alpha$-OPT in $n \leq t \leq m$ is $C^{\alpha-\text{OPT}}(n,m)=\alpha M+g(\alpha)\displaystyle\sum_{l=n}^{m} x_l+\alpha \displaystyle\sum_{l=n}^{m} c_l.$
Assume that $\alpha\text{-RR}$  fetches full service at  $m_1 \in [n+1,m]$. Therefore there exists $n< n_1<m_1$, such that 
$\text{totalCost}(R_0^{n_1},I_{m_1}) \geq \text{totalCost}(R_1^{n_1},I_{m_1})$ and $\text{totalCost}(R_{\alpha}^{n_1},I_{m_1}) \geq \text{totalCost}(R_1^{n_1},I_{m_1})$. Which implies
$\displaystyle\sum_{l=n_1}^{m_1} x_l \geq \displaystyle\sum_{l=n_1}^{m_1} c_l+M$ and $g(\alpha)\displaystyle\sum_{l=n_1}^{m_1} x_l \geq (1-\alpha)\displaystyle\sum_{l=n_1}^{m_1} c_l+(1-\alpha)M.$
We  construct another policy $\eta$ which is same $\alpha$-OPT except that  $r^{\eta}_t = 1$ during  $n_1 \leq t \leq m_1$.  
The total cost incurred by $\eta$ in $n \leq t \leq m$ is $C^{\eta}=\alpha M+g(\alpha)\displaystyle\sum_{l=n}^{n_1-1} x_l+\alpha \displaystyle\sum_{l=n}^{n_1-1} c_l+(1-\alpha)M+\displaystyle\sum_{l=n_1}^{m_1} c_l+g(\alpha)\displaystyle\sum_{l=m_1+1}^{m} x_l+\alpha \displaystyle\sum_{l=m_1+1}^{m} c_l.$
The difference $C^{\eta}-C^{\alpha\text{-OPT}}=(1-\alpha)\displaystyle\sum_{l=n_1}^{m_1} c_l+(1-\alpha)M-g(\alpha)\displaystyle\sum_{l=n_1}^{m_1} x_l$ is negative. So our assumption is false.
 
Assume that $\alpha\text{-RR}$ does not fetch $\alpha$ fraction of service at any time-slot in  $[n+1,m]$. Therefore for any $n_1 \in [n+1,m]$, $\text{totalCost}(R_0^{n_1},I_m) < \text{totalCost}(R_{\alpha}^{n_1},I_m)$. Which implies
$(1-g(\alpha))\displaystyle\sum_{l=n_1}^{m} x_l <  \alpha \displaystyle\sum_{l=n_1}^{m} c_l+\alpha M.$
We  construct another policy $\eta$ which is same $\alpha$-OPT except that  $r^{\eta}_t = 0$ during  $n \leq t \leq m$. 
The total cost incurred by $\eta$ in $n \leq t \leq m$ is $C^{\eta}=\displaystyle\sum_{l=n}^{m} x_l.$
The difference $C^{\eta}-C^{\alpha-\text{OPT}}=(1-g(\alpha))\displaystyle\sum_{l=n}^{m} x_l-\alpha \displaystyle\sum_{l=n}^{m} c_l-M\alpha$ is negative. So our assumption is false. Therefore $\alpha\text{-RR}$ fetches  $\alpha$ fraction of service  at some time-slot in $[n+1, m]$.
Using Lemma \label{lem:evict_oneminusalpha}, we conclude that when $r_t^*=r^{\alpha\text{-RR}}_t=\alpha$ for $t\leq m$ then $\alpha\text{-RR}$ does not evict $\alpha$ fraction of service or does not fetch $(1-\alpha)$ fraction of service till the end of  time-slot $m.$
 \end{proof}

We provide a proof  of Theorem \ref{thm:RR_adv}(b). 
To compare the costs incurred by $\alpha$-RR and $\alpha$-OPT we divide time into frames $[1,t_1-1]$, $[t_1,t_2-1], [t_2,t_3-1],\ldots,$ where $t_j-1$ is the time-slot in which $\alpha$-OPT downloads a fraction of or full service for the $j^{\text{th}}$ time for $j\in\{1,2,\ldots\}.$ We have three types of frames. 

\begin{enumerate}
	\item[1.] Type-1 frame: It starts when $\alpha$-OPT fetches the entire service. By Lemma \ref{lem:min_rent_cost}, Type-1 frames exist only if $c_{\text{min}}<1$.
	\item[2.] Type-$\alpha$ frame: It starts when $\alpha$-OPT fetches $\alpha$ fraction of the service. By Lemma \ref{lem:min_rent_cost}, Type-$\alpha$ frames exist only if $\alpha c_{\text{min}}+g(\alpha)<1$.
	\item[3.] Type-$(1-\alpha)$ frame: It starts when $\alpha$-OPT fetches $(1-\alpha)$ fraction of the service. By Lemma \ref{lem:min_rent_cost}, Type-$(1-\alpha)$ frames exist only if $c_{\text{min}}<\alpha c_{\text{min}}+g(\alpha)$.
\end{enumerate}
Refer to Figure \ref{fig:a_OPT_RR_frame_proofoutline} for an illustration of the frame.
\color{violet} Note that the main difference between the proof techniques in this work and in  \cite{narayana2021renting} is that in this work we have three different types of frames possible whereas in \cite{narayana2021renting} there is a possibility of only one frame. \color{black}
We first focus on the Type-$(1-\alpha)$ frame. By Lemma \ref{lem:download_oneminusalpha}, $\alpha$-RR fetches $(1-\alpha)$ fraction of service at sometime $t_{f}^{\alpha\text{-RR}} \in [t_1, \tau]$, and by Lemma \ref{lem:evict_oneminusalpha}, $\alpha$-RR evicts $(1-\alpha)$ fraction of service at sometime $t_{e}^{\alpha\text{-RR}} \geq \tau$.
Both $\alpha$-OPT and $\alpha$-RR makes one fetch in this frame. Hence the difference in the fetch costs is zero. We now focus on the service and rent cost incurred by the two policies. 
\begin{itemize}
 \item [--]  Let $\tau_1 = t_{f}^{\alpha\text{-RR}} - t_1.$
 Since $\alpha\text{-RR}$ does not fetch $(1-\alpha)$ fraction of the service in $[t_1, t_{f}^{\alpha\text{-RR}}]$, the cost incurred in $[t_1, t_{f}^{\alpha\text{-RR}}]$ by $\alpha\text{-RR}$ is 
 $g(\alpha)\displaystyle\sum_{l=t_1}^{t_{f}^{\alpha\text{-RR}}} x_l+ \alpha \displaystyle\sum_{l=t_1}^{t_{f}^{\alpha\text{-RR}}} c_l.$
 $\alpha$-OPT fetches full service in $[t_1, t_{f}^{\alpha\text{-RR}}]$ hence the cost incurred in $[t_1, t_{f}^{\alpha\text{-RR}}]$ by $\alpha$-OPT is $ \displaystyle\sum_{l=t_1}^{t_{f}^{\alpha\text{-RR}}} c_l.$
 By Lemma \ref{lem:max_requests},$$
	g(\alpha)\displaystyle\sum_{l=t_1}^{t_{f}^{\alpha\text{-RR}}} x_l < (1-\alpha)\displaystyle\sum_{l=t_1}^{t_{f}^{\alpha\text{-RR}}-1} c_l +M(1-\alpha)+g(\alpha).
	$$  
 Hence difference in the service and rent cost incurred by $\alpha\text{-RR}$ and $\alpha$-OPT in $[t_1, t_{f}^{\alpha\text{-RR}}]$   is at most $(M-c_{\text{min}})(1-\alpha)+g(\alpha).$
 \item[--] The service and rent cost incurred by $\alpha$-OPT and $\alpha\text{-RR}$ in $[t_{f}^{\alpha\text{-RR}}+1, \tau]$ are equal. 

\item [--] Let $\tau_2 = t_{e}^{\alpha\text{-RR}} - \tau$. 
The cost incurred by $\alpha$-OPT in $[\tau+1, t_{e}^{\alpha\text{-RR}}]$ is $g(\alpha)\displaystyle\sum_{l=\tau}^{t_{e}^{\alpha\text{-RR}}} x_l+ \alpha \displaystyle\sum_{l=\tau}^{t_{e}^{\alpha\text{-RR}}} c_l$
The  cost incurred by $\alpha\text{-RR}$ in $[\tau+1, t_{e}^{\alpha\text{-RR}}]$ is $\displaystyle\sum_{l=\tau}^{t_{e}^{\alpha\text{-RR}}} c_l$.
By Lemma \ref{lem:min_requests},$$g(\alpha)\displaystyle\sum_{l=\tau}^{t_{e}^{\alpha\text{-RR}}} x_l\geq (1-\alpha) \displaystyle\sum_{l=\tau}^{t_{e}^{\alpha\text{-RR}}-1} c_l-M(1-\alpha).$$
Hence difference in the service and rent cost incurred by $\alpha\text{-RR}$ and $\alpha$-OPT in $[t_1, t_{f}^{\alpha\text{-RR}}]$  is at most $(M+c_{\text{max}})(1-\alpha).$

 \item[--] The service and rent cost incurred by $\alpha$-OPT and $\alpha\text{-RR}$ in $[t_{e}^{\alpha\text{-RR}}+1, t_2-1]$ are equal. 
\end{itemize}
We therefore have that,
\begin{align}
 SC_{1-\alpha}^{\alpha\text{-RR}}(j)&-SC_{1-\alpha}^{\text{$\alpha$-OPT}}(j)\leq (2M+c_{\text{max}}-c_{\text{min}})(1-\alpha)+g(\alpha)\nonumber\\
 &\leq 2M(1-\alpha)+(M+1)(1-\alpha)+(1-\alpha)\nonumber\\
 &\leq 3M(1-\alpha)+2(1-\alpha).
 \label{ineq:bound1_RR}
\end{align}
By Lemma \ref{lem:OPT_slots}, once $\alpha$-OPT downloads $(1-\alpha)$ fraction of the service, it will not evict for at least $\frac{(1-\alpha)M}{g(\alpha)-(1-\alpha)c_{\text{min}}}$ slots. Therefore,
\begin{align}
&SC_{1-\alpha}^{\text{$\alpha$-OPT}}(j)\geq (1-\alpha)M+(1-\alpha) c_{\text{min}}\frac{(1-\alpha)M}{g(\alpha)-(1-\alpha)c_{\text{min}}}\nonumber\\
\label{ineq:bound1_OPT}
&\implies M(1-\alpha)\leq \frac{g(\alpha)-(1-\alpha)c_{\text{min}}}{g(\alpha)}C^{\text{$\alpha$-OPT}}(j).
\end{align}
From \eqref{ineq:bound1_RR} and \eqref{ineq:bound1_OPT},
\begin{align}
  SC_{1-\alpha}^{\alpha\text{-RR}}(j) &\leq \left(4+\frac{2}{M}-\frac{(1-\alpha)c_{\text{min}}}{g(\alpha)}(2+\frac{1}{M})\right)SC_{1-\alpha}^{\text{$\alpha$-OPT}}(j)\nonumber\\
  &< \left(4+\frac{2}{M}\right) SC_{1-\alpha}^{\text{$\alpha$-OPT}}(j). \label{ineq:bound2_RRgeneral}
\end{align}
Hence we conclude that in every Type-$(1-\alpha)$ frame, $\alpha\text{-RR}$ is (4+$\frac{2}{M}$)-optimal.
\begin{figure}[ht]
	\centering
	\begin{tikzpicture}
	\foreach \x in {0,0.3,0.6,...,7.2}{
		\draw[] (\x,0) --  (\x+0.3,0);
		\draw[gray] (\x,-1mm) -- (\x,1mm);
	}
	
	\draw[<->,color=black] (0.25,14mm) -- 
	node[above=0mm,pos=0.5]{\rom{1}} (2.35,14mm);
	\draw[<->,color=black] (2.35,14mm) -- 
	node[above=0mm,pos=0.5]{\rom{2}} (4.15,14mm);
	\draw[<->,color=black] (4.15,14mm) -- 
	node[above=0mm,pos=0.5]{\rom{3}} (5.65,14mm);
	\draw[<->,color=black] (5.65,14mm) -- 
	node[above=0mm,pos=0.5]{\rom{4}} (6.85,14mm);
	
	\foreach \x in {0.25,2.35,4.15,5.65,6.85}{
		\draw[gray] (\x,12mm) -- (\x,16mm);
	}
	\draw[gray] (7.2,-1mm) -- (7.2,1mm);
	\draw[-latex] (0.25,10mm) -- node[above=5mm]{} (0.25,0mm);
	\draw[-latex] (6.85,10mm) -- node[above=5mm]{} (6.85,0mm);
	\draw[-latex] (4.15,0mm) -- node[above=5mm]{} (4.15,10mm);
	\draw[-latex,color=blue] (2.35,10mm) -- node[above=5mm]{} (2.35,0mm);
	\draw[-latex,color=blue] (5.65,0mm) -- node[above=5mm]{} (5.65,10mm);
	
	\draw[<->,color=black] (0.3,-2mm) -- node[below=0mm,pos=0.5]{Type-$(1-\alpha)$ frame} (6.9,-2mm);
	
	\draw[black] (0.3,-1mm) -- (0.3,-3mm);
	\draw[black] (6.9,-1mm) -- (6.9,-3mm);
	
	\node[] at (-1,-0.85) {$\alpha$-OPT};
	
	\filldraw[fill=black!] (0.25,-1) rectangle (4.15,-0.7);
	\filldraw[fill=white!40] (4.15,-1) rectangle (6.85,-0.7);
	\node[] at (-1,-1.4) {$\alpha$-RR};
	\filldraw[fill=white!40] (0.25,-1.5) rectangle (2.4,-1.2);
	\filldraw[fill=blue!] (2.4,-1.5) rectangle (5.7,-1.2);
	\filldraw[fill=white!40] (5.7,-1.5) rectangle (6.85,-1.2);

	\end{tikzpicture}
	\caption{Illustration showing 
	a fetch and eviction by $\alpha$-OPT and  $\alpha$-RR in a Type-$(1-\alpha)$ frame. 
	Downward arrows represent fetches, upward arrows indicate evictions. 
	Black and blue arrows correspond to the $\alpha$-OPT and $\alpha$-RR policies respectively. 
	The two bars below the time-line indicate the state of the system under 
	$\alpha$-OPT and $\alpha$-RR. The solid black and solid blue portions represent the 
	intervals during with $\alpha$-OPT and $\alpha$-RR host the service respectively.}
	\vspace{15pt}
\label{fig:a_OPT_RR_frame_proofoutline}
\end{figure}
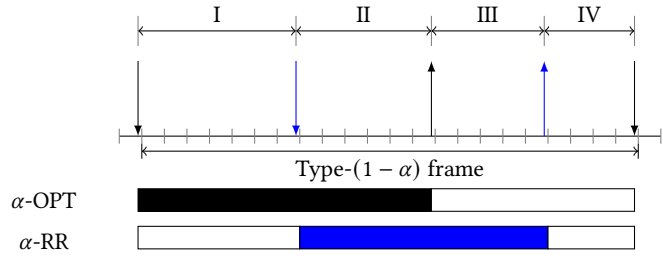
Now we consider any  Type-$\alpha$ frame that is when a  frame starts with $\alpha$ download of service by $\alpha$-OPT. This frame may contains many sub-frames of  Type-$(1-\alpha)$.
We refer to them as Type-$(1-\alpha)$ sub-frames. Let the number of such sub-frames  be $N_{1-\alpha}$. 
We divide the cost incurred $\alpha$-OPT in this frame into two parts
\begin{itemize}
 \item [--] $SC^{\text{$\alpha$-OPT}}_{\alpha}(i)$- Cost incurred by $\alpha$-OPT in the frame $i$ except in the region during Type-$(1-\alpha)$ sub-frames in it.
 \item [--] $SC^{\text{$\alpha$-OPT}_j}_{1-\alpha}(i)$- Cost incurred by $\alpha$-OPT in the $j^\text{th}$ Type-$(1-\alpha)$ sub-frame in  Type-$\alpha$ frame $i$.
\end{itemize}
Therefore, $$C^{\text{$\alpha$-OPT}}(i)=SC^{\text{$\alpha$-OPT}}_{\alpha}(i)+\displaystyle\sum_{j=1}^{N_{1-\alpha}} SC^{\text{$\alpha$-OPT}_j}_{1-\alpha}(i).$$
Similarly, We divide the cost incurred $\alpha\text{-RR}$ in this frame into two parts
\begin{itemize}
 \item [--] $SC^{\alpha\text{-RR}}_{\alpha}(i)$- Cost incurred by $\alpha\text{-RR}$ in frame $i$ except in the region during  Type-$(1-\alpha)$ sub-frames in it.
 \item [--] $SC^{\alpha\text{-RR}_j}_{1-\alpha}(i)$- Cost incurred by $\alpha\text{-RR}$ in the $j^\text{th}$ Type-$(1-\alpha)$ sub-frame in frame $i$.
\end{itemize}
Therefore, $$C^{\alpha\text{-RR}}(i)=SC^{\alpha\text{-RR}}_{\alpha}(i)+\displaystyle \sum_{j=1}^{N_{1-\alpha}} SC^{\alpha\text{-RR}_j}_{1-\alpha}(i).$$ 

By Lemmas \ref{lem:RR_slots}, \ref{lem:RR_goback_time} and \ref{lem:download_alpha}, $\alpha\text{-RR}$ downloads only $\alpha$ fraction of service at least after $\frac{\alpha M}{1-g(\alpha)-\alpha c_{\text{min}}}$ time-slots from the beginning of the frame. So we can write 
\begin{align}
&SC^{\text{$\alpha$-OPT}}_{\alpha}(i)\geq \alpha M+\alpha c_{\text{min}}\frac{\alpha M}{1-g(\alpha)-\alpha c_{\text{min}}}\nonumber\\
\label{ineq:bound1_OPTalpha}
&\implies M\alpha\leq \frac{1-g(\alpha)-\alpha c_{\text{min}}}{1-g(\alpha)}SC^{\text{$\alpha$-OPT}}_{\alpha}(i).
\end{align}
By using Lemmas \ref{lem:max_requests} and \ref{lem:min_requests}, we get 
\begin{align}
SC^{\alpha\text{-RR}}_{\alpha}(i)-&SC^{\text{$\alpha$-OPT}}_{\alpha}(i)\nonumber\\
&\leq  2M\alpha+1-g(\alpha)+\alpha (c_{\text{max}}-c_{\text{min}})\nonumber\\
&\leq  3M\alpha+1-g(\alpha)+\alpha.
\label{ineq:bound1_RRsubalpha}
\end{align}

Using inequalities \ref{ineq:bound2_RRgeneral}, \ref{ineq:bound1_OPTalpha} and \ref{ineq:bound1_RRsubalpha} we write,
\begin{align}
 C^{\alpha\text{-RR}}(i)-&C^{\alpha\text{-OPT}}(i)\nonumber\\
 &\leq \left(3+\dfrac{1}{M}+\dfrac{1-g(\alpha)}{M\alpha}\right)SC^{\alpha-\text{OPT}}_{\alpha}(i)\nonumber\\
&\hspace{8em} +(3+\frac{2}{M}) \displaystyle\sum_{j=1}^{N_{1-\alpha}} SC^{\alpha-\text{OPT}_j}_{1-\alpha}(i)\nonumber\\
 C^{\alpha\text{-RR}}(i) &\leq \left(4+\frac{1}{M}+\max\left\{\dfrac{1}{M},\dfrac{1-g(\alpha)}{M\alpha}\right\}\right) C^{\alpha\text{-OPT}}(i).
 \label{ineq:bound1_RRalpha}
\end{align}
Hence we conclude that in every  Type-$\alpha$ frame $\alpha\text{-RR}$ is $\left(4+\frac{1}{M}+\max\left\{\dfrac{1}{M},\dfrac{1-g(\alpha)}{M\alpha}\right\}\right)$-optimal.

Now we consider any Type-$1$ frame that is when a  frame starts with full download of service by $\alpha$-OPT.  This frame may contains  an eviction of  $(1-\alpha)$ fraction of service by  $\alpha$-OPT followed by many sub-frames which start with the download of $(1-\alpha)$ fraction of service by  $\alpha$-OPT. Let the number of such sub-frames be $N_{1-\alpha}$. 
We divide the cost incurred $\alpha$-OPT in this frame into two parts
\begin{itemize}

 \item [--] $SC^{\text{$\alpha$-OPT}_j}_{1-\alpha}(i)$- Cost incurred by $\alpha$-OPT in the $j^\text{th}$ Type-$(1-\alpha)$ sub-frame in frame $i$.
 \item [--] $SC^{\text{$\alpha$-OPT}}_{1}(i)$- Cost incurred by $\alpha$-OPT in frame $i$ except in the region during $N_{1-\alpha}$ Type-$(1-\alpha)$ sub-frames.
\end{itemize}
Therefore, 
\begin{align*}
C^{\text{$\alpha$-OPT}}(i)&=SC^{\text{$\alpha$-OPT}}_{1}(i)+\displaystyle\sum_{j=1}^{N_{1-\alpha}} SC^{\text{$\alpha$-OPT}_j}_{1-\alpha}(i).
\end{align*}

Similarly, We divide the cost incurred $\alpha\text{-RR}$ in this frame into two parts
\begin{itemize}
 
 \item [--] $SC^{\alpha\text{-RR}_j}_{1-\alpha}(i)$- Cost incurred by $\alpha\text{-RR}$ in the $j^\text{th}$ {\it $(1-\alpha)$- sub-frame} in frame $i$.
 \item [--] $SC^{\alpha\text{-RR}}_{1}(i)$- Cost incurred by $\alpha\text{-RR}$ in frame $i$ except in the region during $N_{1-\alpha}$ sub-frames.
\end{itemize}
Therefore, $$C^{\alpha\text{-RR}}(i)=SC^{\alpha\text{-RR}}_{1}(i)+\displaystyle \sum_{j=1}^{N_{1-\alpha}} SC^{\alpha\text{-RR}_j}_{1-\alpha}(i).$$ 
By Lemmas \ref{lem:RR_slots}, \ref{lem:RR_goback_time} and \ref{lem:download_alpha}, $\alpha\text{-RR}$ downloads full service at least after $\frac{M}{1-c_{\text{min}}}$ time-slots from the beginning of the frame $i$. So we can write 
\begin{align}
&SC^{\text{$\alpha$-OPT}}_{1}(i)\geq  M+ c_{\text{min}}\frac{ M}{1- c_{\text{min}}}\nonumber\\
&\implies M\leq (1- c_{\text{min}})SC^{\text{$\alpha$-OPT}}_{1}(i).
\label{ineq:bound1_OPTfull}
\end{align}
Note that the last eviction by $\alpha$-OPT in frame $i$ is either eviction of $\alpha$ fraction of service or eviction of full service.
In either case the difference $SC^{\alpha\text{-RR}}_{1}(i)-SC^{\text{$\alpha$-OPT}}_{1}(i)$ is at most $2M+c_{\text{max}}-c_{\text{min}}+1$.
This can be verified  by applying Lemmas \ref{lem:max_requests} and \ref{lem:min_requests} to the above two cases separately.

By Lemma \ref{lem:min_requests} and inequality \ref{ineq:bound1_OPTfull} we write,
\begin{align}
 C^{\alpha\text{-RR}}(i)&-C^{\text{$\alpha$-OPT}}(i)\leq 2M+c_{\text{max}}-c_{\text{min}}+1\nonumber\\
 &\hspace{2cm}+(3+\frac{2}{M}) \displaystyle\sum_{j=1}^{N_{1-\alpha}} SC^{\text{$\alpha$-OPT}_j}_{1-\alpha}(i)\nonumber\\
 &<  (3+\frac{2}{M})SC^{\text{$\alpha$-OPT}}_{1}(i)+ (3+\frac{2}{M})\displaystyle \sum_{j=1}^{N_{1-\alpha}} SC^{\text{$\alpha$-OPT}_j}_{1-\alpha}(i)\nonumber\\
 C^{\alpha\text{-RR}}(i) &\leq (4+\frac{2}{M}) C^{\text{$\alpha$-OPT}}(i).
 \label{ineq:bound1_RRalpha}
\end{align}
Hence we conclude that in every Type-1 frame, $\alpha\text{-RR}$ is $(4+\frac{2}{M})$-optimal.

We then stitch results across frames to prove the result.


\subsection{Proof of Theorem \ref{thm:anyonline}}
We provide a sketch of the proof. 
Let $\mathcal{P}$ be any deterministic online policy and  $r_t \in \{0,\alpha$,1\} denote the hosting status under $\mathcal{P}$ in a time-slot $t$.
If under $\mathcal{P}$, $r_0 = 0$, consider the arrival sequence with an arrival in each time-slot until $\mathcal{P}$ fetches the service (entire or partial) and no arrivals thereafter. If $\mathcal{P}$ fetches entire service after $\tau$ time-slots then the cost under $\mathcal{P}$ for this request sequence is, $C^{\mathcal{P}}\geq \tau+M+c_{\text{min}}$.
By Lemma \ref{lem:OPT_slots}, the cost under $\alpha$-OPT for this request sequence is,
\begin{align*}
C^{\alpha\text{-OPT}}=
\begin{cases}
\tau &\text{ if } \tau\leq \frac{M}{1-c_{\text{min}}},\\
M+\tau c_{\text{min}} &\text{ otherwise }.
	\end{cases}
\end{align*}
For $M(1-c_{\text{min}})>c_{\text{min}}^2$, the ratio $\frac{C_{\mathcal{P}}}{C_{\text{OPT}}}$  is minimum at $\tau=\frac{M}{1-c_{\text{min}}}$. For $M(1-c_{\text{min}})\leq c_{\text{min}}^2$, the ratio $\dfrac{C^{\mathcal{P}}}{C^{\alpha-\text{OPT}}}$  is minimum when  $\tau$ is very large.

Let $\overline{\mathcal{P}}$ be any deterministic online policy and  $\overline{r}_t \in \{0,\alpha$,1\} denote the hosting status under $\overline{\mathcal{P}}$ in a time-slot $t$.
If under $\overline{\mathcal{P}}$, $\overline{r}_0 = 1$, consider the arrival sequence with no arrival in each time-slot until $\overline{\mathcal{P}}$ evicts the full service. Then the arrival sequence changes in such a way that there is an arrival in each time-slot. 
If $\overline{\mathcal{P}}$ evicts entire service after $\overline{\tau}$ time-slots then the cost under $\overline{\mathcal{P}}$ for this request sequence is, $C_{\overline{\mathcal{P}}}\geq \overline{\tau} c_{\text{min}}+M+C^{\mathcal{P}}$.
Then the ratio 
\begin{align*}
\dfrac{C^{\overline{\mathcal{P}}}}{C^{\alpha-\text{OPT}}}\geq \dfrac{\overline{\tau} c_{\text{min}}+M}{C^{\alpha-\text{OPT}}}+\dfrac{C^{\mathcal{P}}}{C^{\alpha-\text{OPT}}}.
\end{align*}

Other cases can be proved along similar lines.

\section{Appendix B}
\label{sec:proofs_stochastic}
In this section, we discuss the proofs of the results presented in Section \ref{sec:analyticalResults_stochastic}.
We use the following lemmas to prove Theorem \ref{thm:RR_stochastic_theorem}.
 Note that we use Model  \ref{model:random} to prove results in this section. 

\begin{lemma}
	\label{lemma:optimal_causal}
	Let $X_t$ is an indicator of request arriving in time-slot $t$, $p = \mathbb{E}[X_t]$,  
	Let $Z_t$ be the rent cost per time-slot, $\{Z_t\}_{t\geq 1}$ is the sequence of negatively associated random variables and $c= \mathbb{E}[Z_t]$.
	Under Assumption \ref{assum:arrivals_and_rent}, let $\mathbb{E}[ C_t^{\alpha-\text{OPT-ON}}]$ be the cost per time-slot incurred by the $\alpha-$OPT-ON policy. Then,
	\begin{align*}
	\mathbb{E}[ C_t^{\alpha-\text{OPT-ON}}] \geq \min\{c,\alpha c+g(\alpha) p, p\}.
	\end{align*}
\end{lemma}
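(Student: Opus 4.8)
The plan is to prove the bound slot-by-slot: I will show that for every $t$, $\mathbb{E}[C_t^{\alpha\text{-OPT-ON}}] \ge \min\{c,\alpha c + g(\alpha)p, p\}$, and since $\alpha$-OPT-ON is itself an online policy, it suffices to establish the inequality for an arbitrary online policy $\mathcal{P}$. The first move is to discard the fetch cost: since $C_{F,t}^{\mathcal P}\ge 0$ always, we have $\mathbb E[C_t]\ge \mathbb E[C_{R,t}+C_{S,t}]$, so it is enough to lower bound the expected rent-plus-service cost in a single slot. Writing the hosting level $r_t\in\{0,\alpha,1\}$, this quantity equals $\mathbf 1\{r_t=0\}X_t+\mathbf 1\{r_t=\alpha\}(\alpha Z_t+S_t)+\mathbf 1\{r_t=1\}Z_t$, recalling that under Model~\ref{model:random} the service cost is $X_t$ when not hosting, $S_t$ (with $\mathbb E[S_t\mid X_t]=g(\alpha)X_t$) under partial hosting, and $0$ under full hosting.

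The key structural fact I will exploit is causality. Because the policy is online, its decision $r_t$ for slot $t$ is a function only of the history $\{X_l,Z_l\}_{l<t}$ available before the arrivals and rent of slot $t$ are realized (this matches the way $\alpha$-RR forms $r_{t+1}$ from the arrivals and rents up to slot $t$). Consequently $r_t$ is independent of the fresh innovations $(X_t,Z_t)$. Conditioning on $r_t$ and using $\mathbb E[X_t]=p$, $\mathbb E[S_t]=g(\alpha)\mathbb E[X_t]=g(\alpha)p$ (by iterating expectation over $X_t$), and $\mathbb E[Z_t]=c$ then gives
\begin{align*}
\mathbb E[C_{R,t}+C_{S,t}] &= \mathbb P(r_t=0)\,p+\mathbb P(r_t=\alpha)\big(\alpha c+g(\alpha)p\big)+\mathbb P(r_t=1)\,c\\
&\ge \min\{p,\ \alpha c+g(\alpha)p,\ c\},
\end{align*}
the last step holding because a convex combination of three numbers is at least their minimum and the three probabilities sum to one. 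Summing over $t$ recovers the horizon-level statement that feeds the efficiency-ratio denominator.

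The main obstacle is making the decoupling of $r_t$ from the current rent $Z_t$ rigorous, and this is precisely where the negative-association hypothesis enters. For i.i.d. rent the argument above is exact, since $Z_t$ is independent of the past and hence of $r_t$, so $\mathbb E[\mathbf 1\{r_t=j\}Z_t]=\mathbb P(r_t=j)\,c$ for each level $j$. For merely negatively associated rent, $Z_t$ is correlated with the past rents on which the policy may condition, so I must instead control the cross terms $\mathbb E[\mathbf 1\{r_t=j\}Z_t]$ for $j\in\{\alpha,1\}$ and argue $\mathbb E[\mathbf 1\{r_t=j\}Z_t]\ge \mathbb P(r_t=j)\,c$ by applying the negative-association covariance inequality to the hosting indicator (a function of $\{Z_l\}_{l<t}$) and $Z_t$ (a function of the disjoint index $\{t\}$); the arrival-driven factors remain independent of the rent process and are handled exactly as in the i.i.d. case. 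Establishing the correct sign of this covariance under the policy's dependence on past rents is the delicate point; once it is in place, the per-slot bound and hence the lemma follow.
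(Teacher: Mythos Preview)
Your approach is essentially the paper's: both proofs drop the fetch cost, split on the hosting level $r_t\in\{0,\alpha,1\}$, and assert that the expected rent-plus-service cost on each branch is $p$, $\alpha c+g(\alpha)p$, or $c$, hence at least the minimum. The paper's proof is exactly these three sentences and does not discuss why conditioning on the policy's action leaves $\mathbb E[X_t]$ and $\mathbb E[Z_t]$ unchanged; you are more explicit, and in the i.i.d.\ rent case your causality argument ($r_t$ a function of $\{X_l,Z_l\}_{l<t}$, hence independent of the fresh $(X_t,Z_t)$) is correct and completes the proof.

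Your handling of the negatively associated (non-i.i.d.) rent case, however, does not close. Negative association yields $\mathrm{Cov}(f,g)\le 0$ only when $f$ and $g$ are coordinatewise monotone on disjoint index sets. To get the sign you need, $\mathrm{Cov}(\mathbf 1\{r_t=j\},Z_t)\ge 0$ for $j\in\{\alpha,1\}$, you would have to know that the optimal online policy's hosting indicator is monotone (in fact non-increasing) in the past rents. That is not available: for $j=\alpha$ in particular, raising a past rent could push the optimal action from $1$ down to $\alpha$ (indicator increases) or from $\alpha$ down to $0$ (indicator decreases), so there is no uniform direction to feed into the NA inequality. The paper does not address this point either; its proof simply asserts the per-branch expectations as if $r_t$ were independent of $Z_t$, so on this issue your write-up is already more careful than the original, but the gap you flag as ``the delicate point'' is real and is not resolved by NA alone.
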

\begin{proof}
	If full service is hosted  at the edge in time-slot $t$, the expected cost incurred is at least $\mathbb{E}[Z_t]$ = $c $.
	If $\alpha$ fraction of service is hosted  at the edge in time-slot $t$, the expected cost incurred is at least $\mathbb{E}[\alpha Z_t+g(\alpha)X_t]$ = $\alpha c+g(\alpha) p$. 
	If the service is not hosted  at the edge server, the expected cost incurred is at least $p$. This proves the result.
\end{proof}

\begin{lemma}\label{lem:Hoeffding}
	Let $X_t$ is an indicator of request arriving in time-slot $t$, $p = \mathbb{E}[X_t]$.
	Let $Z_t$ be the rent cost per time-slot, $\{Z_t\}_{t\geq 1}$ is the sequence of negatively associated random variables and $c= \mathbb{E}[Z_t]$.  For each of these combinations $(a,B)=(1,1)$, $(a,B)=(\alpha,1-S_l)$, $(a,B)=(1-\alpha,S_l)$, define $b=\mathbb{E}[B]$, $Y_l=BX_l-a Z_l$, and  $Y=\sum\limits_{l=t-\tau+1}^t Y_l$ then $Y$ satisfies,
	
	\noindent for $((a c-b p)\tau+a M) > 0$,,
	\begin{align*}
	\mathbb{P}\left(Y \geq a M\right)  &\leq \exp\left(-2\frac{((a c-b p)\tau+a M)^2}{\tau (b+a c_{\text{max}}-a c_{\text{min}})^2}\right), 
	\end{align*}
	and for $(b p-a c) \tau +a M > 0$,
	\begin{align*}
	\mathbb{P}\left(Y \leq a\tau c-a M\right)  &\leq \exp\left(-2\frac{((bp-ac)\tau+aM)^2}{\tau (b+ac_{\text{max}}-ac_{\text{min}})^2}\right). 
	\end{align*}
\end{lemma}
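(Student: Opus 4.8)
The plan is to recognize the two claimed bounds as upper- and lower-tail Hoeffding--Chernoff inequalities for the sum $Y=\sum_{l=t-\tau+1}^{t}Y_l$, and to check that the machinery survives the fact that the rent-cost process is only negatively associated rather than independent. First I would record the elementary structure of each summand: in all three cases $B X_l$ is a product of $\{0,1\}$-valued variables and hence lies in $[0,1]$, while $aZ_l\in[ac_{\text{min}},ac_{\text{max}}]$, so each $Y_l=B X_l-aZ_l$ ranges over an interval of width $1+a(c_{\text{max}}-c_{\text{min}})$, the quantity whose square appears in the denominators. Using the independence of the arrival/service indicators from the rent costs, together with $\mathbb{E}[B X_l]=bp$ (e.g.\ $\mathbb{E}[(1-S_l)X_l]=(1-g(\alpha))p$) and $\mathbb{E}[Z_l]=c$, I would compute $\mathbb{E}[Y]=\tau(bp-ac)$. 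Then the threshold $aM$ lies strictly above the mean exactly when $(ac-bp)\tau+aM>0$, and the lower threshold lies strictly below the mean exactly when $(bp-ac)\tau+aM>0$, which matches the two side conditions in the statement.

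The key structural step is to show that $\{Y_l\}$ is a negatively associated family, so that the exponential-moment product bound $\mathbb{E}[e^{sY}]\le\prod_l\mathbb{E}[e^{sY_l}]$ holds for $s\ge0$. I would assemble this from the standard closure properties of negative association \cite{wajc2017negative}. The arrival indicators $X_l$ and the service indicators $S_l$ are i.i.d., so the per-slot quantities $B X_l$ form an i.i.d.\ (hence negatively associated) family; the rent costs $\{Z_l\}$ are negatively associated by Assumption \ref{assum:arrivals_and_rent}, and negation preserves negative association, so $\{-Z_l\}$ is negatively associated as well. Since the union of two mutually independent negatively associated families is negatively associated, $\{B X_l\}\cup\{-Z_l\}$ is negatively associated, and finally $Y_l=B X_l+a(-Z_l)$ is a coordinatewise nondecreasing function of the disjoint pair $(B X_l,-Z_l)$, so closure under monotone functions of disjoint subsets yields that $\{Y_l\}$ is negatively associated.

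With this in hand the remainder is classical: apply Hoeffding's lemma to each bounded summand to get $\mathbb{E}[e^{s(Y_l-\mathbb{E}Y_l)}]\le\exp(s^2 w_l^2/8)$ with $w_l$ the range width, multiply via the negatively-associated moment bound, apply Markov's inequality to $\mathbb{P}(Y-\mathbb{E}Y\ge t)$, and optimize over $s$ to obtain $\exp(-2t^2/\sum_l w_l^2)$. Taking $t=(ac-bp)\tau+aM=aM-\mathbb{E}[Y]$ and $\sum_l w_l^2=\tau(1+a(c_{\text{max}}-c_{\text{min}}))^2$ gives the first inequality. The lower-tail inequality follows verbatim after replacing $Y_l$ by $-Y_l$, which is again negatively associated (being nondecreasing in $(-B X_l,Z_l)$), with deviation $t=(bp-ac)\tau+aM$.

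I expect the only genuinely delicate point to be the negative-association bookkeeping of the second paragraph: one must handle the mixed monotonicity of $Y_l$ (increasing in the demand term, decreasing in the rent term) by first negating the rent process, because the Chernoff/Hoeffding step needs a product upper bound on the joint moment generating function, and that bound is guaranteed for negatively associated families but not for arbitrarily dependent ones. Everything else reduces to bounded-summand estimates and a one-line optimization.
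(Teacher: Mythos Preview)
Your proposal is correct and follows essentially the same route as the paper: compute $\mathbb{E}[BX_l]=bp$ (the paper does the case $\mathbb{E}[S_lX_l]=pg(\alpha)$ explicitly), observe $Y_l\in[-ac_{\text{max}},\,1-ac_{\text{min}}]$, argue that the i.i.d.\ arrival/service terms together with the negatively associated rent terms yield the product bound $\mathbb{E}[e^{sY}]\le\prod_l\mathbb{E}[e^{sY_l}]$, and then invoke Hoeffding. Your treatment of the negative-association closure (negate the $Z_l$'s, take the union of independent NA families, then apply monotone functions of disjoint blocks) is more explicit than the paper's one-line justification, but the underlying mechanism is identical.
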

\begin{proof} Using total expectation rule we have,
\begin{align*}
    \mathbb{E}[X_l S_l]&=\mathbb{E}[X_l S_l|X_l=1]p+\mathbb{E}[X_l S_l|X_l=0](1-p)=pg(\alpha).
\end{align*}
Using i.i.d. condition of $\{X_t\}_{t\geq 1}$  and  negatively associativity of $\{Z_t\}_{t\geq 1}$, it follows that for $s>0$, $\mathbb{E}[\exp(sY)]\leq \prod\limits_{l=t-\tau+1}^t \mathbb{E}[\exp(sY_l)]$.  Moreover,	 $Y_l\in [ -ac_{\text{max}}, 1-ac_{\text{min}}]$. Then the result  follows by Hoeffding's inequality  \cite{hoeffding1994probability, wajc2017negative}. 
\end{proof}

\begin{lemma}
	\label{lemma:difference_RRstochastic}
		Let	$\Lambda_t^{\mathcal{P}} = \mathbb{E}[C_t^{\mathcal{P}} - C_t^{\alpha-\text{OPT-ON}}]$. Under Assumption \ref{assum:arrivals_and_rent}, 
	\begin{itemize}
		\item[--] Case $\frac{\alpha c}{1-g(\alpha)}<p<\frac{(1-\alpha)c}{g(\alpha)}$:
		\begin{align*}
			\Lambda_t^{\alpha-\text{RR}} &\leq \min_{\lambda: \lambda>1 \text{ and } t >\lambda\widetilde{M}_f}f(\lambda,M,p,c,\alpha,g(\alpha)).
		\end{align*}
		\item[--] Case $p>\max\{c,\frac{(1-\alpha) c}{g(\alpha)}\}$:
		\begin{align*}
			\Lambda_t^{\alpha-\text{RR}} &\leq \min_{\lambda: \lambda>1 \text{ and } t >\lambda\widetilde{M}_q}q(\lambda,M,p,c,\alpha,g(\alpha)).
		\end{align*}
		\item[--] Case $p<\min\{c,\frac{\alpha c}{1-g(\alpha)}\}$:
		\begin{align*}
			\Lambda_t^{\alpha-\text{RR}} &\leq \min_{\lambda: \lambda>1 \text{ and } t >\lambda\widetilde{M}_h}h(\lambda,M,p,c,\alpha,g(\alpha)).
		\end{align*}
	\end{itemize}
\end{lemma}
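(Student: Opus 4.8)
The plan is to bound the per-slot cost gap $\Lambda_t^{\alpha\text{-RR}} = \mathbb{E}[C_t^{\alpha\text{-RR}} - C_t^{\alpha\text{-OPT-ON}}]$ by (i) lower-bounding $\mathbb{E}[C_t^{\alpha\text{-OPT-ON}}]$ via Lemma \ref{lemma:optimal_causal}, and (ii) upper-bounding $\mathbb{E}[C_t^{\alpha\text{-RR}}]$ by showing that $\alpha$-RR occupies the cost-optimal hosting level in all but a vanishing fraction of slots. In each of the three parameter regimes exactly one hosting level minimizes the expected per-slot operating cost $\min\{c, \alpha c + g(\alpha)p, p\}$: level $\alpha$ (operating cost $\alpha c + g(\alpha)p$) in the first case, full hosting (cost $c$) in the second, and no hosting (cost $p$) in the third. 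I would first record this identification and observe that whenever $\alpha$-RR sits in the optimal level and does not switch, its conditional expected per-slot cost equals that of $\alpha$-OPT-ON, so such slots contribute nothing to $\Lambda_t$; only suboptimal levels and switching events can create a positive gap.

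Next I would decompose $\mathbb{E}[C_t^{\alpha\text{-RR}}]$ by conditioning on the occupied state $r_t^{\alpha\text{-RR}}$ and on whether a fetch or eviction occurs at $t$. Each such "bad" slot contributes at most the leading factor of the relevant bound — $\max\{M+p, M+c\}$ for $f$, $\max\{\alpha M + \alpha c + g(\alpha)p, M+c\}$ for $q$, and $\max\{\alpha M + \alpha c + g(\alpha)p, M+p\}$ for $h$ — which generously accounts for at most one fetch plus the per-slot operating cost of the occupied level. Thus $\Lambda_t$ is at most this leading factor times the probability that $\alpha$-RR occupies a suboptimal level or switches at time $t$, and the entire task reduces to bounding that probability.

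The core of the argument is the probabilistic estimate. Since $\alpha$-RR changes level only when a retrospective windowed comparison of accumulated service savings against rent-plus-fetch cost flips sign, occupying a wrong level at $t$ forces the windowed partial sums, over the interval since the last level change, to move against their drift. I would parametrize by the length $k$ of this sojourn and apply Lemma \ref{lem:Hoeffding} with the three choices $(a,B) = (1,1)$, $(\alpha, 1-S_l)$, $(1-\alpha, S_l)$, which correspond precisely to the fetch-full, fetch-$\alpha$, and upgrade-$\alpha$-to-full comparisons with drifts $p-c$, $p(1-g(\alpha)) - \alpha c$, and $pg(\alpha) - (1-\alpha)c$. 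In each regime the relevant drift has the sign that makes remaining in the wrong level exponentially unlikely once $k$ exceeds the threshold $\widetilde{M}$ (the number of slots for the drift to dominate the fetch cost). Splitting the sojourn length at $\lambda\widetilde{M}$ for a free $\lambda > 1$: short sojourns ($k \le \lambda\widetilde{M}$) are handled by summing the Hoeffding tails over $k$, a geometric series producing the $\frac{\exp(\cdot)}{1-\exp(\cdot)}$ factors (the terms carrying $\widetilde{M}_f\delta_A^f$ and $\widetilde{M}_f\delta_B^f$ and their $q,h$ analogues); long sojourns ($k > \lambda\widetilde{M}$) are handled by a single large-deviation bound over the long window, producing the $\exp\big(-2(\lambda-1)^2 M^2(\cdot)^2/(\lambda\widetilde{M}(\cdot))\big)$ factors. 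Assembling these via a union bound over the suboptimal levels and the two sojourn regimes yields $f$, $q$, $h$ respectively, and restricting to admissible $\lambda$ (those with $t > \lambda\widetilde{M}$, so that a sojourn of that length can actually fit) gives the stated minima.

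I expect the main obstacle to be the bookkeeping forced by the three hosting levels: unlike the two-level policy of \cite{narayana2021renting}, where a wrong state admits a single escape comparison, here $\alpha$-RR can reach a wrong level through several transition paths (over-fetching to full, failing to evict down to $\alpha$, failing to fetch up from $0$), and one must verify for each regime that every path is controlled by exactly one of the three $(a,B)$ comparisons and that the union bound reassembles precisely into $f$, $q$, $h$. Care is also needed to confirm that $\alpha$-RR's decisions under Model \ref{model:random} use the realized service indicators $S_l$, matching the random $B$ in Lemma \ref{lem:Hoeffding}, so that the exponential rates align with the denominators $(1+\alpha c_{\text{max}}-\alpha c_{\text{min}})^2$ and $(1+(1-\alpha)(c_{\text{max}}-c_{\text{min}}))^2$ appearing throughout $f$, $q$, and $h$.
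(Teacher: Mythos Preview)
Your proposal is correct and follows essentially the same route as the paper: identify the cost-optimal hosting level in each regime, define a ``good'' event $G$ on which $\alpha$-RR occupies that level at time $t$, bound $\Lambda_t^{\alpha\text{-RR}}\le(\text{max cost factor})\cdot\mathbb{P}(G^c)$, and control $\mathbb{P}(G^c)$ via Lemma~\ref{lem:Hoeffding} with the three $(a,B)$ pairs plus a union bound. The only organizational difference is that the paper, rather than parametrizing by sojourn length $k$, fixes a look-back window of length $\lambda\widetilde{M}$ and does a case split on the hosting state at time $t-\lambda\widetilde{M}$; the ``short sojourn'' part of your sketch corresponds to the paper's events $A,B$ (unions over all sub-windows ending in $[t-\lambda\widetilde{M},t-1]$, whose bound carries the $\lambda\widetilde{M}\cdot\frac{\exp(\cdot)}{1-\exp(\cdot)}$ factors), and your ``long sojourn'' part corresponds to the single full-window events $E,F$ (or $D,E$) that force a transition to the correct level if it has not already occurred.
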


\subsection{Proof of Theorem \ref{thm:RR_stochastic_theorem}}
\begin{proof}
Case 1: When $\frac{\alpha c}{1-g(\alpha)}<p<\frac{(1-\alpha)c}{g(\alpha)}$.\\
Let $\widetilde{M_f}=\max\Bigg\{\left\lceil\frac{ M\alpha}{p(1-g(\alpha))-\alpha c}\right\rceil, \left\lceil\frac{ M(1-\alpha)}{(1-\alpha) c-pg(\alpha)}\right\rceil\Bigg\}$.
We define the events\\
$$A_{t_1,t_2}:\displaystyle\sum_{l=t_1}^{t2}X_l<\alpha \displaystyle\sum_{l=t_1}^{t2}Z_l+\displaystyle\sum_{l=t_1}^{t2}X_l S_l-\alpha M,$$ 
$$A_{\tau} =\displaystyle\bigcup_{t_1=1}^{\tau} A_{t_1,\tau}, A =\displaystyle\bigcup_{\tau=t-\lambda \widetilde{M} }^{t-1} A_{\tau},$$

 $$B_{t_1,t_2}:\displaystyle\sum_{l=t_1}^{t2}S_l X_l\geq (1-\alpha) \displaystyle\sum_{l=t_1}^{t2} Z_l+(1-\alpha) M,$$ 
 $$B_{\tau} =\displaystyle\bigcup_{t_1=1}^{\tau} B_{t_1,\tau}, B =\displaystyle\bigcup_{\tau=t-\lambda \widetilde{M}}^{t-1} B_{\tau},$$
 %
 
 
 $$E:\displaystyle\sum_{l=t-\lambda \widetilde{M}}^{t-1}(S_l X_l-(1-\alpha)Z_l)+(1-\alpha) M< 0,$$
 $$F:\displaystyle\sum_{l=t-\lambda \widetilde{M}}^{t-1}((1-S_l)X_l-\alpha Z_l)\geq  \alpha M.$$\\
 Using Hoeffding's inequality,
 \begin{align*}
\mathbb{P}(A_{t_1,t_2})&\leq \exp\left(\frac{-2  (p(1-g(\alpha))-\alpha c)(t_2-t_1+1)+\alpha M)^2}{(t_2-t_1+1)(1+\alpha c_{\text{max}}-\alpha c_{\text{min}})^2}\right)\leq \delta_A \exp\left(\frac{-2  (p(1-g(\alpha))-\alpha c))^2 (t_2-t_1+1)}{(1+\alpha c_{\text{max}}-\alpha c_{\text{min}})^2}\right), 
 \end{align*}
 where $\delta_A=\exp\left(\frac{-4  (p(1-g(\alpha))-\alpha c)) \alpha M}{(1+\alpha c_{\text{max}}-\alpha c_{\text{min}})^2}\right)$. Note that since $p(1-g(\alpha))>\alpha c$, the quantity $\delta_A<1$ and it decreases exponentially with increase in $M$. 
 
 Using Union bound,
 \begin{align*}
\mathbb{P}(A_{\tau})&\leq \displaystyle\sum_{t_1=1}^{\tau} \mathbb{P}(A_{t_1,\tau})\leq \displaystyle\sum_{t_1=1}^{\tau-\frac{M}{c_{\text{max}}}} \mathbb{P}(A_{t_1,\tau})\leq \delta_A \displaystyle\sum_{l=0}^{\infty} \exp\left(\frac{-2  (p(1-g(\alpha))-\alpha c))^2 (\frac{M}{c_{\text{max}}}+1+l)}{(1+\alpha c_{\text{max}}-\alpha c_{\text{min}})^2}\right)\\
&\leq \delta_A \frac{\exp\left(-2 \frac{M+c_{\text{max}}}{c_{\text{max}}} \frac{(p(1-g(\alpha))-\alpha c)^2}{(1+\alpha c_{\text{max}}-\alpha c_{\text{min}})^2}\right)}{1-\exp\left(-2 \frac{(p(1-g(\alpha))-\alpha c)^2}{(1+\alpha c_{\text{max}}-\alpha c_{\text{min}})^2}\right)}.
 \end{align*}
 \begin{align}
  \mathbb{P}(A)\leq \lambda \widetilde{M} \mathbb{P}(A_{\tau})\leq \lambda \widetilde{M}  \frac{\delta_A\exp\left(-2 \frac{M+c_{\text{max}}}{c_{\text{max}}} \frac{(p(1-g(\alpha))-\alpha c)^2}{(1+\alpha c_{\text{max}}-\alpha c_{\text{min}})^2}\right)}{1-\exp\left(-2 \frac{(p(1-g(\alpha))-\alpha c)^2}{(1+\alpha c_{\text{max}}-\alpha c_{\text{min}})^2}\right)}.
 \label{eq_eventA1}
 \end{align}
 
 Along similar lines we prove the following bounds.
 \begin{align}
  \mathbb{P}(B)&\leq \lambda \widetilde{M} \mathbb{P}(B_{\tau})\leq \lambda \widetilde{M}\frac{\delta_B\exp\left(-2 (\frac{(1-\alpha) M}{1-(1-\alpha) c_{\text{min}}}+1) \frac{((1-\alpha) c-pg(\alpha))^2}{(1+(1-\alpha) (c_{\text{max}}- c_{\text{min}}))^2}\right)}{1-\exp\left(-2 \frac{((1-\alpha) c-pg(\alpha))^2}{(1+(1-\alpha) (c_{\text{max}}- c_{\text{min}}))^2}\right)},
 \label{eq_eventB1}
 \end{align}
where $\delta_B=\exp\left(\frac{-4  ((1-\alpha) c-pg(\alpha))) (1-\alpha) M}{(1+(1-\alpha) (c_{\text{max}}- c_{\text{min}}))^2}\right)$ which is less than one and decreases exponentially with $M$.


\begin{align}
  \mathbb{P}(E^c)&\leq\exp\left(  \frac{-2 (\lambda \widetilde{M}((1-\alpha) c-pg(\alpha))-(1-\alpha)M)^2}{\lambda \widetilde{M}(1+(1-\alpha) (c_{\text{max}}-c_{\text{min}}))^2}\right)\leq\exp\left(  \frac{-2 (\lambda (1-\alpha)M-(1-\alpha)M)^2}{\lambda \widetilde{M}(1+(1-\alpha) (c_{\text{max}}-c_{\text{min}}))^2}\right)\nonumber\\
  &\leq\exp\left(  \frac{-2 (\lambda-1)^2 M^2(1-\alpha)^2}{\lambda \widetilde{M}(1+(1-\alpha) (c_{\text{max}}-c_{\text{min}}))^2}\right).
  \label{eq_eventE1}
 \end{align}
  Similarly,
 \begin{align}
  \mathbb{P}(F^c)\leq\exp\left(  \frac{-2 (\lambda-1)^2 M^2\alpha^2}{\lambda \widetilde{M}(1+\alpha (c_{\text{max}}-c_{\text{min}}))^2}\right).
  \label{eq_eventF1}
 \end{align}
Note that the right hand side of inequalities \eqref{eq_eventE1}, \eqref{eq_eventF1} diminishes exponentially  increase in $M$.

 \begin{itemize}
 	\item[--]If $\alpha$ fraction of service is  at the edge at time $t-\lambda \widetilde{M}$ then $A^c$ implies $\text{totalCost}(R_0^{n},I_m) > \text{totalCost}(R_{\alpha}^{n},I_m)$, for any $n\in [1, t-1]$ and $m\in [n, t-1]$. In the same case, $B^c$ implies $\text{totalCost}(R_1^{n},I_m) > \text{totalCost}(R_{\alpha}^{n},I_m)$, for any $n\in [1, t-1]$ and $m\in [n, t-1]$. Thus $A^c\cap B^c$ ensures $\alpha$ fraction of service will be at the edge at  time-slot $t$.
 	\item[--] If no fraction of service is at the edge at  time-slot $t-\lambda \widetilde{M}$, 
 and  $\alpha$ fraction of service is fetched during $t-\lambda \widetilde{M}+1\leq\tilde{\tau}\leq t-2$ then $A^c$ implies $\text{totalCost}(R_0^{n},I_m) > \text{totalCost}(R_{\alpha}^{n},I_m)$, for any $n\in [t-\lambda \widetilde{M}+1, t-1]$ and $m\in [n, t-1]$. In the same case, $B^c$ implies $\text{totalCost}(R_1^{n},I_m) > \text{totalCost}(R_{\alpha}^{n},I_m)$ ,for any $n\in [t-\lambda \widetilde{M}+1, t-1]$ and $m\in [n, t-1]$. Thus $A^c\cap B^c$ ensures $\alpha$ fraction of service will be at the edge at  time-slot $t$.
 
 \item[--] If full service is at the edge at time-slot $t-\lambda \widetilde{M}$ and only $1-\alpha$ fraction of service is evicted during $t-\lambda \widetilde{M}+1\leq\tilde{\tau}\leq t-2$ $A^c$ implies $\text{totalCost}(R_0^{n},I_m) > \text{totalCost}(R_{\alpha}^{n},I_m)$ for any $n\in [t-\lambda \widetilde{M}+1, t-1]$ and $m\in [n, t-1]$. In the same case, $B^c$ implies $\text{totalCost}(R_1^{n},I_m) > \text{totalCost}(R_{\alpha}^{n},I_m)$, for any $n\in [t-\lambda \widetilde{M}+1, t-1]$ and $m\in [n, t-1]$. Thus $A^c\cap B^c$ ensures $\alpha$ fraction of service will be at the edge at  time-slot $t$.
 
 	\item[--] If no service is at the edge at  time-slot $t-\lambda \widetilde{M}$ and no amount of service is fetched $t-\lambda \widetilde{M}+1\leq\tilde{\tau}\leq t-2$, then the event $F$ implies $\text{totalCost}(R_0^{t-\lambda \widetilde{M}},I_{t-1}) > \text{totalCost}(R_{\alpha}^{t-\lambda \widetilde{M}},I_{t-1})$, that is $\alpha$ fraction of service will be at the edge at  time-slot $t$.
 	\item[--] If service is not hosted at  time-slot $t-\lambda \widetilde{M}$ and full service is fetched during $t-\lambda \widetilde{M}+1\leq\tilde{\tau}\leq t-2$ then the event $E$ implies $\text{totalCost}(R_1^{t-\lambda \widetilde{M}},I_{t-1}) > \text{totalCost}(R_{\alpha}^{t-\lambda \widetilde{M}},I_{t-1})$, that is $\alpha$ fraction of service will be at the edge at  time-slot $t$.
 	
 	\item[--] If full service is at the edge at time-slot $t-\lambda \widetilde{M}$ and not evicted till  $t-2$ then the event $E$ implies $\text{totalCost}(R_1^{t-\lambda \widetilde{M}},I_{t-1}) > \text{totalCost}(R_{\alpha}^{t-\lambda \widetilde{M}},I_{t-1})$, that is $\alpha$ fraction of service will be at the edge at  time-slot $t$.
 \end{itemize}
 \begin{align*}
&\mathbb{P}(A^c\cap B^c  \cap E\cap F)\geq 1-\mathbb{P}(A)-\mathbb{P}(B)-\mathbb{P}(E^c)-\mathbb{P}(F^c).
 \end{align*}
From the inequalities \eqref{eq_eventA1}, \eqref{eq_eventB1}, \eqref{eq_eventE1} and \eqref{eq_eventF1}, we see that $\mathbb{P}(A^c\cap B^c  \cap E\cap F)$ approaches unity as the value of $M$ increases.\\

Conditioned on $G = A^c\cap B^c  \cap E\cap F$, the $\alpha$ fraction of service is hosted  at the edge during time-slot $t$. 
The expected cost incurred by the $\alpha-$RR policy is
	$
	\mathbb{E}[C_t^{\alpha-\text{RR}}] 
	=  \mathbb{E}[C_t^{\alpha-\text{RR}}|G] \mathbb{P}(G) +\mathbb{E}[C_t^{\alpha-\text{RR}}|G^c] \times \mathbb{P}(G^c).
	$
	Note that, 
	$
	\mathbb{E}[C_t^{\alpha-\text{RR}}|G] = \mathbb{E}[ C_t^{\alpha-\text{OPT-ON}}] , \ \mathbb{E}[C_t^{\alpha-\text{RR}}|G^c] \leq \max\{M+p, M+c\}.
	$
Therefore,
	\begin{align*}
	\mathbb{E}[C_t^{\alpha-\text{RR}}] -\mathbb{E}[ C_t^{\alpha-\text{OPT-ON}}]
	= & \mathbb{E}[C_t^{\alpha-\text{RR}}|G] (\mathbb{P}(G)-1)\max\{M+p, M+c\} \mathbb{P}(G^c) 
	\leq   \max\{M+p, M+c\} \mathbb{P}(G^c).
	\end{align*}
	\begin{align}
	\frac{\mathbb{E}[C_t^{\alpha-\text{RR}}]}{\mathbb{E}[ C_t^{\alpha-\text{OPT-ON}}]}
	\leq & 1+ \max\{M+p, M+c\} \frac{\mathbb{P}(G^c)}{\alpha c+g(\alpha) p} \nonumber \\
	\end{align}

Case 2: When $p>\max\{c,\frac{(1-\alpha) c}{g(\alpha)}\}$\\
Let $\widetilde{M}=\max\Bigg\{\frac{M}{p- c},  \left\lceil\frac{ M(1-\alpha)}{pg(\alpha)-(1-\alpha) c}\right\rceil\Bigg\}$.
We define the events $$A_{t_1,t_2}:\displaystyle\sum_{l=t_1}^{t2}X_l +M< \displaystyle\sum_{l=t_1}^{t2}Z_l,$$ 
$$A_{\tau} =\displaystyle\bigcup_{t_1=1}^{\tau} A_{t_1,\tau}, A =\displaystyle\bigcup_{\tau=t-\lambda \widetilde{M}}^{t-1} A_{\tau},$$
$$B_{t_1,t_2}:\displaystyle\sum_{l=t_1}^{t2}S_lX_l +(1-\alpha)M<  (1-\alpha)\displaystyle\sum_{l=t_1}^{t2}Z_l,$$
$$B_{\tau} =\displaystyle\bigcup_{t_1=1}^{\tau} B_{t_1,\tau}, B =\displaystyle\bigcup_{\tau=t-\lambda \widetilde{M}}^{t-1} B_{\tau},$$
$$D:\displaystyle\sum_{l=t-\lambda \widetilde{M}}^{t-1}X_l \geq  \displaystyle\sum_{l=t-\lambda \widetilde{M}}^{t-1}Z_l+M,$$ 
$$E:\displaystyle\sum_{l=t-\lambda \widetilde{M}}^{t-1}S_lX_l \geq  (1-\alpha)\displaystyle\sum_{l=t-\lambda \widetilde{M}}^{t-1}Z_l +(1-\alpha)M.$$
  
 Using Hoeffding's inequality,
 \begin{align*}
\mathbb{P}(A_{t_1,t_2})&\leq \exp\left(\frac{-2  ((p- c)(t_2-t_1+1)+ M)^2}{(t_2-t_1)(1+ c_{\text{max}}- c_{\text{min}})^2}\right)\leq \delta_A \exp\left(\frac{-2  (p- c)^2 (t_2-t_1+1)}{(1+ c_{\text{max}}- c_{\text{min}})^2}\right), 
 \end{align*}
where $\delta_A=\exp\left(\frac{-4  (p- c) \alpha M}{(1+ c_{\text{max}}-c_{\text{min}})^2}\right)$. Note that since $p> c$, the quantity $\delta_A<1$ and it decreases exponentially with increase in $M$.   
Using Union bound,
 \begin{align*}
\mathbb{P}(A_{\tau})&\leq \displaystyle\sum_{t_1=1}^{\tau} \mathbb{P}(A_{t_1,\tau})\leq \displaystyle\sum_{t_1=1}^{\tau-\frac{M}{c_{\text{max}}}} \mathbb{P}(A_{t_1,\tau})\leq \delta_A \displaystyle\sum_{l=0}^{\infty} \exp\left(\frac{-2  (p- c))^2 (\frac{M}{c_{\text{max}}}+1+l)}{(1+ c_{\text{max}}- c_{\text{min}})^2}\right)\\
&\leq \delta_A \frac{\exp\left(-2 \frac{M+c_{\text{max}}}{c_{\text{max}}} \frac{(p- c)^2}{(1+ c_{\text{max}}-\alpha c_{\text{min}})^2}\right)}{1-\exp\left(-2 \frac{(p- c)^2}{(1+ c_{\text{max}}- c_{\text{min}})^2}\right)}.
 \end{align*}
 \begin{align}
  \mathbb{P}(A)\leq \lambda \widetilde{M} \mathbb{P}(A_{\tau})\leq \lambda \widetilde{M}  \delta_A \frac{\exp\left(-2 \frac{M+c_{\text{max}}}{c_{\text{max}}} \frac{(p- c)^2}{(1+ c_{\text{max}}-\alpha c_{\text{min}})^2}\right)}{1-\exp\left(-2 \frac{(p- c)^2}{(1+ c_{\text{max}}- c_{\text{min}})^2}\right)}.
 \label{eq_eventA2}
 \end{align}

Along similar lines we prove the following bounds.
 \begin{align}
  \mathbb{P}(B)&\leq \lambda \widetilde{M} \mathbb{P}(B_{\tau})\leq \lambda \widetilde{M}\frac{\delta_B\exp\left(-2 \frac{M+c_{\text{max}}}{c_{\text{max}}} \frac{(pg(\alpha)-(1-\alpha) c)^2}{(1+(1-\alpha) (c_{\text{max}}- c_{\text{min}}))^2}\right)}{1-\exp\left(-2 \frac{(pg(\alpha)-(1-\alpha) c)^2}{(1+(1-\alpha) (c_{\text{max}}- c_{\text{min}}))^2}\right)},
 \label{eq_eventB2}
 \end{align}
where $\delta_B=\exp\left(\frac{-4  (pg(\alpha)-(1-\alpha) c)) (1-\alpha) M}{(1+(1-\alpha) (c_{\text{max}}- c_{\text{min}}))^2}\right)$ which is less than one and decreases exponentially with $M$.

\begin{align}
  \mathbb{P}(E^c)&\leq\exp\left(  \frac{-2 (\lambda \widetilde{M}(pg(\alpha)-(1-\alpha) c)-(1-\alpha)M)^2}{\lambda \widetilde{M}(1+(1-\alpha) (c_{\text{max}}-c_{\text{min}}))^2}\right)\nonumber
  \leq\exp\left(  \frac{-2 (\lambda (1-\alpha)M-(1-\alpha)M)^2}{\lambda \widetilde{M}(1+(1-\alpha) (c_{\text{max}}-c_{\text{min}}))^2}\right)\nonumber\\
  &\leq\exp\left(  \frac{-2 (\lambda-1)^2 M^2(1-\alpha)^2}{\lambda \widetilde{M}(1+(1-\alpha) (c_{\text{max}}-c_{\text{min}}))^2}\right).
  \label{eq_eventE2}
 \end{align}
  Similarly,
 \begin{align}
  \mathbb{P}(D^c)\leq\exp\left(  \frac{-2 (\lambda-1)^2 M^2\alpha^2}{\lambda \widetilde{M}(1+\alpha (c_{\text{max}}-c_{\text{min}}))^2}\right).
  \label{eq_eventD2}
 \end{align}
Note that the right hand side of inequalities \eqref{eq_eventE2}, \eqref{eq_eventD2} diminishes exponentially  increase in $M$.
  
 \begin{itemize}
 \item[--] 
 If $\alpha$ full service is at the edge at time $t-\lambda \widetilde{M}$ then $A^c$ implies $\text{totalCost}(R_0^{n},I_m) > \text{totalCost}(R_1^{n},I_m)$, for any $n\in [1, t-1]$ and $m\in [n, t-1]$. In the same case, $B^c$ implies $\text{totalCost}(R_{\alpha}^{n},I_m) > \text{totalCost}(R_1^{n},I_m)$, for any $n\in [1, t-1]$ and $m\in [n, t-1]$. Thus,$A^c\cap B^c$ ensures that full service will be at the edge at  time-slot $t$.

 	\item[--] If full service or $\alpha$ fraction of service is at the edge at  time-slot $t-\lambda \widetilde{M}$, 
 	and  is evicted during $t-\lambda \widetilde{M}+1\leq\tilde{\tau}\leq t-2$ then the event  $D$ implies $\text{totalCost}(R_0^{t-\lambda \widetilde{M}+1},I_{t-2}) > \text{totalCost}(R_1^{t-\lambda \widetilde{M}+1},I_{t-2})$, that is full service will be at the edge at  time-slot $t$.
 	
  \item[--] If  $\alpha$ fraction of service is at the edge at  time-slot $t-\lambda \widetilde{M}$, 
 	and  no service is fetched during $t-\lambda \widetilde{M}+1\leq\tilde{\tau}\leq t-2$ then $E$ ensures that
 	$\text{totalCost}(R_{\alpha}^{t-\lambda \widetilde{M}+1},I_{t-2}) > \text{totalCost}(R_1^{t-\lambda \widetilde{M}+1},I_{t-2})$, that is	full service will be at the edge at  time-slot $t$.
 	
 	\item[--] If  service is not at the edge at  time-slot $t-\lambda \widetilde{M}$, and $\alpha$ fraction of  service is fetched during $t-\lambda \widetilde{M}+1\leq\tilde{\tau}\leq t-2$,
 	then $D\cap E$ ensures that $\text{totalCost}(R_{\alpha}^{t-\lambda \widetilde{M}+1},I_{t-2}) > \text{totalCost}(R_1^{t-\lambda \widetilde{M}+1},I_{t-2})$, that is full service will be at the edge at  time-slot $t$.
 	
 	\item[--] If  service is not at the edge at  time-slot $t-\lambda \widetilde{M}$, and no fraction of  service is fetched during $t-\lambda \widetilde{M}+1\leq\tilde{\tau}\leq t-2$, 	then $D\cap E$ ensures that $\text{totalCost}(R_{\alpha}^{t-\lambda \widetilde{M}+1},I_{t-2}) > \text{totalCost}(R_1^{t-\lambda \widetilde{M}+1},I_{t-2})$, that is full service will be at the edge at  time-slot $t$.
 \end{itemize}

Using union bound and Hoeffding inequality,
\begin{align*}
\mathbb{P}(A^c\cap B^c \cap D\cap E)\geq 1-\mathbb{P}(A)-\mathbb{P}(B)-\mathbb{P}(D^c)-\mathbb{P}(E^c)
\end{align*}
From the inequalities \eqref{eq_eventA2}, \eqref{eq_eventB2}, \eqref{eq_eventD2} and \eqref{eq_eventE2}, we see that $\mathbb{P}(A^c\cap B^c \cap D\cap E)$ approaches unity as the value of $M$ increases.

Conditioned on $G = A^c\cap B^c \cap D\cap E$, the full service is hosted  at the edge during time-slot $t$. 
The expected cost incurred by the $\alpha-$RR policy is
	$
	\mathbb{E}[C_t^{\alpha-\text{RR}}] 
	=  \mathbb{E}[C_t^{\alpha-\text{RR}}|G] \mathbb{P}(G) +\mathbb{E}[C_t^{\alpha-\text{RR}}|G^c] \times \mathbb{P}(G^c).
	$
	Note that, 
	$
	\mathbb{E}[C_t^{\alpha-\text{RR}}|G] = \mathbb{E}[ C_t^{\alpha-\text{OPT-ON}}] , \ \mathbb{E}[C_t^{\alpha-\text{RR}}|G^c] \leq \max\{\alpha M+\alpha c+g(\alpha)p, M+p\}.
	$
Therefore,
	\begin{align*}
	\mathbb{E}[C_t^{\alpha-\text{RR}}]-\mathbb{E}[ C_t^{\alpha-\text{OPT-ON}}]
	= & \mathbb{E}[C_t^{\alpha-\text{RR}}|G] (\mathbb{P}(G)-1)	+ \max\{\alpha M+\alpha c+g(\alpha)p, M+p\} \mathbb{P}(G^c)  \\
	\leq &  \max\{\alpha M+\alpha c+g(\alpha)p, M+p\} \mathbb{P}(G^c).
	\end{align*}
	\begin{align}
	\frac{\mathbb{E}[C_t^{\alpha-\text{RR}}]}{\mathbb{E}[ C_t^{\alpha-\text{OPT-ON}}]}
	\leq & 1+ \max\{\alpha M+\alpha c+g(\alpha)p, M+p\} \frac{\mathbb{P}(G^c)}{c} \nonumber \\
	\end{align}

Case 3: When $p<\min\{c,\frac{\alpha c}{1-g(\alpha)}\}$\\
Let $\widetilde{M}=\max\Bigg\{\frac{M}{c-p},  \left\lceil\frac{ M\alpha}{\alpha c-p(1-g(\alpha))}\right\rceil\Bigg\}$.
We define the events
$$A_{t_1,t_2}: \displaystyle\sum_{l=t_1}^{t_2} X_l \geq  \displaystyle\sum_{l=t_1}^{t_2} Z_l+M,$$ 
$$A^{\tau}=\displaystyle\bigcup_{t_1=1}^{\tau} A_{t_1,\tau},  A_{t}=\displaystyle\bigcup_{\tau=t-\lambda \widetilde{M}}^{t} A^{\tau},$$
$$B_{t_1,t_2}: \displaystyle\sum_{l=t_1}^{t_2} (1-S_l)X_l \geq \alpha \displaystyle\sum_{l=t_1}^{t_2} Z_l+\alpha M,$$ 
$$B^{\tau}=\displaystyle\bigcup_{t_1=1}^{\tau} B_{t_1,\tau}, B_{t}=\displaystyle\bigcup_{\tau=t-\lambda \widetilde{M}}^{t} B^{\tau},$$
$$D: \displaystyle\sum_{l=t-\lambda \widetilde{M}}^{t-1} X_l+M <  \displaystyle\sum_{l=t-\lambda \widetilde{M}}^{t-1} Z_l,$$
$$E: \displaystyle\sum_{l=t-\lambda \widetilde{M}}^{t-1} (1-S_l)X_l+\alpha M < \alpha \displaystyle\sum_{l=t-\lambda \widetilde{M}}^{t-1} Z_l.$$

\begin{align*}
\mathbb{P}(A_{t_1,t_2})&\leq \exp\left(\frac{-2  ((c-p)(t_2-t_1+1)+ M)^2}{(t_2-t_1)(1+ c_{\text{max}}- c_{\text{min}})^2}\right)\leq \delta_A \exp\left(\frac{-2  (c-p)^2 (t_2-t_1+1)}{(1+ c_{\text{max}}- c_{\text{min}})^2}\right), 
 \end{align*}
where $\delta_A=\exp\left(\frac{-4  (c-p) \alpha M}{(1+ c_{\text{max}}-c_{\text{min}})^2}\right)$. Note that since $c>p$, the quantity $\delta_A<1$ and it decreases exponentially with increase in $M$.   
Using Union bound,
 \begin{align*}
\mathbb{P}(A^{\tau})&\leq \displaystyle\sum_{t_1=1}^{\tau} \mathbb{P}(A_{t_1,\tau})\leq \displaystyle\sum_{t_1=1}^{\tau-\frac{M}{1-c_{\text{min}}}} \mathbb{P}(A_{t_1,\tau})\leq \delta_A \displaystyle\sum_{l=0}^{\infty} \exp\left(\frac{-2  (c-p)^2 (\frac{M}{1-c_{\text{min}}}+1+l)}{(1+ c_{\text{max}}- c_{\text{min}})^2}\right)\\
&\leq \delta_A \frac{\exp\left(-2 (\frac{M}{1-c_{\text{min}}}+1) \frac{(c-p)^2}{(1+ c_{\text{max}}-\alpha c_{\text{min}})^2}\right)}{1-\exp\left(-2 \frac{(c-p)^2}{(1+ c_{\text{max}}- c_{\text{min}})^2}\right)}.
 \end{align*}
 \begin{align}
  \mathbb{P}(A_t)\leq \lambda \widetilde{M} \mathbb{P}(A_{\tau})\leq \lambda \widetilde{M}   \frac{\delta_A\exp\left(-2 (\frac{M}{1-c_{\text{min}}}+1) \frac{(c-p)^2}{(1+ c_{\text{max}}-\alpha c_{\text{min}})^2}\right)}{1-\exp\left(-2 \frac{(c-p)^2}{(1+ c_{\text{max}}- c_{\text{min}})^2}\right)}.
 \label{eq_eventA3}
 \end{align}
 
 Along similar lines we prove the following bounds.
 \begin{align}
  \mathbb{P}(B_t)&\leq \lambda \widetilde{M} \mathbb{P}(B_{\tau})\leq \lambda \widetilde{M}\frac{\delta_B\exp\left(-2 (\frac{\alpha M}{1-g(\alpha)-\alpha c_{\text{min}}}+1) \frac{(c\alpha -p(1-g(\alpha)))^2}{(1+\alpha (c_{\text{max}}- c_{\text{min}}))^2}\right)}
  {1-\exp\left(-2 \frac{(c\alpha -p(1-g(\alpha)))^2}{(1+\alpha (c_{\text{max}}- c_{\text{min}}))^2}\right)},
 \label{eq_eventB3}
 \end{align}
where $\delta_B=\exp\left(\frac{-4  (\alpha c-p(1-g(\alpha)) \alpha M}{(1+\alpha (c_{\text{max}}- c_{\text{min}}))^2}\right)$ which is less than one and decreases exponentially with $M$.
 
\begin{align}
  \mathbb{P}(E^c)&\leq\exp\left(  \frac{-2 (\lambda \widetilde{M}(\alpha c-p(1-g(\alpha))-\alpha M)^2}{\lambda \widetilde{M}(1+\alpha (c_{\text{max}}-c_{\text{min}}))^2}\right)\leq\exp\left(  \frac{-2 (\lambda \alpha M-\alpha M)^2}{\lambda \widetilde{M}(1+\alpha (c_{\text{max}}-c_{\text{min}}))^2}\right)\leq\exp\left(  \frac{-2 (\lambda-1)^2 M^2\alpha^2}{\lambda \widetilde{M}(1+\alpha (c_{\text{max}}-c_{\text{min}}))^2}\right).
  \label{eq_eventE3}
 \end{align}
  Similarly,
 \begin{align}
  \mathbb{P}(D^c)\leq\exp\left(  \frac{-2 (\lambda-1)^2 M^2}{\lambda \widetilde{M}(1+ (c_{\text{max}}-c_{\text{min}}))^2}\right).
  \label{eq_eventD3}
 \end{align}
Note that the right hand side of inequalities \eqref{eq_eventE3}, \eqref{eq_eventD3} diminishes exponentially  increase in $M$.
 
 \begin{itemize}

 	\item[--] If full service or $\alpha$ fraction of service is at the edge at  time-slot $t-\lambda \widetilde{M}$, and  is not evicted during $t-\lambda \widetilde{M}+1\leq\tilde{\tau}\leq t-2$ then $A^c_t \cap A^c_{t-1}\cap B^c_t \cap B^c_{t-1} \cap D \cap E$ implies $\text{totalCost}(R_0^{n},I_m) < \text{totalCost}(R_1^{n},I_m)$ and $\text{totalCost}(R_0^{n},I_m) < \text{totalCost}(R_{\alpha}^{n},I_m)$ , for any $n\in [1, t-1]$ and $m\in [n, t-1]$. Thus no service will be at the edge at  time-slot $t$.
 	\item[--] If full service or $\alpha$ fraction of service is at the edge at  time-slot $t-\lambda \widetilde{M}$, and  $1-\alpha$ fraction of service  is  evicted during $t-\lambda \widetilde{M}+1\leq\tilde{\tau}\leq t-2$ then the event $E$ implies that $\text{totalCost}(R_0^{t-\lambda \widetilde{M}+1},I_{t-1}) < \text{totalCost}(R_{\alpha}^{t-\lambda \widetilde{M}+1},I_{t-1})$,that is 	no service will be at the edge at  time-slot $t$.
 	\item[--]  If no service  is at the edge at  time-slot $t-\lambda \widetilde{M}$, and  full or $\alpha$ fraction of service  is  fetched during $t-\lambda \widetilde{M}+1\leq\tilde{\tau}\leq t-2$ then $D\cap E$ ensures that $\text{totalCost}(R_0^{t-\lambda\widetilde{M}+1},I_{t-1}) <\text{totalCost}(R_{\alpha}^{t-\lambda \widetilde{M}+1},I_{t-1})$ and\\ $\text{totalCost}(R_0^{t-\lambda\widetilde{M}+1},I_{t-1})< \text{totalCost}(R_1^{t-\lambda \widetilde{M}+1},I_{t-1})$. Thus	no service will be at the edge at  time-slot $t$.
 \end{itemize}
Using union bound and Hoeffding inequality,
\begin{align*}
&\mathbb{P}(A^c_t \cap A^c_{t-1}\cap B^c_t \cap B^c_{t-1} \cap D \cap E)\geq 1-\mathbb{P}(A_t)-\mathbb{P}(A_{t-1})-\mathbb{P}(B_t)-\mathbb{P}(B_{t-1})-\mathbb{P}(B)-\mathbb{P}(D).
\end{align*}

Conditioned on $G = A^c_t \cap A^c_{t-1}\cap B^c_t \cap B^c_{t-1} \cap D \cap E$, the full service is hosted  at the edge during time-slot $t$. 
The expected cost incurred by the $\alpha-$RR policy is
	$
	\mathbb{E}[C_t^{\alpha-\text{RR}}] 
	=  \mathbb{E}[C_t^{\alpha-\text{RR}}|G] \mathbb{P}(G) +\mathbb{E}[C_t^{\alpha-\text{RR}}|G^c] \times \mathbb{P}(G^c).
	$
	Note that, 
	$
	\mathbb{E}[C_t^{\alpha-\text{RR}}|G] = \mathbb{E}[ C_t^{\alpha-\text{OPT-ON}}] , \ \mathbb{E}[C_t^{\alpha-\text{RR}}|G^c] \leq \max\{\alpha M+\alpha c+g(\alpha)p, M+c\}.
	$
Therefore,
	\begin{align*}
	\mathbb{E}[C_t^{\alpha-\text{RR}}] -\mathbb{E}[ C_t^{\alpha-\text{OPT-ON}}]
	= & \mathbb{E}[C_t^{\alpha-\text{RR}}|G] (\mathbb{P}(G)-1)+ \max\{\alpha M+\alpha c+g(\alpha)p, M+c\} \mathbb{P}(G^c)  \\
	\leq &  \max\{\alpha M+\alpha c+g(\alpha)p, M+c\} \mathbb{P}(G^c).
	\end{align*}
	\begin{align}
	\frac{\mathbb{E}[C_t^{\alpha-\text{RR}}]}{\mathbb{E}[ C_t^{\alpha-\text{OPT-ON}}]}
	\leq & 1+ \max\{\alpha M+\alpha c+g(\alpha)p, M+c\} \frac{\mathbb{P}(G^c)}{p} \nonumber \\
	\end{align}
\end{proof}


\end{document}